\documentclass[sigconf,10pt]{acmart}

\usepackage[english]{babel}
\usepackage{amsmath,amssymb,amsfonts}
\usepackage{mathtools}
\usepackage{booktabs}
\usepackage{balance}
\usepackage{bm}
\usepackage{empheq}
\usepackage{subcaption}
\usepackage{url}
\usepackage{hyperref}
\usepackage{color,soul}

\usepackage{enumitem}
\setlist[enumerate]{topsep=0pt,itemsep=-1ex,partopsep=1ex,parsep=1ex,leftmargin=2ex}
\setlist[itemize]{topsep=0pt,itemsep=-1ex,partopsep=1ex,parsep=1ex,leftmargin=2ex}

\renewcommand\footnotetextcopyrightpermission[1]{} 
\acmYear{2027}
\copyrightyear{2027}
\setcopyright{rightsretained}
\acmConference[ACM MobiCom '27]{MobiCom 2027}{October 2027}{}
\acmBooktitle{ACM MobiCom '27, October 2027}

\title{Who Can Safely Wait? Risk-Aware 5G Scheduling for\\ Low-Altitude UAV Operations}

\begin{document}

 \author{André Melo}
 \affiliation{%
   \institution{Aeronautics Institute of Technology}
   \city{São José dos Campos}
   \country{Brazil}}
 \email{andre.melo@ita.br}

 \author{Yueqi Chen}
 \affiliation{%
   \institution{Northwestern University}
   \city{Evanston}
   \state{IL}
   \country{USA}}
 \email{yueqichen@northwestern.edu}

 \author{Lourenço Pereira Jr.}
 \affiliation{%
   \institution{Aeronautics Institute of Technology}
   \city{São José dos Campos}
   \country{Brazil}}
 \email{ljr@ita.br}

 \author{Igor Kadota}
 \affiliation{%
   \institution{Northwestern University}
   \city{Evanston}
   \state{IL}
   \country{USA}}
 \email{kadota@northwestern.edu}

\renewcommand{\shorttitle}{Risk-Aware 5G Scheduling for Low-Altitude UAV Operations}
\settopmatter{authorsperrow=4}

\begin{abstract}
Cellular-connected UAV operations already fly in many countries. However, cellular networks were designed for terrestrial users and remain blind to the operational consequences of delaying information from low-altitude aircraft. We present Risk-Aware (RA), a framework that translates information freshness and operational context into aircraft-specific operational risk, enabling uplink scheduling based on operational consequence rather than network metrics alone. RA balances the consequence of further delay against the service cost of heterogeneous single- and multi-slot updates. It combines a regulation-inspired risk proxy, risk-directed prioritization based on marginal risk growth, and a Max-Weight scheduler with Lyapunov-based performance guarantees. Although RA is derived from an analytically tractable model, we evaluate it under moving UAV trajectories, dynamic operational context, geographically varying population exposure, and realistic UAV video traffic. Across 48 operating points and fleets of up to 120 UAVs in Sao Paulo delivery operations and the Interlagos Formula 1 Grand Prix setting, RA reduces 99th-percentile per-UAV risk by up to 96\% over proportional fair and 86\% over round robin, while supporting fleets up to 1.5$\times$ and 3$\times$ larger, respectively, within the same risk envelope.

\end{abstract}


\maketitle

\section{Introduction}
\label{sec:intro}

Low-altitude airspace is poised for a fundamental transformation. Airspace authorities worldwide~\cite{USDOT_AAM_National_Strategy_2025,sesarConops,dca351-6,uk_conops, uk_roadmap} are developing new traffic-management frameworks for UAVs, powered-lift vehicles, remotely piloted aircraft, and increasingly autonomous vehicles operating alongside traditional aviation. 
Existing air traffic infrastructure, built largely around voice communications and surveillance (radar), was not designed for the anticipated density, heterogeneity, or automation of low-altitude aviation \cite{faaConops,USDOT_AAM_National_Strategy_2025}. This transformation will require modern communication~\cite{USDOT_AAM_National_Strategy_2025} for timely information exchange among aircraft, operators, and infrastructure. 


These traffic-management frameworks~\cite{USDOT_AAM_National_Strategy_2025,sesarConops,dca351-6,uk_conops, uk_roadmap} will rely on timely exchange of aircraft status, telemetry, intent, airspace constraints, and airspace traffic information. Remotely supervised and autonomous operations may also generate video and other high-volume sensing streams~\cite{ahmad2023aerotraj,alipour2024datmo,kafetzis2025} for remote situational awareness. The Specific Operations Risk Assessment (SORA)~\cite{sora2024,jarus_sora25_annexf}, a risk-assessment methodology for UAV operations, classifies operations into a Specific Assurance and Integrity Level (SAIL) class I-VI that sets the rigor of the required mitigations, and requires the Command, Control, and Communication (C3) link to have performance and spectrum characteristics appropriate for the operation. For example, flights over sparsely populated ground receive lower SAIL classes, whereas flights over crowds receive higher classes.  


Commercial cellular networks are a promising component of this communication architecture~\cite{USDOT_AAM_National_Strategy_2025, USDOT_AAM_Comprehensive_Plan_2025} because they provide widespread infrastructure, high-capacity uplinks, dynamic resource allocation, and integration with edge computing. 
Cellular-connected UAV operations are already authorized or active in several countries~\cite{meituan,keeta_dubai,anac_speedbird}, shifting the central question from whether cellular networks can connect aircraft to whether they can allocate limited radio resources consistently with operational safety~\cite{USDOT_AAM_National_Strategy_2025, USDOT_AAM_Comprehensive_Plan_2025}.

Cellular connectivity alone, however, does not make the network suitable for airspace management. Under contention, the Radio Access Network node (also called Next Generation Node B, or gNB) must decide which aircraft and information streams to serve and which to defer. Conventional schedulers make this decision using network-centric metrics such as Channel Quality Indicator (CQI), Buffer Status Reports (BSRs), throughput, fairness, and traffic class. These metrics characterize the opportunity and cost of transmitting data, but not the operational consequence of delaying it. As a result, a scheduler may improve every network metric it observes while withholding resources from the aircraft for which stale information is most consequential.


Figure~\ref{fig:blindspot} exposes this mismatch in a minimal six-UAV
  corridor where radio quality and operational criticality vary in opposite directions. Uplink SNR decreases toward a crowded destination, so the segment near the gNB corresponds to a SAIL II operation while over the crowd corresponds to SAIL V. Every UAV carries the same 30 fps video and 10 Hz status traffic (described in \S~\ref{sec:eval-methodology}). Best-CQI maximizes network throughput by concentrating
  service near the gNB and repeatedly deferring the UAV flying
  over the crowd: it schedules that aircraft $4\times$ less often than
  Round-Robin (Fig.~\ref{fig:blindspot}b), and its packets wait $8\times$
  longer (Fig.~\ref{fig:blindspot}c). Clearly, throughput maximization comes at the cost of increased operational risk. 

\begin{figure*}[t]
        \includegraphics[width=\textwidth]{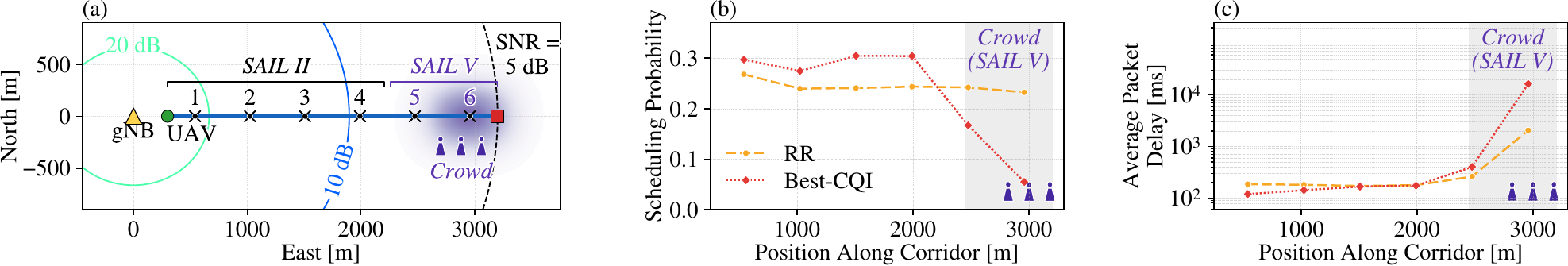}
        \vspace{-0.3cm}
    \caption{The scheduling blind spot using an OpenAirInterface (OAI) gNB emulator with a complete 3GPP MAC pipeline.
  (a)~Six UEs emulate static UAVs along a corridor whose uplink channel degrades from the gNB toward a crowded event; segments are labeled with the SAIL risk class (II over open ground, V over the crowd).(b)~Scheduling probability,
  the fraction of uplink slots in which each aircraft is scheduled (up to $4$ 
  UEs can share a slot). 
  (c)~Average packet delay of status updates and
  video frames, in log scale. 
  }
\label{fig:blindspot}
\vspace{-0.5cm}
\end{figure*}

The missing operational context matters under contention because equal communication delays can have different physical consequences. Stale position information from an UAV over an unpopulated corridor may consume little operational risk margin, while the same staleness near another aircraft or above a dense crowd may produce substantially greater risk. The scheduler must also account for heterogeneous transmission costs: an aircraft-status update may require a single transmission opportunity, whereas a video frame may require many. Resource allocation must therefore balance the operational value of transmitting an update against the communication resources required to deliver it. Accounting for this balance can reduce communication-induced operational risk for a given fleet (\S~\ref{sec:eval-risk}) or support a larger fleet within a fixed operational risk envelope (\S~\ref{sec:eval-capacity}).

Existing network mechanisms do not account for this context. Slicing and reservation can isolate aviation traffic or provision capacity for a traffic class \cite{foukas2017orion,zhang2026smec}, but they do not determine which aircraft or information stream is most costly to defer. Channel-aware and proportional-fair schedulers prioritize radio efficiency or balance throughput across users, without accounting for differences in operational risk. Age of Information (AoI) scheduling prioritizes stale information \cite{huang2015,kaul2012real,sun2017,zhao2024large,zhao2025}, but staleness alone does not identify where freshness is most valuable. 

This exposes a fundamental blind spot: cellular schedulers observe network state, while the operational consequences of their decisions depend on context outside the network’s decision model. Bridging this gap requires an interface that translates network state and operational context into a quantity the scheduler can evaluate. Such an interface would allow the network to compare not only the efficiency of serving each stream, but also the operational cost of making it wait.

Emerging 5G capabilities make such an interface increasingly feasible. The gNB already maintains per-UE radio and traffic state through CQI, BSRs, HARQ feedback, and delivery history. Packet timestamps and delivery records expose the staleness of aircraft-status, telemetry, and sensing streams. Aircraft position can be obtained from onboard reports or the 5G Location Management Function (LMF), while the Network Data Analytics Function (NWDAF) and external geographic data can provide estimates of population exposure and other operating conditions \cite{ts23273,ts23288}. Operational context, such as aircraft location and population exposure, evolves slowly over seconds or minutes. The gNB combines the operational context with per-slot network metrics to make scheduling decisions at millisecond timescales. Although positioning and analytics remain limited in accuracy and update rate \cite{3gpp_tr38857,lin2024camp}, they provide the information needed to make uplink scheduling operationally aware.

Using this information effectively requires a scheduler to satisfy three design requirements. First, it must \textbf{prioritize marginal operational risk}, directing service toward the aircraft for which an additional unit of delay is most consequential. Second, it must balance this urgency against \textbf{heterogeneous service costs}, since a status update may require one transmission opportunity while a video frame may require many. Third, it must be \textbf{deployable within the existing 5G architecture}. In particular, it must convert slowly evolving operational context into priorities that can be evaluated within the gNB’s per-slot scheduling loop.

This paper asks a simple question: \textbf{When a cellular network cannot immediately serve every aircraft, who can safely wait?} We answer this question with Risk-Aware (RA) scheduling, a framework that makes operational risk the organizing principle for 5G uplink resource allocation. RA prioritizes the aircraft and information streams for which further delay is most consequential, while accounting for delivery probability and service cost. The resulting policy operates entirely at the gNB, requiring no modification to the UE or air interface. RA also relates the resulting operational-risk envelope to the number of aircraft supportable by the available wireless resources, connecting radio scheduling to operational capacity. The problem formulation and evaluation scenarios incorporate input from air-traffic operations, airspace management, and safety-assessment practices. We make the following contributions:


    \noindent\textbf{(i) A network-computable operational-risk abstraction.} We translate gNB-observable network metrics and operational context into aircraft-specific operational risk. Drawing on the principles underlying the SORA~\cite{sora2024,jarus_sora25_annexf}, our model captures three consequences of stale information: position uncertainty consuming aircraft-separation margin, population exposure within the position uncertainty region, and loss of coherence across recently delivered video frames. 
    
    \noindent\textbf{(ii) A Risk-Aware Max-Weight scheduler for heterogeneous updates.} We formulate the joint scheduling of UAV-status updates and video frames as a long-term operational-risk minimization problem. We derive a universal lower bound over feasible status-delivery and frame-completion rates, then design a Lyapunov function that accounts for video staleness, transmission progress, and delayed frame completion. The resulting RA Max-Weight policy balances marginal operational-risk reduction, packet-delivery probability, and service cost using a low-complexity per-slot rule. We establish a worst-case optimality ratio of four.
    
    \noindent\textbf{(iii) Validation and evaluation grounded in realistic operating environments}. We develop a scalable slot-level MATLAB simulator for evaluating 5G uplink scheduling in low-altitude UAV operations. The simulator is cross-checked against UAV Toolbox~\cite{toolboxuav} mobility and driven by 5G Toolbox~\cite{toolbox5g} propagation models, while capturing realistic UAV traffic generation from~\cite{aviator2021}. We construct controlled scenarios grounded in real S\~ao Paulo geographies, licensed 5G deployments, documented UAV delivery operations, population maps, and a mass-gathering setting at the Interlagos Formula 1 circuit. Because this abstraction enables broader evaluation than is practical with a full 5G stack, we validate it against the OAI implementation used in Fig.~\ref{fig:blindspot} under the proportional-fair experiments described in \S~\ref{sec:eval-methodology}. We evaluate RA and baseline policies across 48 operating points with fleets of up to 120 UAVs. RA reduces 99th-percentile per-UAV operational risk by up to 96\%
        relative to Proportional Fair, 86\% relative to Round-Robin, and 83\% relative to AoI scheduling. Within the same operational-risk envelope, RA supports fleets up to $1.5\times$ larger than Proportional Fair, $3\times$ larger than Round-Robin, and \(2\)--\(3\times\) larger than AoI scheduling.    
    
    

    \noindent\textbf{(iv) Communication-constrained operational capacity.} Using our simulator, we evaluate the largest simultaneous UAV fleet that a given uplink resource budget can support while keeping every aircraft within the declared communication-induced operational-risk envelope. Unlike conventional capacity metrics based on throughput or device count, this abstraction incorporates aircraft context, the consequences of stale information, and the allocation of radio resources among heterogeneous streams. It connects millisecond-scale gNB scheduling to longer-timescale UAV admission control and traffic-flow management: when projected UAV demand approaches the operational-capacity boundary, the network can expose the need to delay UAV departures, adjust speeds, reroute aircraft, or provision additional communication resources. 

\textbf{Paper organization}. 
Section~\ref{sec:stale-to-risk} defines the system model, operational-risk abstraction, and optimization problem. 
Section~\ref{sec:policy} develops the Risk-Aware scheduler and establishes its performance guarantees. 
Section~\ref{sec:eval} evaluates operational risk, network performance, and operational capacity. 
Section~\ref{sec:related} reviews related work. 
Section~\ref{sec:conc} concludes the paper. 

\section{Network Model and Operational Risk}
\label{sec:stale-to-risk}
The Specific Operations Risk Assessment (SORA), which underpins contemporary UAV regulations and standards \cite{ed253,ed267,ed272,faaConops,sesarConops}, characterizes an
operation through its final Ground Risk Class (GRC) and residual Air Risk Class (ARC). Their combination determines a SAIL Level (SAIL I--VI), which in turn specifies the robustness required for the applicable Operational Safety Objectives
(OSOs) \cite{sora2024}. 

Relevant to our setting, OSO~\#06 requires that the performance of the C3 links is appropriate for the intended operation. SORA, however, does not specify how a cellular scheduler should allocate resources among simultaneous C2, telemetry, sensing, and video transmissions, nor does it define per-flow scheduling priorities~\cite{sora2024}.

Yet, SORA provides the principles needed to close this gap. Its risk classes relate operational criticality to population, air traffic, obstacles, and uncertainty \cite[\S~1.4]{jarus_sora25_annexf}. For example, the intrinsic Ground Risk Class (iGRC) combines population density with an uncertainty footprint, whose radius is derived from the aircraft's maximum speed. By combining these principles with network-induced information staleness, location, and flow type, we enable the 5G scheduler to infer which stale update is consuming the operational risk envelope the fastest. 

Next, we describe our network model; then we relate the network performance to operational risk. Finally, we formulate the 5G scheduling optimization problem.

\subsection{Network Model}
\label{sec:network-model}
Consider a single-cell 5G NR uplink system with \(N\) UAVs communicating with a ground base station (gNB), with their uplink traffic isolated from terrestrial-user traffic through network slicing. Time is slotted and indexed by \(t\in\{1,2,\ldots\}\). 
At each slot, the gNB scheduler allocates at most \(K\) transmission opportunities to the UAV slice.
Each UAV has two uplink information flows: a C2 status-update flow and a video-frame flow. 
Both flows follow a generate-at-will model: a fresh status update is generated whenever
the UAV is scheduled for status transmission, and a fresh video
frame is generated whenever the UAV is scheduled to start a
video frame transmission.\footnote{Although status updates and video
frames are typically generated periodically in practice, the adopted
models approximate high-rate status sensing and continuous video
streaming. They also allow us to focus on the effects of scheduling
and transmission reliability without introducing additional sampling
parameters.}
Each status update consists of a single packet and can be transmitted within one slot,
whereas each video frame consists of \(L\) packets and is
considered successfully delivered only after all \(L\)
packets have been successfully received. 
We assume\footnote{UAV positions can be obtained through network-assisted or
onboard localization \cite{ts23273}. Population
exposure can be estimated from geographic population data, while
staleness can be tracked using packet timestamps and delivery records
\cite{zhao2024large}.} that the cellular infrastructure provides the gNB with estimates of each UAV's position, the population density, and the staleness of its information flows.

\noindent\textbf{Packet Transmission.}
In each slot \(t\), the scheduler decides which UAVs will attempt to transmit.
Let
\(a_{u,s}(t)\in\{0,1\}\)
denote the status packet scheduling indicator, where
\(a_{u,s}(t)=1\) if the status packet of UAV \(u\) is scheduled
in slot \(t\), and
\(a_{u,s}(t)=0\) otherwise.
Similarly, let
\(a_{u,v}(t)\in\{0,1\}\)
denote the video packet scheduling indicator, where
\(a_{u,v}(t)=1\) if one video packet of UAV \(u\) is scheduled
in slot \(t\), and
\(a_{u,v}(t)=0\) otherwise.
The scheduling decisions satisfy
\begin{equation}
\sum_{u=1}^{N}
\bigl(
a_{u,s}(t)+a_{u,v}(t)
\bigr)
\le K,
\qquad
\forall t,
\label{eq:resource_constraint}
\end{equation}
meaning that at most \(K\) transmission opportunities can be
scheduled in each slot.

A scheduled status packet is successfully delivered with
probability \(p_{u,s}\).
Let
\(d_{u,s}(t)\in\{0,1\}\)
denote the successful status packet transmission indicator, where
\(d_{u,s}(t)=1\) if the scheduled status packet is successfully
delivered in slot \(t\), and
\(d_{u,s}(t)=0\) otherwise.
Then \(\mathbb E
[d_{u,s}(t)\mid a_{u,s}(t)]
=
p_{u,s}a_{u,s}(t).\)
Similarly, a scheduled video packet is successfully delivered
with probability \(p_{u,v}\).
Let
\(\mu_{u,v}(t)\in\{0,1\}\)
denote the successful video packet transmission indicator,
where
\(\mu_{u,v}(t)=1\) if the scheduled video packet is successfully
delivered in slot \(t\), and
\(\mu_{u,v}(t)=0\) otherwise.
Then \(\mathbb E
[\mu_{u,v}(t)\mid a_{u,v}(t)]
=
p_{u,v}a_{u,v}(t).\)

\noindent\textbf{Video Transmission Process.}
Each video frame consists of \(L\) packets. At most one packet
of the current frame can be transmitted per slot, and its packets
need not be transmitted in consecutive slots. Once all \(L\) packets have been successfully received, the frame is completed.

Let \(\ell_u(t)\in\{1,\ldots,L\}\) denote the number of remaining packets of the current video frame of UAV \(u\) at the beginning of slot \(t\).
Initially, \(\ell_u(1)=L\) and each successful video packet transmission decreases \(\ell_u(t)\) by one.
After the last remaining packet is successfully delivered, \(\ell_u(t)\) is reset to \(L\) in the next slot.
Accordingly, the evolution of \(\ell_u(t)\) is given by
\begin{equation}
\ell_u(t+1)
=
\begin{cases}
L,
&
\ell_u(t)=1,\;
\mu_{u,v}(t)=1,
\\
\ell_u(t)-1,
&
\ell_u(t)>1,\;
\mu_{u,v}(t)=1,
\\
\ell_u(t),
&
\mu_{u,v}(t)=0.
\end{cases}
\label{eq:remaining_packet}
\end{equation}

Let \(g_u(t)\) denote the age of the current video frame at UAV
\(u\), measured at the beginning of slot \(t\) as the elapsed
time since the frame was generated. When no video-frame
transmission is in progress, we set \(g_u(t)=0\). Initially,
\(g_u(1)=0\). The evolution of \(g_u(t)\) is given by
\begin{equation}
g_u(t+1)
=
\begin{cases}
0,
&
\ell_u(t)=1,\;
\mu_{u,v}(t)=1,
\\
0,
&
\ell_u(t)=L,\;
\mu_{u,v}(t)=0,
\\
g_u(t)+1,
&
\textnormal{otherwise}.
\end{cases}
\label{eq:system_time}
\end{equation}
Under the generate-at-will model, when a UAV is scheduled for
video transmission with \(\ell_u(t)=L\), it generates a fresh
frame at the beginning of the slot. If the first packet is not
successfully delivered, the frame is discarded, and another
fresh frame is generated at the next video scheduling
opportunity. Once the first packet is successfully delivered,
the same frame is retained until all its remaining packets are
delivered.
The video frame delivery indicator is defined as \(d_{u,v}(t)
=
\mathbf 1\{\ell_u(t)=1,\;\mu_{u,v}(t)=1\},\)
where
\(d_{u,v}(t)=1\)
if the current video frame is successfully delivered in slot
\(t\), and
\(d_{u,v}(t)=0\)
otherwise.

\noindent\textbf{AoI Evolution.}
Let \(h_{u,s}(t)\) denote the AoI at the gNB of the most recently
successfully received C2 status update from UAV \(u\) at the
beginning of slot \(t\). We refer to \(h_{u,s}(t)\) as the status AoI. The evolution of \(h_{u,s}(t)\) is given by
\begin{equation}
h_{u,s}(t+1)
=
\begin{cases}
1,
&
d_{u,s}(t)=1,
\\
h_{u,s}(t)+1,
&
\textnormal{otherwise}.
\end{cases}
\label{eq:status_aoi}
\end{equation}
Let \(h_{u,v}(t)\) denote the AoI at the gNB of the most recently
completed video frame from UAV \(u\) at the beginning of slot \(t\).
We refer to \(h_{u,v}(t)\) as the video AoI.
Upon a successful video frame delivery in slot \(t\), the video AoI is reset to the age of the delivered frame at the
beginning of slot \(t+1\), namely \(g_u(t)+1\).
Therefore,
\begin{equation}
h_{u,v}(t+1)
=
\begin{cases}
g_u(t)+1,
&
d_{u,v}(t)=1,
\\
h_{u,v}(t)+1,
&
\textnormal{otherwise}.
\end{cases}
\label{eq:video_aoi}
\end{equation}
We assume that \(h_{u,s}(1)=h_{u,v}(1)=1\) for all
\(u\in\{1,\ldots,N\}\).
\noindent\textbf{Scheduling Policy.}
A scheduling policy determines the actions
\(\{a_{u,s}(t),a_{u,v}(t)\}_{u=1}^{N}\) in each slot.
We consider non-anticipative policies that do not use future channel outcomes
or future system states when making scheduling decisions. Let
\(\Pi\) denote the class of admissible non-anticipative scheduling
policies satisfying the resource constraint
\eqref{eq:resource_constraint}.

\subsection{Staleness-Induced Uncertainty}
\label{sec:freshness-uncertainty}
Let \(h_{u,s}(t)\), defined in \eqref{eq:status_aoi}, denote the
AoI of the most recently received C2 status update from UAV \(u\)
at the beginning of slot \(t\). Let
\(\hat{\bm{x}}_u(t)\in\mathbb{R}^{2}\) denote the position
contained in that update. During the interval, the aircraft may have moved away from its last
reported position \(\hat{\bm{x}}_u(t)\).

Following the containment principle used by SORA, we assume that
a UAV can travel by at most \(v_{\max}\) per slot. We define its worst-case 
position uncertainty radius\footnote{The circular region captures worst-case position uncertainty when
no directional information is available. The framework also
supports other uncertainty models by replacing the disk with an
appropriate region. For example, a trajectory-aligned ellipse can
represent uncertainty along a preplanned path, with population
exposure evaluated over the resulting region.} as
\begin{equation}
r_u(t)
=
r_{\min}
+
v_{\max}h_{u,s}(t),
\label{eq:uncertainty-radius}
\end{equation}
where \(r_{\min}\) represents the minimum localization uncertainty. The corresponding staleness-induced uncertainty region is the
disk centered at the most recently reported position:
\begin{equation}
\mathcal{B}_u(t)
=
\left\{
\bm{x}\in\mathbb{R}^{2}:
\left\|
\bm{x}-\hat{\bm{x}}_u(t)
\right\|
\leq r_u(t)
\right\}.
\label{eq:uncertainty-region}
\end{equation}
Its area is
\begin{equation}
\left|
\mathcal{B}_u(t)
\right|
=
\pi r_u^2(t)
=
\pi
\left[
r_{\min}
+
v_{\max}h_{u,s}(t)
\right]^2.
\label{eq:uncertainty-region-area}
\end{equation}

Information staleness affects operational risk by enlarging the region in which the aircraft may be located; the resulting consequences depend on neighboring obstacles, aircraft, ground population, and other hazards.

\subsection{Operational Risk Components}
\label{sec:risk-components}
We capture three operational consequences of stale C3
information: reduced aircraft-separation margins, increased
population exposure, and degraded video coherence. Since these components measure heterogeneous quantities, we normalize them using the reference values
\(\bar R^{\mathrm{Nav}}\),
\(\bar R^{\mathrm{Ground}}\), and
\(\bar R^{\mathrm{Vid}}\), which denote the navigation, ground,
and video-coherence risk levels at their corresponding declared
operational boundaries. Their definitions are provided in
Appendix~\ref{app:risk-normalization}.

\begin{figure}[t]
    \includegraphics[width=1\columnwidth]{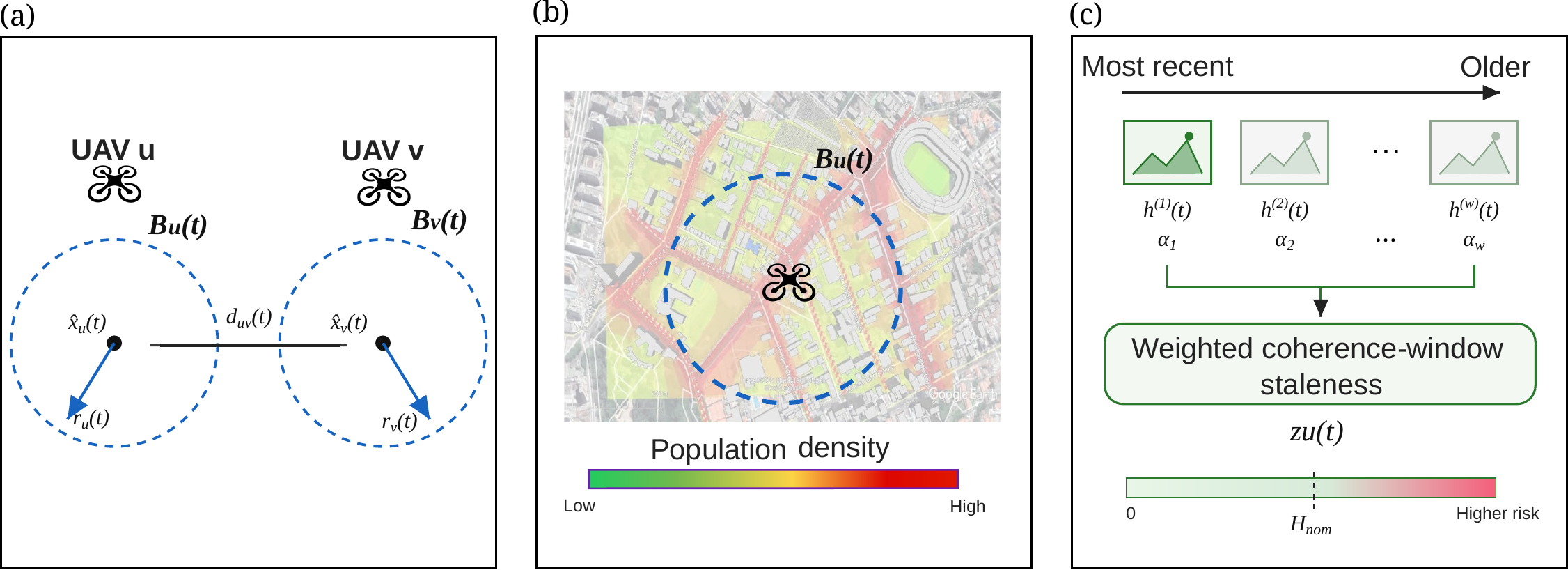}
    \vspace{-0.5cm}
    \caption{The same staleness can induce different operational consequences. (A) Navigation risk increases as position uncertainty consumes the separation margin between neighboring UAVs. (B) Ground risk depends on the population exposed within the UAV's uncertainty region, so the same AoI can imply different risk across locations. (C) Video-coherence risk captures the weighted staleness of the most recent completed frames and increases once it exceeds the nominal coherence threshold. The three components are normalized at their corresponding operational boundaries.
}
    \label{fig:risk-components}
    \vspace{-0.5cm}
\end{figure}

\noindent\textbf{Navigation Risk.}
As C2 status updates become stale, the positional uncertainties
of neighboring UAVs consume their separation margin. Let 
\(d_{uv}(t)
=
\left\|
\hat{\bm{x}}_u(t)-\hat{\bm{x}}_v(t)
\right\|\)
denote the distance between the positions contained in the most
recently received C2 status updates from UAVs \(u\) and \(v\).

Using the uncertainty radii defined in
\eqref{eq:uncertainty-radius}, we define the normalized navigation
risk of UAV \(u\) as
\begin{equation}
R_u^{\mathrm{Nav}}(t)
=
\frac{1}{\bar R^{\mathrm{Nav}}}
\sum_{\substack{v=1\\v\neq u}}^{N}
\left[
r_u(t)+r_v(t)-d_{uv}(t)
\right]^+,
\label{eq:navigation-risk}
\end{equation}
where \([x]^+\triangleq\max\{x,0\}\). The navigation risk is zero
when the reported separation exceeds the combined position
uncertainty and increases with the corresponding separation
deficit.

\noindent\textbf{Ground Risk.}
The ground risk induced by C2 status update staleness depends
on the population within the UAV's uncertainty region. Using the uncertainty region
\(\mathcal B_u(t)\) defined in \eqref{eq:uncertainty-region}, we
define the normalized ground risk as
\begin{equation}
R_u^{\mathrm{Ground}}(t)
=
\frac{1}{\bar R^{\mathrm{Ground}}}
\int_{\mathcal B_u(t)}
\rho(\bm{x},t)\,\mathrm{d}\bm{x},
\label{eq:ground-risk}
\end{equation}
where \(\rho(\bm{x},t)\) is the population-density
at location \(\bm{x}\) in slot \(t\). Thus, even for the
same C2 status update staleness, the ground risk can differ depending
on the population distribution around the UAV's most recently
reported position.

\noindent\textbf{Video-Coherence Risk.}
Remote piloting depends on a sequence of recent frames rather than
on the latest completed frame alone. Let
\(h_{u,v}^{(i)}(t)\) denote the AoI of the \(i\)-th most recently
completed video frame, where \(h_{u,v}^{(1)}(t)=h_{u,v}(t)\),
and \(h_{u,v}(t)\) is the video AoI defined in
\eqref{eq:video_aoi}. We aggregate the AoI of the \(w\) most
recently completed frames into the coherence-window staleness
\begin{equation}
z_u(t)
=
\sum_{i=1}^{w}
\alpha_i h_{u,v}^{(i)}(t),
\qquad
\sum_{i=1}^{w}\alpha_i=1,
\qquad
\alpha_i\geq 0,
\label{eq:video-coherence}
\end{equation}
where \(\alpha_i\) specifies the contribution of the \(i\)-th
most recent frame. The normalized video-coherence risk is
\begin{equation}
R_u^{\mathrm{Vid}}(t)
=
\left[
\frac{z_u(t)-H_{\mathrm{nom}}}
{\bar R^{\mathrm{Vid}}}
\right]^+,
\label{eq:video-risk}
\end{equation}
where \(H_{\mathrm{nom}}\) denotes the nominal
coherence-window staleness. Thus,
\(R_u^{\mathrm{Vid}}(t)=0\) when
\(z_u(t)\leq H_{\mathrm{nom}}\) and increases linearly with
\(z_u(t)\) otherwise.

\subsection{Composite Operational Risk}
\label{sec:composite-risk}

Using the normalized risk components defined in
\eqref{eq:navigation-risk}, \eqref{eq:ground-risk}, and
\eqref{eq:video-risk}, the per-UAV composite operational risk is
\begin{equation}
R_u(t)
=
w_N R_u^{\mathrm{Nav}}(t)
+
w_G R_u^{\mathrm{Ground}}(t)
+
w_V R_u^{\mathrm{Vid}}(t),
\label{eq:composite-risk}
\end{equation}
where \(w_N,w_G,w_V>0\) are weighting coefficients satisfying
\(w_N+w_G+w_V=1\).

The level \(R_u(t)=1\) marks the declared composite operational
boundary; values above it
indicate that communication-induced staleness has exhausted the
composite risk budget. SORA provides the framework for operational risk assessment and authorization, while this metric provides a per-slot risk proxy that allows the scheduler to compare the operational consequences of deferring different C3 streams.

\subsection{Surrogate Operational Risk}
\label{sec:surrogate-risk}

Directly optimizing the per-UAV composite operational risk in
\eqref{eq:composite-risk} is challenging because its components
are nonlinear in the AoI states and coupled through the system
dynamics. We therefore derive a linear upper bound on the per-UAV
composite operational risk.

For each UAV \(u\), the C2 status-update AoI satisfies
\begin{equation}
0
\leq
h_{u,s}(t)
\leq
\tau_{u,s}^{\max},
\qquad
\forall u,t,
\label{eq:status-age-operating-region}
\end{equation}
where \(\tau_{u,s}^{\max}\) captures the maximum tolerable
status staleness of UAV \(u\), meaning that
\(h_{u,s}(t)-\tau_{u,s}^{\max}>0\) should trigger an emergency
response. The corresponding maximum uncertainty radius of
UAV \(u\) is \(r_u^{\max}
=
r_{\min}
+
v_{\max}\tau_{u,s}^{\max}.\)
Let \(\mathcal B_u^{\max}\) denote the uncertainty region
centered at the most recently reported position of UAV \(u\)
with radius \(r_u^{\max}\). Under
\eqref{eq:status-age-operating-region}, the uncertainty region
of UAV \(u\) satisfies
\(\mathcal B_u(t)\subseteq\mathcal B_u^{\max}\).
We define the maximum population density within
\(\mathcal B_u^{\max}\) as \(\rho_{u,\mathrm{max}}
=
\max_{\bm{x}\in\mathcal B_u^{\max}}
\rho(\bm{x}).\)
It follows that
\begin{equation}
0
\leq
\rho(\bm{x},t)
\leq
\rho_{u,\mathrm{max}}
\leq
\rho_{\mathrm{max}},
\qquad
\forall u,t,\;
\bm{x}\in\mathcal B_u(t),
\label{eq:population-upper-bound}
\end{equation}
where \(\rho_{\mathrm{max}}\) is the maximum population
density within the declared operating region. Thus,
\(\rho_{u,\mathrm{max}}\) captures the local population
context associated with UAV \(u\).
For each pair of UAVs \(u\) and \(v\), let \(d_{uv}\) denote
the separation between their most recently reported positions
under the considered operational context.

The following proposition provides a linear upper bound on the
per-UAV composite operational risk within the declared operating
region.

\begin{proposition}
\label{prop:risk-upper-bound}
Under \eqref{eq:population-upper-bound} and
\eqref{eq:status-age-operating-region}, the per-UAV composite
operational risk of UAV \(u\) satisfies
\begin{equation}
R_u(t)
\leq
\alpha_{u,s}h_{u,s}(t)
+
\alpha_{u,v}h_{u,v}^{(w)}(t)
+
C_u,
\label{eq:risk-upper-bound}
\end{equation}
where
\begin{align}
\alpha_{u,s}
&=
\frac{w_N}
{\bar R^{\mathrm{Nav}}\tau_{u,s}^{\max}}
\sum_{\substack{v=1\\v\neq u}}^{N}
\left(
\left[
2r_{\min}
+
v_{\max}
\left(
\tau_{u,s}^{\max}
+
\tau_{v,s}^{\max}
\right)
-
d_{uv}
\right]^+
\right.
\notag\\
&\hspace{25mm}\left.
-
\left[
2r_{\min}
+
v_{\max}\tau_{v,s}^{\max}
-
d_{uv}
\right]^+
\right)
\notag\\
&\quad+
\frac{w_G\rho_{u,\mathrm{max}}\pi}
{\bar R^{\mathrm{Ground}}}
\left(
2r_{\min}v_{\max}
+
v_{\max}^{2}\tau_{u,s}^{\max}
\right),
\label{eq:alpha-status-definition}
\\
\alpha_{u,v}
&=
\frac{w_V}{\bar R^{\mathrm{Vid}}},
\label{eq:alpha-video-definition}
\end{align}
and
\begin{align}
C_u
&=
\frac{w_N}
{\bar R^{\mathrm{Nav}}}
\sum_{\substack{v=1\\v\neq u}}^{N}
\left[
2r_{\min}
+
v_{\max}\tau_{v,s}^{\max}
-
d_{uv}
\right]^+
\notag\\
&\quad+
\frac{
w_G\rho_{u,\mathrm{max}}\pi r_{\min}^{2}
}{
\bar R^{\mathrm{Ground}}}.
\label{eq:risk-constant-definition}
\end{align}
\end{proposition}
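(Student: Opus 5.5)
The plan is to bound each of the three normalized components in \eqref{eq:composite-risk} separately by an affine function of the relevant AoI. I will use only the operating-region constraints \eqref{eq:status-age-operating-region}, imposed on \emph{every} UAV including the neighbors \(v\), and the population bound \eqref{eq:population-upper-bound}. Multiplying by \(w_N,w_G,w_V\) and summing then yields \eqref{eq:risk-upper-bound}, with the coefficients collected into \(\alpha_{u,s}\), \(\alpha_{u,v}\), \(C_u\).

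\textbf{Navigation term.} Substituting \eqref{eq:uncertainty-radius}, each summand in \eqref{eq:navigation-risk} is \(\bigl[2r_{\min}+v_{\max}h_{u,s}(t)+v_{\max}h_{v,s}(t)-d_{uv}\bigr]^+\). Since \([\cdot]^+\) is nondecreasing and \(h_{v,s}(t)\le\tau_{v,s}^{\max}\), this is at most \(f_{uv}(h_{u,s}(t))\), where
\[
f_{uv}(h)=\bigl[2r_{\min}+v_{\max}\tau_{v,s}^{\max}-d_{uv}+v_{\max}h\bigr]^+ .
\]
The function \(f_{uv}\) is convex in \(h\). On the interval \([0,\tau_{u,s}^{\max}]\) it therefore lies below its chord:
\[
f_{uv}(h)\le f_{uv}(0)+\frac{h}{\tau_{u,s}^{\max}}\bigl(f_{uv}(\tau_{u,s}^{\max})-f_{uv}(0)\bigr).
\]
Here \(f_{uv}(0)\) is exactly the bracket appearing in \(C_u\), and \(f_{uv}(\tau_{u,s}^{\max})-f_{uv}(0)\) is the difference of brackets in \(\alpha_{u,s}\). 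Summing over \(v\neq u\) and scaling by \(w_N/\bar R^{\mathrm{Nav}}\) gives the navigation parts of \(\alpha_{u,s}\) and \(C_u\).

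\textbf{Ground term.} Using \eqref{eq:population-upper-bound} and \eqref{eq:uncertainty-region-area}, the integral in \eqref{eq:ground-risk} is at most
\[
\rho_{u,\max}\,\pi\bigl(r_{\min}^2+2r_{\min}v_{\max}h_{u,s}(t)+v_{\max}^2h_{u,s}^2(t)\bigr).
\]
The quadratic term is linearized with \(h_{u,s}^2(t)\le\tau_{u,s}^{\max}h_{u,s}(t)\), which holds because \(0\le h_{u,s}(t)\le\tau_{u,s}^{\max}\). This produces the term \(\rho_{u,\max}\pi\bigl(2r_{\min}v_{\max}+v_{\max}^2\tau_{u,s}^{\max}\bigr)h_{u,s}(t)\) together with the constant \(\rho_{u,\max}\pi r_{\min}^2\), matching \eqref{eq:alpha-status-definition} and \eqref{eq:risk-constant-definition} after scaling by \(w_G/\bar R^{\mathrm{Ground}}\).

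\textbf{Video term.} Assuming \(H_{\mathrm{nom}}\ge 0\) and \(z_u(t)\ge0\), we have \(R_u^{\mathrm{Vid}}(t)\le z_u(t)/\bar R^{\mathrm{Vid}}\). The remaining claim is \(z_u(t)\le h_{u,v}^{(w)}(t)\). Because the weights satisfy \(\alpha_i\ge0\) and \(\sum_i\alpha_i=1\), this follows once \(h_{u,v}^{(i)}(t)\) is shown to be nondecreasing in \(i\). This ordering is the step needing the most care, since it must come from the dynamics rather than from the definitions alone. The argument I would use is as follows:
\begin{itemize}
\item Under the frame process of \eqref{eq:remaining_packet}--\eqref{eq:system_time}, each UAV has at most one frame in progress.
\item A new frame is generated only after the previous one is completed.
\item Hence the \(i\)-th most recently completed frame was generated strictly before the \((i-1)\)-th.
\item Its age at slot \(t\), which is \(t\) minus its generation time, is therefore at least as large.
\end{itemize}
Scaling by \(w_V\) gives \(\alpha_{u,v}=w_V/\bar R^{\mathrm{Vid}}\) with no constant contribution.

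Adding the three bounds completes the argument.
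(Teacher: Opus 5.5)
Your proposal is correct and follows the paper's appendix proof step for step: the same monotonicity-plus-secant (chord) bound on the convex function for the navigation term, the same linearization $h_{u,s}^2(t)\le\tau_{u,s}^{\max}h_{u,s}(t)$ for the ground term, and the same chain $R_u^{\mathrm{Vid}}(t)\le z_u(t)/\bar R^{\mathrm{Vid}}\le h_{u,v}^{(w)}(t)/\bar R^{\mathrm{Vid}}$ for the video term. The paper simply asserts the ordering $h_{u,v}^{(1)}(t)\le\cdots\le h_{u,v}^{(w)}(t)$; your argument from the frame dynamics (completed frames are generated in sequence, so older ones have larger age) supplies that step.
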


A detailed proof of Proposition~\ref{prop:risk-upper-bound}
is provided in Appendix~\ref{app:exposure-upper-bound}.

The upper bound in \eqref{eq:risk-upper-bound} depends on the AoI
\(h_{u,v}^{(w)}(t)\) of the oldest frame in the coherence window.
Tracking this state requires maintaining the completion history
of the previous \(w\) video frames. However, the scheduler makes
a scheduling decision for every frame, and completing the current
frame directly updates
\(h_{u,v}^{(1)}(t)=h_{u,v}(t)\), the AoI of the most recently
completed frame. We therefore use the latest-frame AoI as the
video state in the scheduling objective.

Motivated by the coefficients of the risk upper bound, we define
the following tractable per-UAV surrogate operational risk:
\begin{equation}
\widehat R_u(t)
=
\alpha_{u,s}h_{u,s}(t)
+
\alpha_{u,v}h_{u,v}(t).
\label{eq:surrogate-risk}
\end{equation}
The per-UAV surrogate operational risk in
\eqref{eq:surrogate-risk} retains the relative status and video
coefficients induced by the original operational risk model while
depending only on the AoI states used in the scheduling analysis.
It is employed as a tractable scheduling objective.

\noindent\textbf{Context timescale.} The operational context underlying $\alpha_{u,s}$ and $\alpha_{u,v}$, including UAV location and population exposure, generally evolves more slowly than the per-slot AoI and transmission states. We therefore analyze each fixed-context realization separately: the risk coefficients are held constant while the scheduler responds in every slot to changes in information freshness and video-transmission progress. 
Section~\ref{sec:eval} evaluates this two-timescale operation along moving UAV trajectories and dynamic parameters.

\subsection{Optimization Problem}
\label{sec:optimization-problem}

We use the per-UAV surrogate operational risk in
\eqref{eq:surrogate-risk} as the scheduling objective. The
long-term expected surrogate risk under policy
\(\pi\in\Pi\) is defined as
\begin{equation}
\widehat J^\pi
=
\limsup_{T\to\infty}
\frac1T
\sum_{t=1}^{T}
\sum_{u=1}^{N}
\mathbb E^\pi
\left[
\widehat R_u(t)
\right].
\label{eq:surrogate_system_risk}
\end{equation}
Our goal is to design a low-complexity scheduling policy $\pi$ that attempts to minimize $\widehat J^\pi$ subject to network resource constraints.
\begin{subequations}\label{eq:surrogate_optimization}
\begin{empheq}[box=\fbox]{align}
\widehat J^*
=
\min_{\pi\in\Pi}
&\quad
\left\{
\limsup_{T\to\infty}
\frac1T
\sum_{t=1}^{T}
\sum_{u=1}^{N}
\mathbb E^\pi
\left[
\widehat R_u(t)
\right]
\right\}
\label{eq:surrogate_optimization_obj}
\\
\textnormal{s.t.}\qquad
&
\sum_{u=1}^{N}
\bigl(
a_{u,s}(t)+a_{u,v}(t)
\bigr)
\le K,
\qquad
\forall t.
\label{eq:surrogate_optimization_constraint}
\end{empheq}
\end{subequations}

The alignment of the proposed surrogate risk~\eqref{eq:surrogate-risk}, the resulting scheduling policy~\eqref{eq:surrogate_optimization}, and SORA's operational criticality~\cite{sora2024} will be illustrated in Fig.~\ref{fig:sail-gamma-violin}, where we show that our scheduling policy consistently serves UAVs with greater criticality.

For a given system setup
\((N,K,L,p_{u,s},p_{u,v},\alpha_{u,s},\alpha_{u,v})\),
let \(\widehat J^*\) denote the minimum long-term expected
surrogate risk.
Similarly, for any admissible scheduling policy
\(\eta\in\Pi\),
let \(\widehat J^\eta\) denote the corresponding expected
surrogate risk achieved by policy \(\eta\).
We define the optimality ratio of policy \(\eta\) as
$\psi^\eta
=
\widehat J^\eta/
\widehat J^*. 
$
A policy \(\eta\) is said to be
\(\psi^\eta\)-optimal if its long-term expected surrogate risk is within a multiplicative factor
\(\psi^\eta\geq 1\) of the optimum.
The optimality ratio provides a quantitative measure of the
performance gap between a low-complexity scheduling policy and
the optimal policy. 

\section{Proposed 5G Scheduling Policy}
\label{sec:policy}
In this section, we develop a low-complexity scheduling policy for minimizing the long-term expected surrogate risk~\eqref{eq:surrogate_optimization} and obtain its optimality ratio under fixed operational context and risk
coefficients.
In \S~\ref{sec:related}, we will briefly discuss related optimization problems, computational-heavy optimal solutions, and low-complexity sub-optimal solutions. Two key contributions of this paper are: (i) applying the rich AoI theory to a pressing, practical problem; (ii) proposing a carefully designed Lyapunov function that improves the optimality ratio by a factor of at least $1.5\times$ with respect to~\cite{zhao2024large}. 
Next, we derive a universal lower bound \(L_B^{\star} \leq \widehat J^* \) on the optimal long-term expected surrogate risk using a sample-path argument.
Then, we develop a Risk-Aware Max-Weight scheduling policy based on a Lyapunov function. Finally, we use Lyapunov optimization to establish an upper bound on its optimality ratio $\psi^\eta\leq\widehat J^\eta/L_B^{\star}.$

\subsection{Lower Bound}
To derive a universal lower bound on~\eqref{eq:surrogate_optimization}, we characterize the minimum achievable surrogate risk over all feasible long-term UAV packet delivery rates. 

\begin{theorem}
\label{thm:lower_bound}
Assume that
\(0<p_{u,s}\leq1\) and \(0<p_{u,v}\leq1\) for every UAV \(u\).
For any network setup 
\((N,K,L,p_{u,s},p_{u,v},\)
\\
\(\alpha_{u,s},\alpha_{u,v})\),
the long-term expected surrogate risk under any admissible
policy \(\pi\in\Pi\) satisfies \(\widehat J^\pi \ge L_B^{\star},\)
where
\begin{subequations}
\begin{align}
L_B^{\star}
=
\min_{\{q_{u,s},q_{u,v}\}}
&\quad
\sum_{u=1}^{N}
\left(
\frac{\alpha_{u,s}}{2q_{u,s}}
+
\frac{\alpha_{u,v}}{2q_{u,v}}
\right)
\notag\\
&\qquad
+
\frac12
\sum_{u=1}^{N}
\left(
\alpha_{u,s}
+
(2L-1)\alpha_{u,v}
\right)
\label{eq:LB_opt_obj}
\\
\textnormal{s.t.}\qquad
&
\sum_{u=1}^{N}
\left(
\frac{q_{u,s}}{p_{u,s}}
+
\frac{Lq_{u,v}}{p_{u,v}}
\right)
\le K,
\label{eq:LB_opt_constraint1}
\\
&
0<q_{u,s}\le p_{u,s} \mbox{ and }
0<q_{u,v}\le
\frac{p_{u,v}}{L}, \forall u
\label{eq:LB_opt_constraint2}
\end{align}
\label{eq:LB}
\end{subequations}
Here,
\(q_{u,s}\) and \(q_{u,v}\) denote the long-term status packet delivery
rate and video-frame completion rate of UAV \(u\), respectively.
\end{theorem}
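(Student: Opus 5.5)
The plan is a sample-path argument in the style of the AoI lower bounds of \cite{zhao2024large}, applied separately to the status AoI and to the video AoI. The two per-UAV bounds are then tied together through the expected resource constraint.

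\textbf{Status AoI.} Fix a policy \(\pi\in\Pi\), a horizon \(T\), and a sample path. Let \(M_{u,s}(T)\) be the number of successful status deliveries of UAV \(u\) in \([1,T]\), and \(I_m\) the inter-delivery intervals, including the initial and residual ones. By \eqref{eq:status_aoi}, the AoI increases by one per slot and resets to one. Hence
\[
\sum_{t=1}^{T}h_{u,s}(t)\ge\sum_m \tfrac{I_m(I_m+1)}{2}.
\]
Cauchy--Schwarz with \(\sum_m I_m=T\) then gives a time average of at least \(\frac{T}{2M_{u,s}(T)}+\frac12\), up to an \(O(1/T)\) correction from the boundary intervals.

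\textbf{Video AoI.} The same argument applies, with one extra observation. By \eqref{eq:system_time}--\eqref{eq:video_aoi}, a frame is created at its first successful packet, and at most one packet is sent per slot. So at completion \(g_u(t)\ge L-1\), and the video AoI resets to a value of at least \(L\) rather than \(1\). An interval of length \(I\) between frame completions therefore contributes at least \(I(L-1)+I(I+1)/2\). This yields a time average of at least
\[
\frac{T}{2M_{u,v}(T)}+\frac{2L-1}{2}.
\]
Multiplying by \(\alpha_{u,s},\alpha_{u,v}\) and summing over \(u\) reproduces the objective \eqref{eq:LB_opt_obj}, with \(M/T\) in place of \(q\).

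\textbf{Constraints.} These are obtained in expectation.
\begin{itemize}
\item Since \(\mathbb E[d_{u,s}(t)\mid a_{u,s}(t)]=p_{u,s}a_{u,s}(t)\), we get \(\mathbb E[M_{u,s}(T)]=p_{u,s}\mathbb E[\sum_t a_{u,s}(t)]\).
\item Every completed frame consumes \(L\) successful video packets. Hence \(L\,\mathbb E[M_{u,v}(T)]\le p_{u,v}\mathbb E[\sum_t a_{u,v}(t)]\); failed first packets only waste attempts, which is consistent with the inequality.
\item Summing and using \eqref{eq:resource_constraint} gives \(\sum_u\bigl(\mathbb E[M_{u,s}]/p_{u,s}+L\mathbb E[M_{u,v}]/p_{u,v}\bigr)\le KT\).
\item The per-UAV caps \(q_{u,s}\le p_{u,s}\) and \(q_{u,v}\le p_{u,v}/L\) follow from one attempt per slot and \(L\) successes per frame.
\end{itemize}

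\textbf{Passing to the limit.} Take expectations of the sample-path bound. Since \(x\mapsto1/x\) is convex, Jensen's inequality gives \(\mathbb E[T/M]\ge T/\mathbb E[M]\). (When \(M=0\) the AoI sum is of order \(T^2\), so that case only helps.) Define \(q^{T}_{u,\cdot}=\mathbb E[M_{u,\cdot}(T)]/T\), which is feasible for \eqref{eq:LB_opt_constraint1}--\eqref{eq:LB_opt_constraint2} up to vanishing terms. Choose a subsequence of horizons that attains the \(\limsup\) in \(\widehat J^\pi\). Then extract a further subsequence along which \(q^T\) converges, using compactness of the constraint set. Continuity of the objective on the feasible region, or the observation that rates tending to zero force the bound to \(+\infty\), shows the limit is at least the minimum \(L_B^\star\).

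\textbf{Main obstacle.} I expect this last limiting step to be the main difficulty. It has three parts: controlling the boundary intervals, handling \(q\to0\) (where strict positivity is lost), and justifying the exchange of \(\limsup\), expectation and minimization. I would first use the \(O(T^2)\) divergence when rates vanish to restrict attention to rates bounded away from zero, and then apply continuity.
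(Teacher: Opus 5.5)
Your skeleton matches the paper's: decompose each AoI sample path into inter-delivery intervals, bound the sum of squares, use the fact that the video AoI resets to at least $L$, derive the expected resource constraint from $p_{u,s}$, $p_{u,v}$ and the $L$ packets per frame, and take a compactness subsequence. The problem is the pathwise inequalities you feed into Jensen, since they must hold on every sample path.

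With $M$ deliveries there are $M+1$ intervals, counting the residual one. Cauchy--Schwarz therefore gives $\frac{T}{2(M+1)}+\frac12$, not $\frac{T}{2M}+\frac12$. The difference $\frac{T}{2M(M+1)}$ is of order $T$, not $O(1/T)$, when $M$ is small. On $\{M=0\}$ your bound reads $+\infty$ while the true time average is $(T+1)/2$. That event has probability at least $(1-p_{u,s})^T>0$ under every policy when $p_{u,s}<1$, so that case does not ``only help'': it is exactly where the inequality fails.

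The video bound has the same defect and one more. The first interval starts at AoI $1$, not $L$, so it falls short of your per-interval estimate by $(L-1)I^{(1)}$. Even after replacing $M$ by $M+1$, your claimed bound at $M=0$ is $\frac T2+L-\frac12$, which exceeds the true $(T+1)/2$ whenever $L>1$.

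Your proposed remedy, restricting to rates bounded away from zero, acts on $q^T=\mathbb E[M]/T$ rather than on the random count $M$. It therefore cannot repair a pathwise failure on events of positive probability.

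The repair is what the paper does. For status, apply Cauchy--Schwarz over all $M+1$ intervals to get $\frac1T\sum_t h_{u,s}(t)\ge\frac{T}{2(M_{u,s}+1)}+\frac12$, which is valid for every $M_{u,s}\ge0$. For video, write the pathwise lower bound as $\frac12\sum I^2+(L-\frac12)T-(L-1)I^{(1)}$ and complete the square: $\frac12(I^{(1)})^2-(L-1)I^{(1)}=\frac12(I^{(1)}-L+1)^2-\frac12(L-1)^2$. Cauchy--Schwarz then gives $\frac{(T-L+1)^2}{2T(M_{u,v}+1)}-\frac{(L-1)^2}{2T}+L-\frac12$, which equals $(T+1)/2$ at $M_{u,v}=0$. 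Both bounds are convex and decreasing in $M$, so Jensen directly gives bounds in terms of $q^T+1/T$. The limit along a convergent subsequence of $q^T$ is then routine, with a zero limit rate forcing $+\infty$ as you say. The delicate part is this pathwise bookkeeping, not the limiting step you flagged.
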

\begin{proof}[Proof Sketch]
Using a sample path argument, we characterize the expected surrogate risk in terms of the status-packet delivery and video-frame completion rates. Combining the resulting risk expression with the resource constraint~\eqref{eq:LB_opt_constraint1} yields the optimization problem in~\eqref{eq:LB}, whose optimal value gives the lower bound. Detailed derivations are provided in Appendix~\ref{proof_lowerbound}.
\end{proof}

\subsection{Risk-Aware 5G Scheduling Policy}
We develop a Max-Weight scheduling policy that is designed to reduce the surrogate risk by minimizing the expected one-slot drift of a Lyapunov function at every time slot $t$. A main challenge is the stochastic delay between action (i.e., packet scheduling) and reward (i.e., risk reduction), which makes the Lyapunov functions from prior works, including~\cite{kadota2018,kadota2019throughput,8933047}, unsuitable for our setting. To overcome this challenge, the authors in~\cite{zhao2024large} proposed a complex Lyapunov function that includes not only AoI but also a throughput debt with respect to a chosen throughput target. In contrast, we propose a simpler, thoughtfully designed risk-aware Lyapunov function that relies only on AoI and improves the resulting Max-Weight optimality ratio by a factor of $1.5\times$ with respect to~\cite{zhao2024large}. 
In \S~\ref{sec:eval}, we show that while \cite{zhao2024large} minimizes AoI, our risk-aware scheduling policy achieves up to 83\% lower  99th-percentile per-UAV composite operational risk, and lower risk    than~\cite{zhao2024large} at every evaluated operating point.


\noindent\textbf{Lyapunov Function.}
Denote the system state at the beginning of slot \(t\) as
$
S(t)
=
\{
h_{u,s}(t),
h_{u,v}(t),
g_u(t),
\ell_u(t)
\}_{u=1}^{N}
$. 
For the video-frame flow, define the video risk potential as follows 
\begin{equation}
\Phi_{u,v}(t) = \frac{\ell_u(t)}{L}(h_{u,v}(t)-1) + g_u(t) + 1 .
\label{eq:video_linear_potential}
\end{equation}
This video risk potential grows with video AoI $h_{u,v}(t)$, UAV frame age $g_u(t)$, and the number of remaining packets $\ell_u(t)$. Notably, the expected video risk potential reduces with a deterministic one-slot delay following a video packet transmission. 
We define the Lyapunov function as
\begin{equation}
L(S(t))
=
\sum_{u=1}^{N}
\left[
V_{u,s}h_{u,s}(t)
+
V_{u,v}\Phi_{u,v}(t)
\right],
\label{eq:MW_lyapunov_linear}
\end{equation}
where \(V_{u,s}>0\) and \(V_{u,v}>0\) are design parameters.

\noindent\textbf{One-slot Lyapunov Drift.}
The one-slot conditional Lyapunov drift is given by
\begin{equation}
\Delta(S(t))
=
\mathbb E
\left[
L(S(t+1))-L(S(t))
\mid S(t)
\right].
\label{eq:MW_drift_linear}
\end{equation}
By substituting the evolution of \(\ell_u(t)\), \(g_u(t)\), \(h_{u,s}(t)\), and \(h_{u,v}(t)\) from
\eqref{eq:remaining_packet},
\eqref{eq:system_time},
\eqref{eq:status_aoi},
and
\eqref{eq:video_aoi}, respectively, into the drift expression in~\eqref{eq:MW_drift_linear}, we obtain the upper bound on \(\Delta(S(t))\). The resulting bound is expressed in~\eqref{eq:MW_drift_bound_linear}-\eqref{eq:MW_video_weight_linear}, with detailed derivations provided in Appendix~\ref{app:lyapunov_drift_linear}
\begin{align}
\Delta(S(t))
\le
B(t)
&-
\sum_{u=1}^{N}
W_{u,s}(t)
\mathbb E
\left[
a_{u,s}(t)
\mid S(t)
\right]
\notag\\
&-
\sum_{u=1}^{N}
W_{u,v}(t)
\mathbb E
\left[
a_{u,v}(t)
\mid S(t)
\right].
\label{eq:MW_drift_bound_linear}
\end{align}
The scheduling independent term is
\begin{equation}
B(t)
=
\sum_{u=1}^{N}
V_{u,s}
+
\sum_{u=1}^{N}
V_{u,v}
\left(
1+\frac{\ell_u(t)}{L}
\right).
\label{eq:MW_B_term_linear}
\end{equation}
The status and video scheduling weights are
\begin{align}
W_{u,s}(t)
&=
V_{u,s}p_{u,s}h_{u,s}(t),
\label{eq:MW_status_weight_linear}
\\
W_{u,v}(t)
&=
\frac{V_{u,v}p_{u,v}}{L}h_{u,v}(t).
\label{eq:MW_video_weight_linear}
\end{align}

\noindent\textbf{Max-Weight Policy.}
According to
\eqref{eq:MW_drift_bound_linear},
minimizing the upper bound on the Lyapunov drift is
equivalent to maximizing the total scheduling weight.
Hence, in every slot \(t\), the Max-Weight policy solves
\begin{equation}
\begin{aligned}
\max_{\{a_{u,s}(t),a_{u,v}(t)\}}
\quad
&
\sum_{u=1}^{N}
a_{u,s}(t)W_{u,s}(t)
+
\sum_{u=1}^{N}
a_{u,v}(t)W_{u,v}(t)
\\
\textnormal{s.t.}
\quad
&
\sum_{u=1}^{N}
\left(
a_{u,s}(t)+a_{u,v}(t)
\right)
\le K,
\\
&
a_{u,s}(t),a_{u,v}(t)\in\{0,1\},
\quad \forall u.
\end{aligned}
\label{eq:MW_policy_linear}
\end{equation}
Equivalently, the scheduler selects the \(K\) streams with the
largest scheduling weights.

\noindent\textbf{Optimality Ratio for Max Weight Policy.}
We now establish a performance guarantee for the proposed
Max-Weight policy by comparing it with the lower bound derived
in Theorem~\ref{thm:lower_bound}.
\begin{theorem}
\label{thm:MW_linear_ratio}
Let
\(\{q_{u,s}^{\mathrm{LB}},q_{u,v}^{\mathrm{LB}}\}_{u=1}^{N}\)
be an optimal solution of
\eqref{eq:LB},
and choose the Lyapunov coefficients as
\begin{equation}
V_{u,s}
=
\frac{\alpha_{u,s}}{q_{u,s}^{\mathrm{LB}}},
\qquad
V_{u,v}
=
\frac{\alpha_{u,v}}{q_{u,v}^{\mathrm{LB}}}.
\label{eq:MW_linear_coefficients}
\end{equation}
Then the proposed Max-Weight policy satisfies
\begin{equation}
\widehat J^{MW}
\le
L_B^\star
+
\sum_{u=1}^{N}
\frac{\alpha_{u,s}}{2q_{u,s}^{\mathrm{LB}}}
+
\frac32
\sum_{u=1}^{N}
\frac{\alpha_{u,v}}{q_{u,v}^{\mathrm{LB}}},
\label{eq:MW_linear_upper_bound}
\end{equation}
where \(L_B^\star\) denotes the optimal value of
\eqref{eq:LB}.
Consequently,
\begin{equation}
\psi^{MW}
=
\frac{\widehat J^{MW}}{\widehat J^*}
\le
1+
\frac{
\displaystyle
\sum_{u=1}^{N}
\frac{\alpha_{u,s}}{2q_{u,s}^{\mathrm{LB}}}
+
\frac32
\sum_{u=1}^{N}
\frac{\alpha_{u,v}}{q_{u,v}^{\mathrm{LB}}}
}
{L_B^\star}.
\label{eq:MW_linear_ratio_bound}
\end{equation}
\end{theorem}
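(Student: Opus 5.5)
The plan is the standard Lyapunov-drift comparison against a stationary randomized policy built from the lower-bound rates, combined with a careful telescoping that recovers the surrogate risk from the drift inequality.

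\textbf{Step 1: comparison policy.} Let $\{q_{u,s}^{\mathrm{LB}},q_{u,v}^{\mathrm{LB}}\}$ be optimal for \eqref{eq:LB}. Consider a stationary randomized policy $R$. In each slot it schedules the status stream of $u$ with probability $\beta_{u,s}=q^{\mathrm{LB}}_{u,s}/p_{u,s}$ and the video stream with probability $\beta_{u,v}=Lq^{\mathrm{LB}}_{u,v}/p_{u,v}$. By \eqref{eq:LB_opt_constraint1}, $\sum_u(\beta_{u,s}+\beta_{u,v})\le K$, and by \eqref{eq:LB_opt_constraint2}, each $\beta\le 1$. A standard dependent-rounding (systematic sampling) construction then yields actions that satisfy \eqref{eq:resource_constraint} in every slot while keeping these marginals.

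Since Max-Weight maximizes $\sum_u W_{u,s}a_{u,s}+W_{u,v}a_{u,v}$ pointwise, its drift bound \eqref{eq:MW_drift_bound_linear} is at most the same bound evaluated at $R$'s marginals. With the choice \eqref{eq:MW_linear_coefficients}, $W_{u,s}\beta_{u,s}=V_{u,s}q^{\mathrm{LB}}_{u,s}h_{u,s}=\alpha_{u,s}h_{u,s}(t)$. Likewise $W_{u,v}\beta_{u,v}=V_{u,v}q^{\mathrm{LB}}_{u,v}h_{u,v}=\alpha_{u,v}h_{u,v}(t)$. Hence under MW,
\[
\Delta(S(t))\le B(t)-\sum_{u}\bigl(\alpha_{u,s}h_{u,s}(t)+\alpha_{u,v}h_{u,v}(t)\bigr)=B(t)-\sum_u\widehat R_u(t).
\]

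\textbf{Step 2: telescoping.} Take expectations, sum over $t=1,\dots,T$, use $L(S(T+1))\ge0$ and $L(S(1))<\infty$, divide by $T$, and let $T\to\infty$. This gives $\widehat J^{MW}\le\limsup\frac1T\sum_t\mathbb E[B(t)]$. Now bound $B(t)$ using $\ell_u(t)\le L$: $B(t)\le\sum_u V_{u,s}+2\sum_u V_{u,v}$. This crude bound gives $\sum_u\alpha_{u,s}/q^{\mathrm{LB}}_{u,s}+2\sum_u\alpha_{u,v}/q^{\mathrm{LB}}_{u,v}$, which does not yet contain $L_B^\star$ explicitly. To reach \eqref{eq:MW_linear_upper_bound}, I will relate the constant to $L_B^\star$: the objective of \eqref{eq:LB} at the optimum is $\sum_u\frac{\alpha_{u,s}}{2q_{u,s}^{\mathrm{LB}}}+\frac{\alpha_{u,v}}{2q_{u,v}^{\mathrm{LB}}}+\frac12\sum_u(\alpha_{u,s}+(2L-1)\alpha_{u,v})$. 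Thus
\[
\sum_u\frac{\alpha_{u,s}}{q^{\mathrm{LB}}_{u,s}}+2\sum_u\frac{\alpha_{u,v}}{q^{\mathrm{LB}}_{u,v}}=L_B^\star+\sum_u\frac{\alpha_{u,s}}{2q^{\mathrm{LB}}_{u,s}}+\frac32\sum_u\frac{\alpha_{u,v}}{q^{\mathrm{LB}}_{u,v}}-\frac12\sum_u\bigl(\alpha_{u,s}+(2L-1)\alpha_{u,v}\bigr).
\]
Dropping the last nonnegative term gives exactly \eqref{eq:MW_linear_upper_bound}. Dividing by $\widehat J^*\ge L_B^\star$ (Theorem~\ref{thm:lower_bound}) yields \eqref{eq:MW_linear_ratio_bound}.

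\textbf{Main obstacle.} The delicate part is making Step 1 rigorous. The bound \eqref{eq:MW_drift_bound_linear} is stated with $\mathbb E[a\mid S(t)]$, and I must check that the derivation in Appendix~\ref{app:lyapunov_drift_linear} holds for every state, including $\ell_u(t)=L$, where frames are discarded, and $g_u=0$. If the appendix bound carries extra state-dependent terms (for example, through $g_u$), I will handle them by showing they are absorbed into the $\ell_u/L$ part of $B(t)$. A secondary issue is the exchange of $\limsup$ with the telescoped sum, which needs finite initial $L(S(1))$; this holds since $h(1)=1$.
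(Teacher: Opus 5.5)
Your proposal is correct and follows the paper's proof essentially step for step. The shared steps are the stationary randomized comparison policy with marginals $q^{\mathrm{LB}}_{u,s}/p_{u,s}$ and $Lq^{\mathrm{LB}}_{u,v}/p_{u,v}$, Max-Weight dominance turning the negative drift terms into $\sum_u \widehat R_u(t)$, the bound $B(t)\le\sum_u V_{u,s}+2\sum_u V_{u,v}$, telescoping, and the same rewriting of that constant in terms of $L_B^\star$ before dropping $\frac12\sum_u(\alpha_{u,s}+(2L-1)\alpha_{u,v})\ge 0$. The only difference is cosmetic: you realize the marginals by systematic sampling, whereas the paper appeals to membership in the convex hull of the feasible action set. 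The obstacle you flag does not arise, since the case-by-case check of the video potential yields the drift bound with $B(t)$ depending only on $\ell_u(t)$, with no $g_u$ terms.
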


\begin{proof}
First, we compare the Max-Weight policy with a randomized
policy whose scheduling rates are chosen according
to the optimal solution of the lower-bound problem
\eqref{eq:LB}. By the Max-Weight
property, the one-slot Lyapunov drift under Max-Weight is no
larger than the drift under this comparison policy. Then,
choosing the Lyapunov coefficients as in
\eqref{eq:MW_linear_coefficients} makes the negative drift
terms coincide with the surrogate risk. Summing the resulting
drift inequality over time and applying telescoping gives
\eqref{eq:MW_linear_upper_bound}. Combining this upper bound
with \(L_B^\star\le \widehat J^*\) yields
\eqref{eq:MW_linear_ratio_bound}. Detailed derivations are
provided in Appendix~\ref{app:MW_linear_ratio_proof}.
\end{proof}

The performance guarantee applies to the fixed-context
surrogate objective \(\widehat J\), not directly to the
composite operational-risk metric \(R_u(t)\).
Section~\ref{sec:eval} evaluates their empirical alignment.

Theorem~\ref{thm:MW_linear_ratio} provides a parameterized
performance guarantee that explicitly depends on the optimal
value of the lower-bound problem. By using the inequality
\begin{equation}
 L_B^\star
\ge
\frac12
\sum_{u=1}^{N}
\left(
\frac{\alpha_{u,s}}{q_{u,s}^{\mathrm{LB}}}
+
\frac{\alpha_{u,v}}{q_{u,v}^{\mathrm{LB}}}
\right),   
\end{equation}
the ratio in
\eqref{eq:MW_linear_ratio_bound}
can be further bounded as
\begin{equation}
 \psi^{MW}
\le
1+
\frac{
\displaystyle
\sum_{u=1}^{N}
\frac{\alpha_{u,s}}{2q_{u,s}^{\mathrm{LB}}}
+
\frac32
\sum_{u=1}^{N}
\frac{\alpha_{u,v}}{q_{u,v}^{\mathrm{LB}}}
}{
\displaystyle
\frac12
\sum_{u=1}^{N}
\left(
\frac{\alpha_{u,s}}{q_{u,s}^{\mathrm{LB}}}
+
\frac{\alpha_{u,v}}{q_{u,v}^{\mathrm{LB}}}
\right)
}
\le
4.   
\end{equation}
Therefore, the proposed Max-Weight policy admits a
worst-case optimality ratio no larger than 4. Since this
bound is obtained by further relaxing the denominator of
\eqref{eq:MW_linear_ratio_bound}, it is conservative, and the
parameterized bound in Theorem~\ref{thm:MW_linear_ratio}
provides a tighter characterization whenever the value of
\(L_B^\star\) is available.

\section{Evaluation}
\label{sec:eval}

We evaluate RA against two conventional 5G schedulers, namely Round-Robin (RR) and proportional fair (PF), and against the AoI-minimizing policy developed in~\cite{zhao2024large} that shares RA's Max-Weight structure and update model but does not use operational context. 
RA uses the Max-Weight policy of \S~\ref{sec:policy} with Lyapunov coefficients $V_{u,s}$ and $V_{u,v}$ from~\eqref{eq:MW_linear_coefficients} and surrogate-risk coefficients $\alpha_{u,s}$ and $\alpha_{u,v}$ from~\eqref{eq:alpha-status-definition} and~\eqref{eq:alpha-video-definition}, respectively, that are computed based on the operational context, e.g., UAV location and population exposure. 

As discussed in \S~\ref{sec:intro}, RA obtains operational context from 5G core functions, such as the LMF and NWDAF, and from external sources, such as geographic databases and UAV reports. In general, operational context and the resulting Lyapunov coefficients $V_{u,s}$ and $V_{u,v}$ evolve on a slower timescale (seconds/minutes) than RA's status and video scheduling weights $W_{u,s}(t)$ and $W_{u,v}(t)$ from~\eqref{eq:MW_status_weight_linear} and~\eqref{eq:MW_video_weight_linear}, respectively, that may change in every time slot ($\leq 1$ ms). 

The analysis in Secs.~\ref{sec:stale-to-risk} and~\ref{sec:policy} considers a fixed operational context to establish the lower bound and performance guarantee for each context realization. In the evaluation, we allow the operational context to evolve along each UAV trajectory and assume that the gNB observes accurate and timely context information which is used to (re)compute RA's scheduling weights $W_{u,s}(t)$ and $W_{u,v}(t)$ in every time slot. 
%
The evaluation attempts to answer three questions:



\noindent\textbf{Q1: Does RA reduce risk exposure?}
In \S~\ref{sec:eval-risk}, we show that RA 
reduces 99th-percentile per-UAV risk $R_u(t)$ by up to $96\%$ over PF, $86\%$ over RR, and $83\%$ over AoI. 

\noindent\textbf{Q2: Does RA prioritize operationally critical UAVs?}
  In \S~\ref{sec:eval-sora}, we label operations offline with SORA's
  SAIL and show that, under contention,RA schedules a median $2.4\times$ more service per critical UAV
        than per regular UAV, while RR, PF, and AoI
  remain class-blind.


\noindent\textbf{Q3: How much operational capacity does RA enable?}
In \S~\ref{sec:eval-capacity}, we show that, within the same risk envelope, RA supports up to $1.5\times$ as many UAVs as PF, $3\times$ as many as RR, and $2$–$3\times$ as many as AoI. 

\subsection{Evaluation Setup}
\label{sec:eval-methodology}

\begin{figure}[t]
    \centering
    \includegraphics[width=\columnwidth]{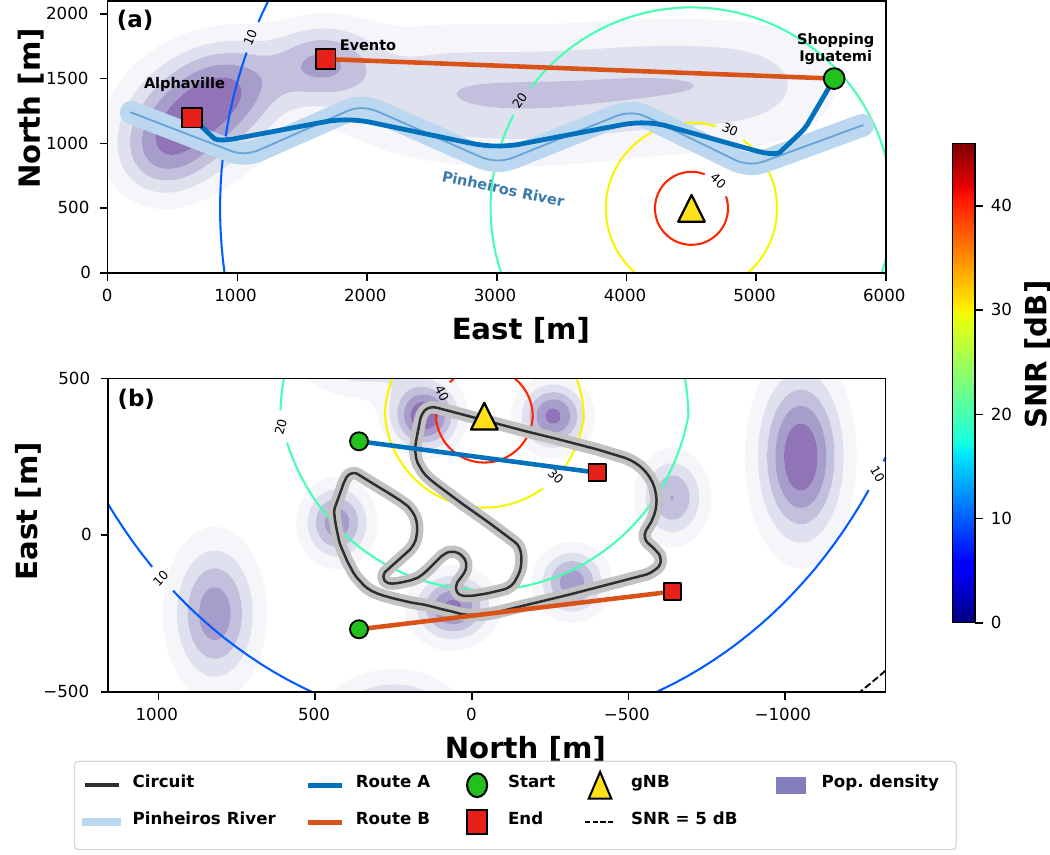}      
    \vspace{-0.4cm}
    \caption{Evaluation scenarios in S\~ao Paulo.
    Colors show the gNB-centered iso-SNR contours, the dashed contour
    marks the 5-dB outage boundary. Shaded regions represent
    population exposure. (a)~Delivery corridors near Shopping
    Iguatemi. (b)~Interlagos Formula~1.}
    \label{fig:scenarios}
    \vspace{-0.6cm}
\end{figure}

\noindent\textbf{Scenarios.}
We consider two scenarios derived from real operations
in S\~ao Paulo. Figure~\ref{fig:scenarios} overlays UAV
trajectories with iso-SNR contours and ground-exposure densities used in
the evaluation. The delivery scenario in Fig.~\ref{fig:scenarios}(a) contains two UAV corridors
with different combinations of coverage and population
exposure. The event scenario in Fig.~\ref{fig:scenarios}(b) follows the \(4.3\)-km
Interlagos circuit, where spectator hotspots create strongly
nonuniform exposure along the circuit. 
%
%
%
Appendix~\ref{app:scenarios-data} documents the public
infrastructure and operational data used to ground these
scenarios. 

\noindent\textbf{Network, traffic, and simulator.}
We implement a slot-level 5G NR uplink simulator in MATLAB to integrate the domains that determine communication-induced operational risk. On a common slot-level timeline, the simulator synchronizes UAV motion, position-dependent radio conditions, status and video traffic, packet delivery, information staleness, geographic population exposure, and scheduling decisions. UAV mobility is cross-checked against UAV Toolbox~\cite{toolboxuav}, while radio propagation is driven by 5G Toolbox~\cite{toolbox5g} models. This integration allows the operational consequences of each network decision to evolve with both the aircraft trajectory and the communication state. It also enables controlled comparisons in which all policies experience identical mobility, channel, traffic, and transmission-outcome realizations.

Each slot lasts \(1\)~ms, and the gNB provides at most
\(K\in\{1,2,4,8,16,32\}\) transmission opportunities to
$N\in\{5,10,20,$ $40,60,80,100,120\}$ concurrent UAVs,
yielding 48 operating points. For each of the four scheduling policies, we simulate every operating point for at least an $900$~s interval, during which all \(N\) UAVs simultaneously contend for the uplink channel. To ensure that different scheduling policies experience the same trajectories, channel states, and transmission outcomes, we use the same seed value.

UAVs fly at \(15\)~m/s from the start to the end of the corridors depicted in Fig.~\ref{fig:scenarios}. Each UAV carries the C2 status update and video streams. Status updates are generated with a frequency of 10~Hz and require a single successful packet transmission to be delivered to the gNB. Video frames are generated according to the realistic UAV traffic generator from \cite{aviator2021} with 
\(30\)~fps and a $15$-frame Group of Pictures (GOP) structure with one \(30\)-packet I-frame followed by fourteen \(3\)-packet P-frames. It follows that refreshing video may consume multiple transmission opportunities. While the number of packets per video frame changes over time, both RA and AoI use a fixed $L=10$ in the experiments below.

UAV dynamics are simulated using the UAV toolbox~\cite{toolboxuav}. 5G wireless channels are simulated using the 5G toolbox~\cite{toolbox5g} at \(3.5\)~GHz with a \(20\)-MHz bandwidth and an urban-macro channel model. UAV links with SNR below \(5\)~dB are in outage, i.e., $p_{u,s}=p_{u,v}=0$. Otherwise, scheduled packets succeed with probability \(p_{u,s}=p_{u,v}=0.8\). Additional toolbox parameters are provided in Appendix~\ref{app:toolbox}. 




\noindent\textbf{Simulator fidelity.} OAI provides the protocol-stack fidelity needed to demonstrate the scheduling blind spot and verify the resource-allocation behavior central to this study. MATLAB provides the cross-domain integration, experimental control, and scale needed to evaluate how that behavior affects operational risk across moving aircraft, geographic exposure conditions, traffic types, fleet sizes, and resource budgets. Because the complete \(K\times N\) sweep is impractical to reproduce with the full 5G stack, we validate the MATLAB abstraction against the OAI implementation used in Fig.~\ref{fig:blindspot} under matched saturated-load experiments using proportional-fair scheduling.

Specifically, we test symmetric channels, in which all UEs see
  the same radio conditions, with \(N\in\{2,3,4\}\), and
  asymmetric channels with \(N\in\{2,4\}\), in which half of the
  UEs are attenuated by about \(8\,\mathrm{dB}\) (MCS 20 versus
  28). Under symmetric channels, OAI and the simulator both give
  each UE the expected \(1/N\) share of uplink grants. Under
  asymmetric channels, PF equalizes time rather than bytes:
  degraded UEs deliver fewer bytes, but receive nearly the same
  grant share in both systems. Across all configurations, the
  maximum total-variation distance between their grant-share
  distributions is \(0.021\), about two percentage points. Thus,
  the simulator preserves the uplink allocation behavior that RA
  modifies.

\subsection{Does RA Reduce Risk Exposure?}
  \label{sec:eval-risk}
We first examine whether minimizing information age also minimizes the operational consequences of stale information. We compare the policies using two metrics: (i) 99th-percentile risk exposure ($p_{99,u}[\bar R_u]$), where $\bar R_u$ captures the time-average of the operational risk $R_u(t)$ for a particular UAV; and (ii) Fleet-Mean Age of Information ($\bar h$), where $\bar h=(\bar h_s+\bar h_v)/2,$ with $\bar h_s$ representing the C2 status update AoI $h_{u,s}(t)$ averaged over time and across all UAVs, and $\bar h_v$ representing the video frame AoI $h_{u,v}(t)$ averaged over time and across all UAVs. Figure~\ref{fig:risk-cost} places each policy on this 
  freshness--risk plane for three load regimes $\rho = N/K$, aggregating the operating points from both evaluation scenarios. The dashed line marks the communication-induced operational-risk envelope $p_{99,u}[\bar R_u]=1$. 

Figure~\ref{fig:risk-cost} shows that freshness and risk induce different policy rankings. The AoI policy achieves the lowest fleet-mean age in every load regime, yet it does not achieve the lowest tail risk. Even under light load, its risk exposure approaches the envelope. The reason is that AoI does not account for the operational consequence of staleness. It may therefore refresh an UAV whose information is old but operationally tolerable while deferring one whose growing uncertainty intersects traffic or population exposure. PF improves upon RR in both freshness and risk, showing that uniform service is inefficient. RA changes the allocation objective more fundamentally: it directs service toward the UAVs for which additional delay consumes the operational-risk margin most rapidly. This distinction becomes most consequential under moderate load and overload, where resources must be withheld from some UAVs.Across the complete 48-point evaluation in Fig.~\ref{fig:deep-ra},     RA attains the lowest $p_{99,u}[\bar R_u]$ at every operating point, with reductions of up to $96\%$ relative to PF, $86\%$ relative to
        RR, and $83\%$ relative to AoI.

  

  \begin{figure}[!t]
      \centering
      \includegraphics[width=\columnwidth]{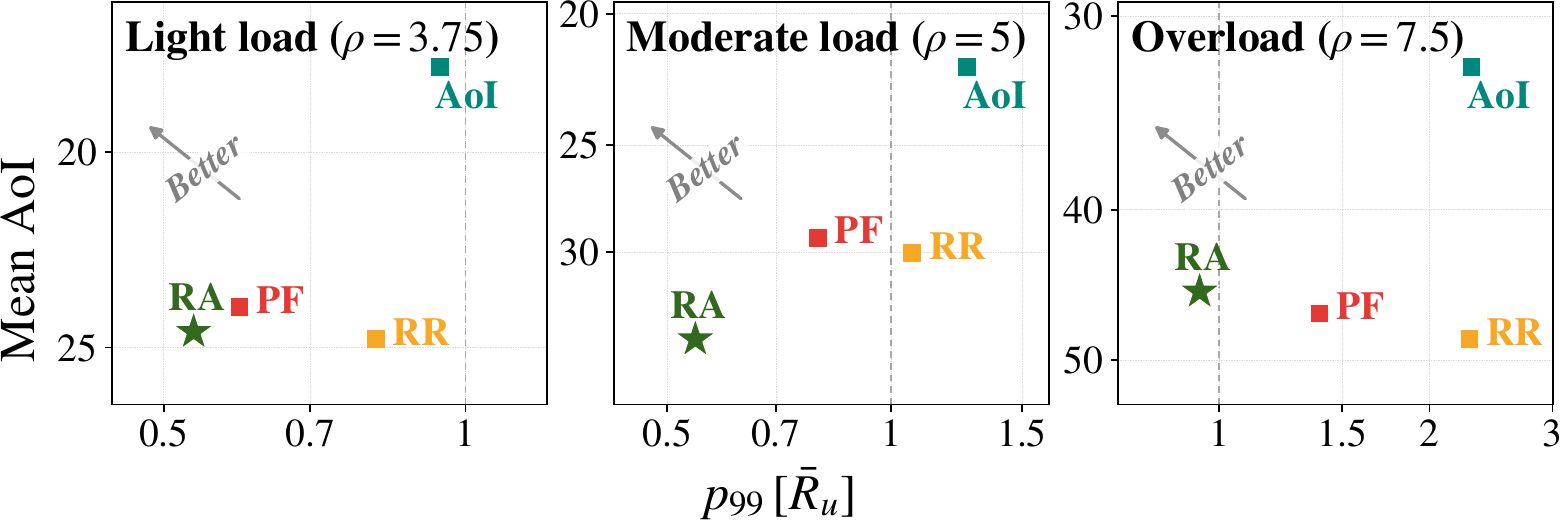}
      \vspace{-0.5cm}
     \caption{Each panel reports the fleet-mean AoI, $\bar h$, and the 99th percentile of time-average per-UAV operational risk, $p_{99,u}[\bar R_u]$, aggregated across both evaluation scenarios for operating points in the indicated load regime, $\rho = N/K$. Lower values are preferable on both axes. The dashed line denotes the communication-induced operational-risk envelope. A complete evaluation across all 48 operating points can be found in Fig.~\ref{fig:deep-ra} of Appendix~\ref{app:deep-ra}.}
      \label{fig:risk-cost}
      \vspace{-0.6cm}
  \end{figure}



  \noindent\textbf{Takeaway.} Minimizing average AoI is not sufficient for risk-sensitive UAV communication because equal staleness can have unequal operational consequences. Under contention, the relevant scheduling decision is not simply which information is oldest, but where an additional unit of delay consumes the operational-risk margin fastest. RA exploits this distinction by accepting a controlled increase in average AoI while substantially reducing tail risk. 

\subsection{Does RA Prioritize Operationally Critical UAVs?}
  \label{sec:eval-sora}


Section~\ref{sec:eval-risk} showed that RA reduces tail risk by directing freshness toward the UAV for which additional staleness is most consequential. We now test whether this behavior aligns with an external measure of operational criticality rather than only with RA’s own risk metric. We use SORA’s SAIL as this independent reference. SAIL classes I through VI (discussed in Appendix~\ref{app:sail-allocation}) rank operations according to the rigor required by their operational risk, with higher classes generally corresponding to more consequential operating conditions. We assign SAIL offline and do not expose it to RA or any baseline. 
In each slot $t$, we partition the UAVs according to whether they are operating under SAIL-II or higher-SAIL conditions. For UAV $u$, let $\overline{\theta}_{\mathrm{u,SAIL\text{-}II}}$ denote the fraction of slots in which UAV $u$ is scheduled among the slots in which it is classified as SAIL-II. We then define $\overline{\theta}_{\mathrm{SAIL\text{-}II}}$ as the mean of $\overline{\theta}_{\mathrm{u,SAIL\text{-}II}}$ across UAVs. We proceed analogously to define $\overline{\theta}_{\mathrm{u,Higher\text{-}SAIL}}$ and $\overline{\theta}_{\mathrm{Higher\text{-}SAIL}}$. 
%
%
  %
We then define the scheduling-alignment ratio $\Gamma_{\mathrm{sched}}
  = \overline{\theta}_{\mathrm{Higher\text{-}SAIL}}
  / \overline{\theta}_{\mathrm{SAIL\text{-}II}}$. A value of $\Gamma_{\mathrm{sched}}=1$ denotes class-blind scheduling, and $\Gamma_{\mathrm{sched}}>1$ indicates that the scheduler directs more service per UAV toward higher-SAIL operations. 
  

  Figure 5 reports $\Gamma_{\mathrm{sched}}$ for the 42 operating points with $N>K$, aggregated across both evaluation scenarios. At these points, the available transmission opportunities are insufficient to serve every UAV in each slot, so the policies must decide which UAV to defer. We exclude the remaining six points because $N\leq K$ allows every UAV to be scheduled simultaneously. Under those conditions, differences in $\Gamma_{\mathrm{sched}}$ primarily reflect traffic availability rather than prioritization.

  \begin{figure}[t]
  \centering
  \includegraphics[width=\columnwidth]{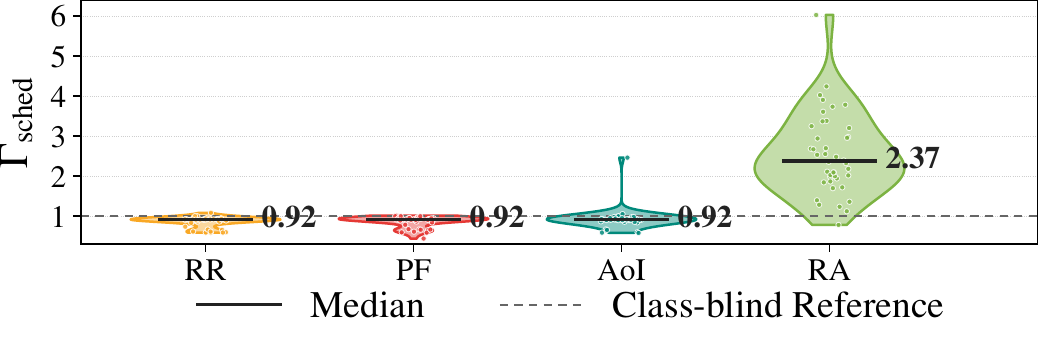}
  \vspace{-0.5cm}
  \caption{For each of the 42 contended operating points with $N>K$, aggregated across both evaluation scenarios, $\Gamma_{\mathrm{sched}}$ compares the mean per-UAV scheduling probability under higher-SAIL conditions with that under SAIL-II conditions. Each point represents one operating point, bars denote medians, and the dashed line at $\Gamma_{\mathrm{sched}}=1$ denotes class-blind allocation.}
  \label{fig:sail-gamma-violin}
  \vspace{-0.6cm}
  \end{figure}



Figure~\ref{fig:sail-gamma-violin} shows that the conventional and age-based baselines remain close to class-blind allocation. RR, PF, and AoI each have a median $\Gamma_{\mathrm{sched}}=0.92$, near the class-blind reference of one. Because these schedulers do not observe SAIL, we interpret the modest departure from one as scenario-induced variation rather than evidence of systematic SAIL class preference. RA exhibits a clearly different allocation pattern. Its median $\Gamma_{\mathrm{sched}}=2.37$ indicates that, under contention, a UAV operating above SAIL-II receives more than twice the scheduling probability of a UAV operating at SAIL-II. The ratio exceeds three at most operating points with $K\leq 2$ and reaches six at $K=1$, $N=100$. This aggregate difference is not explained solely by higher-SAIL UAVs having older information or different channel conditions. In the controlled scenario in Appendix~\ref{app:sail-allocation}, where candidate UAVs have comparable AoI and radio state, RA selects the higher-SAIL operation in 64\% of matched decisions (median over operating points), against 50\% for the baselines. Moreover, Fig.~\ref{fig:sail-gamma-violin} shows that the allocation ratio generally increases with contention, indicating that RA’s operational context becomes most influential precisely when the scheduler must make more selective deferral decisions. RA produces this alignment without observing SAIL, which is used only as an offline evaluation label.


\noindent\textbf{Takeaway.} RA’s risk reduction is explained by where it places scarce transmission opportunities, not by uniformly improving freshness. Under contention, RA assigns more than the per-UAV scheduling probability to higher-SAIL operations, and this preference persists when comparing UAVs with similar AoI and radio conditions. Because SAIL is neither an input to RA nor part of its optimization objective, this alignment provides an independent check of the abstractions leading to the scheduler. Although it does not validate each modeling assumption individually, it indicates that the assumptions used to define the composite operational-risk metric, $R_u(t)$, derive its tractable surrogate, $\widehat R_u(t)$, formulate the scheduling optimization problem in~\eqref{eq:surrogate_optimization}, and derive RA collectively preserve the operational distinctions relevant to scheduling. The resulting network decisions are therefore consistent with an independent operation-level risk classification. 

\subsection{How Much Operational Capacity Does RA Enable?}
\label{sec:eval-capacity}

We next translate RA’s reduction in operational-risk into communication-constrained operational capacity. For a policy $\pi$ scheduling up to $K$ UAV transmissions in each slot, we define the operational capacity $N_{\pi}^{\star}(K)$ as the largest evaluated fleet for which every UAV remains within the communication-induced operational-risk envelope, namely $N_{\pi}^{\star}(K)=\max\{N:\max_u \bar R_u\le1 \text{ at }(K,N)\}$.
We use the worst-UAV criterion because admitting a fleet based on its average risk could conceal an individual UAV that persistently exceeds the envelope. 

Figure~\ref{fig:max-risk} reports \(\max_u \bar R_u\)  for every $(N,K)$ pair, aggregated across both evaluation scenarios. Values at or below one identify the admissible region for each policy.



RA produces the largest admissible region when the channel budget is constrained but still sufficient for prioritization to affect the risk boundary. At $K=8$, RA supports 60 simultaneous UAVs, compared with 40 under PF and 20 under RR and AoI. At $K=16$, RA supports the full evaluated fleet of 120 UAVs, compared with 80 under PF and 60 under RR and AoI. Thus, within the evaluated range, RA supports up to $1.5\times$ as many UAVs as PF, $3\times$ as many as RR, and $2$ to $3\times$ as many as AoI while enforcing the same communication-induced operational-risk envelope. These gains arise without increasing the channel budget. RA instead changes which UAVs and information streams receive the available transmission opportunities.  A complete evaluation across all 48 operating points can be found in Appendix~\ref{app:deep-ra}.


Section~\ref{sec:eval-risk} used $p_{99,u}[\bar R_u]$ to compare persistent tail exposure, whereas operational capacity applies the stricter condition \(\max_u \bar R_u\leq 1\) to ensure that no individual UAV exceeds the envelope. This stricter criterion does not reduce the measured capacity on the evaluated grid. Replacing $\max_u \bar R_u$ with $p_{99,u}[\bar R_u]$ yields the same admissible $(K,N)$ region for every policy. Within that region, the worst-UAV risk exceeds the 99th percentile by only 1\% on median and by at most 13\%. The capacity gains are therefore not obtained by improving most UAVs while leaving a single aircraft substantially more exposed. Over the evaluated operating points, RA enlarges the admissible fleet under both a fleet-tail criterion and an individual-UAV criterion.


\begin{figure}
  \centering
  \includegraphics[width=\columnwidth]{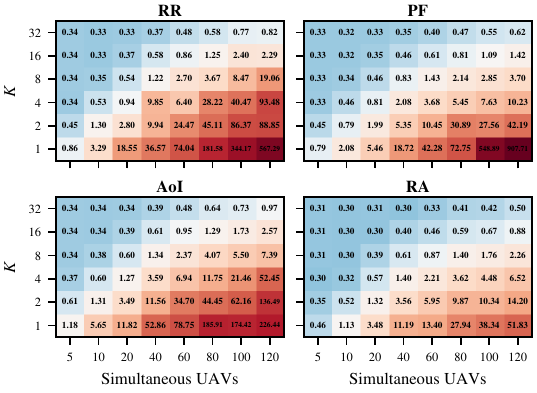}

  \vspace{-0.2cm}
  \caption{
  Each cell reports \(\max_u \bar R_u\) for one \((K,N)\) operating
  point. Values at or below one satisfy the communication-induced operational-risk envelope and admit the corresponding evaluated fleet. Values above one violate the envelope.
  }
  \label{fig:max-risk}
  \vspace{-0.5cm}
  \end{figure}

Figure~\ref{fig:max-risk} also separates failures of resource allocation from resource insufficiency. When a baseline exceeds the risk envelope but RA remains admissible, the available radio resources are sufficient, but the baseline allocates them inconsistently with operational consequence. When every policy exceeds the envelope, the channel budget itself may be insufficient for the offered fleet. In this regime, the appropriate response may be to increase communication capacity (e.g., by expanding the bandwidth of the UAV network slice) or reduce traffic demand through UAV departure spacing, speed adjustment, or rerouting. Communication-constrained operational capacity therefore provides an interface between millisecond-scale scheduling and longer-timescale traffic-management decisions. Appendix~\ref{app:operational-interface} discuss how it could be used in air traffic control and air traffic flow management.

  \noindent\textbf{Takeaway.} Operational capacity depends not only on the amount of available radio resources, but also on whether those resources are allocated according to the operational consequences of delay. Within the allocation-sensitive regime, RA supports $1.5$ to $3\times$ larger evaluated fleets than the baselines under the same communication-induced operational-risk envelope. This gain requires no additional spectrum or transmission opportunities. It follows from redirecting service toward the aircraft and streams for which deferral is most consequential. However, risk-aware allocation cannot overcome an aggregate service deficit. Once every policy exceeds the envelope, maintaining admissibility requires additional communication capacity or traffic-management actions that reduce simultaneous demand.

\section{Related Work}
\label{sec:related}

\noindent\textbf{Cellular networks and operational risk.}
Prior work on cellular-connected UAVs characterizes aerial
propagation, interference, mobility, handovers, and three-dimensional
coverage, and optimizes association, trajectory planning, and
radio-resource management
\cite{zeng2019cellular,liu2026lowaltitude,liu5gsky}.
Connectivity-aware planners further use radio maps or predicted
outages to adapt the UAV trajectory
\cite{5g_ground_risk,ouahouah2024}.
3GPP QoS, slicing, and reservation mechanisms differentiate traffic
through rates, latency, reliability, or service classes
\cite{foukas2017orion,zhang2026smec}.
These approaches improve connectivity, but do not determine which
aircraft is most costly to delay when they
contend for the same uplink.

A parallel body of UAS safety work models ground exposure from
population and impact footprints and air risk from navigation
uncertainty and separation
\cite{asghari2025,ground_risk_dynamic,kim2022,rtcaDO365}.
SORA and related assurance frameworks further make C3 performance
part of the operational safety case
\cite{sora2024,jarus_sora25_annexf,ed267}.
These methods assess or constrain operations, but generally treat
communication performance as an input. RA instead turns these
operational quantities into a network-computable risk-driven metric used
online for radio-resource allocation.

\noindent\textbf{Freshness- and context-aware scheduling.}
Age of Information motivates scheduling policies that minimize
average, weighted, or peak staleness under wireless constraints
\cite{kaul2012real,kadota2018,kadota2019throughput,
sun2017,huang2015,zhao2025}, while recent work handles updates 
requiring multiple transmission opportunities
\cite{zhao2024large}. 
Closest to RA, Urgency of Information (UoI) extends AoI-type 
timeliness by combining a time-varying context weight with
estimation-error cost and uses this urgency for multi-terminal
status-update scheduling~\cite{Zheng2020uoi}. 
UoI therefore establishes that context can matter beyond age alone.
However, it assumes the context weight as an input to the model. 
The paper identifies choosing practical context weights as
an open problem. 
update supersedes older status information.

RA addresses a different systems question: how a 5G scheduler can
derive such prioritization from an aviation operational-safety model.
We map network-observable freshness and aircraft context to
normalized navigation, ground, and video-coherence exposure
\cite{aviator2021,baltaci2022,mcguire2017}, and jointly schedule
single-slot status updates and multi-slot video objects. The resulting
risk envelope further translates this online allocation into
communication-constrained operational capacity.

\section{Conclusion}
\label{sec:conc}
Low-altitude aviation exposes a limitation of conventional cellular scheduling: network-centric metrics do not capture the operational consequences of delaying information from different aircraft. We introduce a network-to-operations abstraction that translates information freshness and operational context into aircraft-specific operational risk, enabling the network to evaluate not only how efficiently a transmission can be served, but also the consequence of making it wait. We instantiate this abstraction through RA scheduling, which prioritizes the aircraft and information streams for which additional delay is most consequential while accounting for heterogeneous transmission costs. Although RA is derived from an analytically tractable model, it is evaluated under moving UAV trajectories, dynamic operational context, geographically varying population exposure, and realistic UAV video traffic. Across 48 operating points and real S\~ao Paulo operating scenarios, RA substantially reduces communication-induced operational risk and increases the number of UAVs supportable within the same risk envelope. More broadly, this work illustrates how future communication systems supporting low-altitude aviation can incorporate operational consequence directly into radio-resource allocation.


\bibliographystyle{ACM-Reference-Format}
\bibliography{reference}

@inproceedings{kaul2012real,
  author    = {Sanjit Kaul and Roy D. Yates and Marco Gruteser},
  title     = {Real-Time Status: How Often Should One Update?},
  booktitle = {Proc. IEEE INFOCOM'12},
  year      = {2012},
  pages     = {2731--2735}
}

@article{kadota2018,
  author  = {Igor Kadota and Abhishek Sinha and
             Elif Uysal-Biyikoglu and Rahul Singh and Eytan Modiano},
  title   = {Scheduling Policies for Minimizing Age of Information
             in Broadcast Wireless Networks},
  journal = {IEEE/ACM Transactions on Networking},
  volume  = {26},
  number  = {6},
  year    = {2018},
  pages   = {2637--2650}
}

@article{kadota2019throughput,
  author  = {Igor Kadota and Abhishek Sinha and Eytan Modiano},
  title   = {Scheduling Algorithms for Optimizing Age of Information
             in Wireless Networks with Throughput Constraints},
  journal = {IEEE/ACM Transactions on Networking},
  volume  = {27},
  number  = {4},
  year    = {2019},
  pages   = {1359--1372}
}

@inproceedings{zhao2025,
  author    = {Zhuoyi Zhao and Igor Kadota},
  title     = {Optimizing Age of Information without Knowing the Age of Information},
  booktitle = {Proc. IEEE INFOCOM'25},
  year      = {2025},
  pages     = {1--10}
}

@inproceedings{zhao2024large,
  author    = {Zhuoyi Zhao and Vishrant Tripathi and Igor Kadota},
  title     = {Optimizing Age of Information in Networks with Large and Small Updates},
  booktitle = {Proc. IEEE WiOpt'25},
  year      = {2025},
  pages     = {1--8}
}

@article{sun2017,
  author  = {Yin Sun and Elif Uysal-Biyikoglu and Roy D. Yates and C. Emre Koksal and Ness B. Shroff},
  title   = {Update or Wait: How to Keep Your Data Fresh},
  journal = {IEEE Trans. Inf. Theory},
  volume  = {63},
  number  = {11},
  pages   = {7492--7508},
  month   = nov,
  year    = {2017}
}

@inproceedings{huang2015,
  author    = {Longbo Huang and Eytan Modiano},
  title     = {Optimizing Age-of-Information in a Multi-Class Queueing System},
  booktitle = {Proc. IEEE ISIT'15},
  year      = {2015},
  pages     = {1681--1685}
}

@inproceedings{baltaci2022,
  author    = {Ayg{\"u}n Baltaci and Hendrik Cech and Nitinder Mohan and Fabien Geyer and Vaibhav Bajpai and J{\"o}rg Ott and Dominic Schupke},
  title     = {Analyzing Real-Time Video Delivery over Cellular Networks for Remote Piloting Aerial Vehicles},
  booktitle = {Proc. ACM IMC'22},
  year      = {2022},
  pages     = {98--112}
}

@inproceedings{aviator2021,
  author    = {Ayg{\"u}n Baltaci and Markus Kl{\"u}gel and Fabien Geyer and Svetoslav Duhovnikov and Vaibhav Bajpai and J{\"o}rg Ott and Dominic Schupke},
  title     = {Experimental {UAV} Data Traffic Modeling and Network Performance Analysis},
  booktitle = {Proc. IEEE INFOCOM'21},
  year      = {2021},
  pages     = {1--10}
}

@article{asghari2025,
  author    = {Omid Asghari and Naghmeh Ivaki and Henrique Madeira},
  title     = {{UAV} Operations Safety Assessment: A Systematic Literature Review},
  journal   = {ACM Comput. Surv.},
  volume    = {57},
  number    = {10},
  articleno = {266},
  numpages  = {37},
  year      = {2025}
}

@article{alipour2024datmo,
  author  = {MReza Alipour Sormoli and Mehrdad Dianati and Sajjad Mozaffari and Roger Woodman},
  title   = {Optical Flow Based Detection and Tracking of Moving Objects for Autonomous Vehicles},
  journal = {IEEE Trans. Intell. Transp. Syst.},
  volume  = {25},
  number  = {9},
  pages   = {12578--12590},
  year    = {2024}
}

@inproceedings{ouahouah2024,
  author    = {Sihem Ouahouah and Mohammed Lahouari Harchaoui and Oussama Bekkouche and Miloud Bagaa and Riku J{\"a}ntti},
  title     = {{5G}-Based Autonomous Ground Risk Mitigation for {UAVs}},
  booktitle = {Proc. IEEE ICC'25},
  year      = {2025},
  pages     = {1359--1364}
}

@inproceedings{5g_ground_risk,
  author    = {Mohammed Lahouari Harchaoui and Sihem Ouahouah and Oussama Bekkouche and Miloud Bagaa and Abir Derouiche},
  title     = {{5G}-Based Ground Risk Mitigation for {UAVs}: A Deep Reinforcement Learning Approach},
  booktitle = {Proc. IEEE GLOBECOM'24},
  pages     = {1539--1544},
  year      = {2024}
}

@techreport{sora2024,
  author       = {{JARUS WG-SRM}},
  title        = {{JARUS Guidelines on Specific Operations Risk Assessment (SORA), Edition 2.5}},
  institution  = {Joint Authorities for Rulemaking on Unmanned Systems (JARUS)},
  year         = {2024},
  number       = {JAR-DEL-SRM-SORA-MB-2.5}
}

@techreport{ed253,
  author       = {{EUROCAE WG-105}},
  title        = {{ED-253: Operational Services and Environment Definition (OSED) 
                   for Automation and Emergency Recovery for UAS}},
  institution  = {EUROCAE},
  year         = {2019}
}

@techreport{ed267,
  author       = {{EUROCAE WG-105}},
  title        = {{ED-267: Operational Services and Environment Definition (OSED) 
                   for Detect and Avoid in Very Low-Level Operations}},
  institution  = {EUROCAE},
  year         = {2020}
}

@techreport{ed272,
  author       = {{EUROCAE WG-105}},
  title        = {{ED-272: Minimum Aviation Systems Performance Standards for Remote 
                   Pilot Stations Conducting IFR Operations in Controlled Airspace}},
  institution  = {EUROCAE},
  year         = {2020},
}

@techreport{faaConops,
  author       = {{Federal Aviation Administration}},
  title        = {{Unmanned Aircraft System (UAS) Traffic Management (UTM) 
                   Concept of Operations, Version 2.0}},
  institution  = {FAA},
  year         = {2020}
}

@techreport{sesarConops,
  author       = {{SESAR Joint Undertaking}},
  title        = {{U-space Concept of Operations (ConOps), Fourth Edition}},
  institution  = {SESAR Joint Undertaking},
  year         = {2023}
}

@techreport{rtcaDO365,
  author       = {{RTCA SC-228}},
  title        = {{DO-365C: Minimum Operational Performance Standards (MOPS) 
                   for Detect and Avoid (DAA) Systems}},
  institution  = {RTCA},
  year         = {2022}
}

@inproceedings{kafetzis2025,
  author    = {Dimitrios Kafetzis and Nikos Fotiou and Savvas Argyropoulos and Jad Nasreddine and Iordanis Koutsopoulos},
  title     = {Video Quality Monitoring for Remote Autonomous Vehicle Control},
  booktitle = {Proc. IEEE MeditCom'25},
  pages     = {1--7},
  year      = {2025}
}

@article{mcguire2017,
  author  = {Kimberly McGuire and Guido {de Croon} and Christophe {De Wagter} and Karl Tuyls and Hilbert Kappen},
  title   = {Efficient Optical Flow and Stereo Vision for Velocity Estimation and Obstacle Avoidance on an Autonomous Pocket Drone},
  journal = {IEEE Robotics and Automation Letters},
  volume  = {2},
  number  = {2},
  pages   = {1070--1076},
  year    = {2017}
}

@article{kim2022,
  author  = {Younsil Kim and Joongwon Bae},
  title   = {Risk-Based {UAV} Corridor Capacity Analysis above a Populated Area},
  journal = {Drones},
  volume  = {6},
  number  = {9},
  pages   = {221},
  year    = {2022}
}

@techreport{3gpp_tr38857,
  author       = {{3GPP}},
  title        = {{Study on NR Positioning Support}},
  institution  = {3rd Generation Partnership Project (3GPP)},
  type         = {Technical Report (TR)},
  number       = {38.857},
  year         = {2021},
  note         = {Release 17, V17.0.0},
  url          = {https://www.3gpp.org/ftp/Specs/archive/38_series/38.857/}
}

@techreport{USDOT_AAM_National_Strategy_2025,
  title        = {Advanced Air Mobility National Strategy 2025},
  author       = {{U.S. Department of Transportation}},
  institution  = {U.S. Department of Transportation},
  year         = {2025}
}

@techreport{USDOT_AAM_Comprehensive_Plan_2025,
  title        = {Advanced Air Mobility Comprehensive Plan 2025},
  author       = {{U.S. DoT}},
  institution  = {U.S. Department of Transportation},
  year         = {2025}
}

@article{ground_risk_dynamic,
  author  = {Stefano Primatesta and Alessandro Rizzo and Anders la Cour-Harbo},
  title   = {Ground Risk Map for Unmanned Aircraft in Urban Environments},
  journal = {Journal of Intelligent and Robotic Systems},
  year    = {2020},
  volume   = {97},
  number   = {3--4},
  pages    = {489--509}
}

@misc{keeta_dubai,
  author       = {{KrASIA}},
  title        = {Meituan's Drone Delivery Debuts in {Dubai}, Eyeing a Future in Last-Mile Logistics},
  howpublished = {\url{https://kr-asia.com/meituans-drone-delivery-debuts-in-dubai-eyeing-a-future-in-last-mile-logistics}},
  year         = {2024},
  month        = dec
}

@misc{anac_speedbird,
  author       = {{sUAS News}},
  title        = {Delivery Drones Cleared for Flight over Populated Cities in {Brazil}},
  howpublished = {\url{https://www.suasnews.com/2026/03/delivery-drones-cleared-for-flight-over-populated-cities-in-brazil/}},
  year         = {2026},
  month        = mar
}

@techreport{dca351-6,
  author      = {{DECEA}},
  title       = {{Concepção Operacional UTM Nacional}},
  institution = {{Comando da Aeronáutica, Departamento de Controle do Espaço Aéreo}},
  type        = {{Diretriz do Comando da Aeronáutica}},
  number      = {{DCA 351-6}},
  address     = {Rio de Janeiro, Brasil},
  year        = {2022},
  url         = {https://publicacoes.decea.mil.br/publicacao/dca-351-6},
  urldate     = {2026-07-22}
}

@article{liu5gsky,
  author    = {Yanbing Liu and Jingqi Huang and Ziyu Li and Chunyi Peng},
  title     = {{5G} in the Sky: Uplink Throughput Measurement, Analysis, and Enhancement},
  journal   = {IEEE Transactions on Networking},
  volume    = {34},
  pages     = {941--956},
  year      = {2026},
  publisher = {IEEE}
}

@inproceedings{liu2026lowaltitude,
  author    = {Xinzhe Liu and Jianer Zhou and Xiaoyong Ni and Ke Luo and Zhenyu Li and Xiaofeng Tao and Weichao Li},
  title     = {Unveiling Low-Altitude {5G} Performance: Linking Key Influencing Factors with {UAV} Flight Parameters},
  booktitle = {Proceedings of the ACM SIGCOMM 2026 Conference},
  pages     = {317--333},
  year      = {2026},
  publisher = {Association for Computing Machinery},
  address   = {New York, NY, USA},
  location  = {Denver, CO, USA},
  isbn      = {979-8-4007-2467-1}
}

@misc{meituan,
  author       = {China Daily},
  title        = {Meituan First Nationwide Low-Altitude Operating License},
  howpublished = {\url{https://www.chinadaily.com.cn/a/202504/22/WS6807752fa3104d9fd3820e14.html}},
  year         = {2025},
  month        = apr
}

@techreport{ts23273,
  author = {{3GPP}}, title = {5G System (5GS) Location Services (LCS); Stage 2},
  type = {TS}, number = {23.273}, institution = {3rd Generation Partnership Project},
  note = {Release 19}, url = {https://www.3gpp.org/DynaReport/23273.htm},
  year = 2024}

@techreport{ts23288,
  author = {{3GPP}}, title = {Architecture enhancements for 5G System (5GS) to support network data analytics services},
  type = {TS}, number = {23.288}, institution = {3rd Generation Partnership Project},
  note = {Release 19}, url = {https://www.3gpp.org/DynaReport/23288.htm},
  year = {2025}}

@inproceedings{foukas2017orion,
  author    = {Xenofon Foukas and Mahesh K. Marina and Kimon Kontovasilis},
  title     = {{Orion}: {RAN} Slicing for a Flexible and Cost-Effective Multi-Service Mobile Network Architecture},
  booktitle = {Proc. ACM MobiCom'17},
  pages     = {127--140},
  year      = {2017}
}

@inproceedings{zhang2026smec,
  author    = {Xiao Zhang and Daehyeok Kim},
  title     = {Enabling {SLO-Aware} {5G} Multi-Access Edge Computing with {SMEC}},
  booktitle = {Proc. USENIX NSDI'26},
  pages     = {1759--1776},
  year      = {2026}
}

@article{8933047,
  author    = {Igor Kadota and Eytan Modiano},
  title     = {Minimizing the Age of Information in Wireless Networks with Stochastic Arrivals},
  journal   = {IEEE Transactions on Mobile Computing},
  volume    = {20},
  number    = {3},
  pages     = {1173--1185},
  year      = {2021},
  publisher = {IEEE}
}

@techreport{jarus_sora25_annexf,
  author      = {{Joint Authorities for Rulemaking on Unmanned Systems (JARUS)}},
  title       = {{JARUS Guidelines on SORA: Annex F -- Theoretical Basis for Ground Risk Classification and Mitigation}},
  institution = {Joint Authorities for Rulemaking on Unmanned Systems (JARUS)},
  number      = {JAR-DEL-SRM-SORA-F-2.5},
  year        = {2024},
  month       = may,
  note        = {Edition 2.5, Public Release, 13 May 2024},
  url         = {https://jarus-rpas.org/wp-content/uploads/2024/06/SORA-v2.5-Annex-F-Release.JAR_doc_29pdf.pdf}
}

@article{zeng2019cellular,
  author    = {Shuowen Zhang and Yong Zeng and Rui Zhang},
  title     = {Cellular-Enabled {UAV} Communication: A Connectivity-Constrained Trajectory Optimization Perspective},
  journal   = {IEEE Transactions on Communications},
  volume    = {67},
  number    = {3},
  pages     = {2580--2604},
  year      = {2019},
  publisher = {IEEE}
}

@article{Zheng2020uoi,
  author    = {Xi Zheng and Sheng Zhou and Zhisheng Niu},
  title     = {Urgency of Information for Context-Aware Timely Status Updates in Remote Control Systems},
  journal   = {IEEE Transactions on Wireless Communications},
  volume    = {19},
  number    = {11},
  pages     = {7237--7250},
  year      = {2020},
  publisher = {IEEE}
}

@misc{mundogeo26speedbird,
  author = {{MundoGEO}},
   title  = {Speedbird Aero and iFood Launch New Drone Delivery Route in
             S\~ao Paulo, Brazil},
   howpublished = {MundoGEO},
   month  = jun, year = {2026},
   url    = {https://mundogeo.com/en/2026/06/04/speedbird-aero-and-ifood-launch-new-drone-delivery-routein-sao-paulo-brazil/}
 }

@misc{ifood26drone,
  author = {{iFood Institucional}},
   title  = {Drone do iFood em S\~ao Paulo: empresa inicia entregas a\'ereas},
   howpublished = {iFood Institucional},
   year   = {2026},
   url    = {https://institucional.ifood.com.br/inovacao/drone-ifood-em-sao-paulo/}
 }

@misc{mundogeo25lagos,
  author = {{MundoGEO}},
   title  = {DECEA detectou drones irregulares e assegurou o fluxo a\'ereo
             no GP de S\~ao Paulo de F1},
   howpublished = {MundoGEO},
   month  = nov, year = {2025},
   url    = {https://mundogeo.com/2025/11/17/decea-detectou-drones-irregulares-e-assegurou-o-fluxo-aereo-no-gp-de-sao-paulo-de-f1/}
 }

@misc{decea24lagos,
  author = {{DECEA}},
   title  = {Opera\c{c}\~ao Lagos: DECEA controla 224 helic\'opteros e 82
             drones durante GP Brasil, em Interlagos (SP)},
   howpublished = {DECEA},
   year   = {2024},
   url    = {https://decea.mil.br/?i=midia-e-informacao&materia=operacao-lagos-decea-controla-224-helicopteros-e-82-drones-durante-gp-brasil-em-interlagos-sp&p=pg_noticia}
 }

@misc{anatel_mosaico_stations,
  author       = {{Agência Nacional de Telecomunicações (ANATEL)}},
  title        = {{Mosaico Spectrum-E: Consulta Pública de Emissões
                   Licenciadas}},
  year         = {2026},
  howpublished = {\url{https://sistemas.anatel.gov.br/se/public/view/b/licenciamento.php?view=licenciamento}},
  note         = {Accessed August 25, 2026}
}

@article{lin2024camp,
author = {Lin, Yu-Tai and Matson, N. Cameron and Sundaresan, Karthikeyan},
title = {Bringing Collaborative Positioning to Native 5G Systems for Enhanced 2D \& 3D Location Services},
year = {2024},
issue_date = {December 2024},
publisher = {Association for Computing Machinery},
address = {New York, NY, USA},
volume = {2},
number = {CoNEXT4},
url = {https://doi.org/10.1145/3696380},
doi = {10.1145/3696380},
month = nov,
articleno = {23},
numpages = {21}
}

@article{ahmad2023aerotraj,
author = {Ahmad, Fawad and Shin, Christina Suyong and Ghosh, Rajrup and D'Ambrosio, John and Chai, Eugene and Sundaresan, Karthikeyan and Govindan, Ramesh},
title = {AeroTraj: Trajectory Planning for Fast, and Accurate 3D Reconstruction Using a Drone-based LiDAR},
year = {2023},
issue_date = {September 2023},
publisher = {Association for Computing Machinery},
address = {New York, NY, USA},
volume = {7},
number = {3},
url = {https://doi.org/10.1145/3610911},
doi = {10.1145/3610911},
month = sep,
articleno = {83},
numpages = {28}
}

@techreport{uk_conops,
  author      = {{UK Civil Aviation Authority}},
  title       = {Concept of Operations: Future Air Traffic Management
                 and Air Navigation Service},
  institution = {UK Civil Aviation Authority},
  type        = {Civil Aviation Publication},
  number      = {CAP 3269},
  month       = jul,
  year        = {2026},
  url         = {https://www.caa.co.uk/publication/download/29467},
  urldate     = {2026-08-28}
}

@techreport{uk_roadmap,
  author      = {{UK Civil Aviation Authority}},
  title       = {Future of Flight: {BVLOS} Roadmap},
  institution = {UK Civil Aviation Authority},
  type        = {Civil Aviation Publication},
  number      = {CAP 3182},
  month       = oct,
  year        = {2025},
  url         = {https://www.caa.co.uk/publication/download/26211},
  urldate     = {2026-08-28}
}

@misc{toolboxuav,
  author = {{The MathWorks, Inc.}},
  title  = {{UAV Toolbox} Documentation},
  year   = {2026},
  url    = {https://www.mathworks.com/help/uav/index.html},
  note   = {Accessed September 1, 2026}
}

@misc{toolbox5g,
  author = {{The MathWorks, Inc.}},
  title  = {{5G Toolbox} Documentation},
  year   = {2026},
  url    = {https://www.mathworks.com/help/5g/index.html},
  note   = {Accessed September 1, 2026}
}

\appendix

\section{Normalization of Operational Risk Components}
\label{app:risk-normalization}

The navigation, ground, and video-coherence risk components
measure heterogeneous quantities: a separation deficit, an
exposed population, and a video-staleness violation,
respectively. To aggregate them, we normalize each component
using a conservative reference value constructed from the
corresponding declared operational limits. These reference
values define dimensionless risk scales.

\noindent\textbf{Declared Operational Limits.}
The declared operational limits used for normalization are
specified as follows. Let \(\tau_s^{\max}\) denote the
maximum tolerable C2 status-update staleness, \(d_{\min}\) the
minimum declared separation between reported UAV positions,
\(N_{\mathrm{op}}\) the maximum number of UAVs simultaneously
operating in the corridor, \(\rho_{\mathrm{max}}\) the maximum
population density within the declared operating area, and
\(\tau_v^{\max}\) the maximum tolerable coherence-window
staleness. Within these limits, we assume
\(h_{u,s}(t)\leq\tau_s^{\max}\),
\(N\leq N_{\mathrm{op}}\),
\(d_{uv}(t)\geq d_{\min}\),
\(\rho(\bm{x},t)\leq\rho_{\mathrm{max}}\), and
\(z_u(t)\leq\tau_v^{\max}\).

\noindent\textbf{Navigation Risk Normalization.}
At the maximum tolerable C2 status-update staleness, the
position-uncertainty radius in
\eqref{eq:uncertainty-radius} becomes \(r_{\tau}
=
r_{\min}
+
v_{\max}\tau_s^{\max}.\)
Under the declared operational limits, the unnormalized
navigation-risk component satisfies
\begin{align}
&\sum_{\substack{v=1\\v\neq u}}^{N}
\left[
r_u(t)+r_v(t)-d_{uv}(t)
\right]^+
\notag\\
&\quad\leq
(N_{\mathrm{op}}-1)
\left[
2r_{\tau}-d_{\min}
\right]^+.
\label{eq:navigation-reference-bound}
\end{align}
Accordingly, we define the reference navigation risk as
\begin{equation}
\bar R^{\mathrm{Nav}}
=
(N_{\mathrm{op}}-1)
\left[
2r_{\tau}-d_{\min}
\right]^+.
\label{eq:reference-navigation-risk}
\end{equation}
This is a conservative reference scale corresponding to the
simultaneous application of the declared fleet-size, C2
status-update staleness, and separation limits. We assume
\(d_{\min}<2r_{\tau}\), so that
\(\bar R^{\mathrm{Nav}}>0\).

\noindent\textbf{Ground Risk Normalization.}
At the maximum tolerable C2 status-update staleness, the
position-uncertainty region has area \(A_{\tau}
=
\pi r_{\tau}^{2}.\)
Since \(\rho(\bm{x},t)\leq\rho_{\mathrm{max}}\) and
\(r_u(t)\leq r_{\tau}\) under the declared operational
limits, the exposed population satisfies 
\begin{equation}
\int_{\mathcal B_u(t)}
\rho(\bm{x},t)\,\mathrm d\bm{x}
\leq
\rho_{\mathrm{max}}\pi r_{\tau}^{2}.
\label{eq:ground-reference-bound}
\end{equation}
We therefore define the reference ground risk as
\begin{equation}
\bar R^{\mathrm{Ground}}
=
\rho_{\mathrm{max}}A_{\tau}
=
\rho_{\mathrm{max}}\pi r_{\tau}^{2}.
\label{eq:reference-ground-risk}
\end{equation}
It represents a conservative exposed-population scale at the
declared C2 status-update staleness and population-density
limits.

\noindent\textbf{Video-Coherence Risk Normalization.}
Let \(H_{\mathrm{nom}}\) and \(\tau_v^{\max}\) denote the
nominal and maximum tolerable coherence-window staleness,
respectively, where
\(\tau_v^{\max}>H_{\mathrm{nom}}\). We define the reference
video-coherence risk as
\begin{equation}
\bar R^{\mathrm{Vid}}
=
\tau_v^{\max}
-
H_{\mathrm{nom}}.
\label{eq:reference-video-risk}
\end{equation}
Accordingly, the normalized video-coherence risk is zero for
\(z_u(t)\leq H_{\mathrm{nom}}\) and equals one when
\(z_u(t)=\tau_v^{\max}\).

Under the declared operational limits, each normalized risk
component is upper bounded by one. A value of one corresponds
to its conservative reference scale. A value greater than one
indicates that at least one of the limits used to construct
that reference scale has been exceeded.

\section{Derivation of the Composite Operational Risk Upper Bound}
\label{app:exposure-upper-bound}

In this appendix, we prove
Proposition~\ref{prop:risk-upper-bound}
and derive the coefficients
\(\alpha_{u,s}\),
\(\alpha_{u,v}\), and
\(C_u\).
We separately upper-bound the navigation, ground, and
video-coherence risk components. The local population-density
bound \(\rho_{u,\mathrm{max}}\) and pairwise separations
\(\{d_{uv}\}_{v\neq u}\) are determined by the operational
context considered in the analysis.

\noindent\textbf{Navigation Risk Upper Bound.}
Using the pairwise separation \(d_{uv}\) under the considered
operational context, together with \eqref{eq:navigation-risk}
and \eqref{eq:uncertainty-radius}, we have
\begin{align}
R_u^{\mathrm{Nav}}(t)
&=
\frac{1}{\bar R^{\mathrm{Nav}}}
\sum_{v\neq u}
\left[
r_u(t)+r_v(t)-d_{uv}
\right]^+
\notag\\
&=
\frac{1}{\bar R^{\mathrm{Nav}}}
\sum_{v\neq u}
\left[
2r_{\min}
+
v_{\max}h_{u,s}(t)
\right.
\notag\\
&\hspace{31mm}\left.
+
v_{\max}h_{v,s}(t)
-
d_{uv}
\right]^+.
\label{eq:app-navigation-distance-bound}
\end{align}
Under \eqref{eq:status-age-operating-region},
\(h_{v,s}(t)\leq\tau_{v,s}^{\max}\). Therefore,
\begin{align}
R_u^{\mathrm{Nav}}(t)
&\leq
\frac{1}{\bar R^{\mathrm{Nav}}}
\sum_{v\neq u}
\left[
v_{\max}h_{u,s}(t)+b_{uv}
\right]^+,
\label{eq:app-navigation-age-bound}
\end{align}
where
\begin{equation}
b_{uv}
=
2r_{\min}
+
v_{\max}\tau_{v,s}^{\max}
-
d_{uv}.
\label{eq:app-navigation-b-definition}
\end{equation}
Since \([v_{\max}h+b_{uv}]^+\) is convex in \(h\), its
secant line over
\(0\leq h\leq\tau_{u,s}^{\max}\) gives
\begin{equation}
\left[
v_{\max}h_{u,s}(t)+b_{uv}
\right]^+
\leq
\beta_{uv}h_{u,s}(t)
+
\gamma_{uv},
\label{eq:app-navigation-secant-bound}
\end{equation}
where
\begin{align}
\beta_{uv}
&=
\frac{
\left[
b_{uv}
+
v_{\max}\tau_{u,s}^{\max}
\right]^+
-
\left[
b_{uv}
\right]^+
}{
\tau_{u,s}^{\max}
},
\label{eq:app-navigation-beta-definition}
\\
\gamma_{uv}
&=
\left[
b_{uv}
\right]^+.
\label{eq:app-navigation-gamma-definition}
\end{align}
Substituting \eqref{eq:app-navigation-secant-bound} into
\eqref{eq:app-navigation-age-bound} yields
\begin{align}
R_u^{\mathrm{Nav}}(t)
&\leq
\frac{
\sum_{v\neq u}\beta_{uv}
}{
\bar R^{\mathrm{Nav}}
}
h_{u,s}(t)
+
\frac{
\sum_{v\neq u}\gamma_{uv}
}{
\bar R^{\mathrm{Nav}}
}.
\label{eq:app-navigation-bound}
\end{align}
After applying the navigation-risk weight \(w_N\), we obtain
\begin{align}
w_N R_u^{\mathrm{Nav}}(t)
&\leq
\frac{
w_N\sum_{v\neq u}\beta_{uv}
}{
\bar R^{\mathrm{Nav}}
}
h_{u,s}(t)
+
\frac{
w_N\sum_{v\neq u}\gamma_{uv}
}{
\bar R^{\mathrm{Nav}}
}.
\label{eq:app-weighted-navigation-bound}
\end{align}
Substituting \eqref{eq:app-navigation-b-definition} into
\eqref{eq:app-navigation-beta-definition} and
\eqref{eq:app-navigation-gamma-definition}, we obtain
\begin{align}
\beta_{uv}
&=
\frac{1}{\tau_{u,s}^{\max}}
\left(
\left[
2r_{\min}
+
v_{\max}
\left(
\tau_{u,s}^{\max}
+
\tau_{v,s}^{\max}
\right)
-
d_{uv}
\right]^+
\right.
\notag\\
&\hspace{21mm}\left.
-
\left[
2r_{\min}
+
v_{\max}\tau_{v,s}^{\max}
-
d_{uv}
\right]^+
\right),
\label{eq:app-navigation-beta-expanded}
\\
\gamma_{uv}
&=
\left[
2r_{\min}
+
v_{\max}\tau_{v,s}^{\max}
-
d_{uv}
\right]^+.
\label{eq:app-navigation-gamma-expanded}
\end{align}
Therefore, \eqref{eq:app-weighted-navigation-bound} becomes
\begin{align}
w_N R_u^{\mathrm{Nav}}(t)
&\leq
\frac{w_N}
{\bar R^{\mathrm{Nav}}\tau_{u,s}^{\max}}
\sum_{v\neq u}
\left(
\left[
2r_{\min}
+
v_{\max}
\left(
\tau_{u,s}^{\max}
+
\tau_{v,s}^{\max}
\right)
-
d_{uv}
\right]^+
\right.
\notag\\
&\hspace{22mm}\left.
-
\left[
2r_{\min}
+
v_{\max}\tau_{v,s}^{\max}
-
d_{uv}
\right]^+
\right)
h_{u,s}(t)
\notag\\
&\quad+
\frac{w_N}
{\bar R^{\mathrm{Nav}}}
\sum_{v\neq u}
\left[
2r_{\min}
+
v_{\max}\tau_{v,s}^{\max}
-
d_{uv}
\right]^+.
\label{eq:app-weighted-navigation-expanded}
\end{align}

\noindent\textbf{Ground Risk Upper Bound.}
From \eqref{eq:ground-risk} and the local population-density
bound in \eqref{eq:population-upper-bound}, we have
\begin{align}
R_u^{\mathrm{Ground}}(t)
&=
\frac{1}{\bar R^{\mathrm{Ground}}}
\int_{\mathcal B_u(t)}
\rho(\bm{x},t)\,\mathrm d\bm{x}
\notag\\
&\leq
\frac{
\rho_{u,\mathrm{max}}
}{
\bar R^{\mathrm{Ground}}
}
\left|
\mathcal B_u(t)
\right|.
\label{eq:app-ground-step1}
\end{align}
Using the area of the circular uncertainty region in
\eqref{eq:uncertainty-region-area}, we obtain
\begin{align}
R_u^{\mathrm{Ground}}(t)
&\leq
\frac{
\rho_{u,\mathrm{max}}\pi
}{
\bar R^{\mathrm{Ground}}
}
\left[
r_{\min}
+
v_{\max}h_{u,s}(t)
\right]^2.
\label{eq:app-ground-quadratic-bound}
\end{align}
Under \eqref{eq:status-age-operating-region},
\begin{equation}
h_{u,s}^{2}(t)
\leq
\tau_{u,s}^{\max}h_{u,s}(t).
\label{eq:app-status-age-quadratic-bound}
\end{equation}
Substituting \eqref{eq:app-status-age-quadratic-bound} into
\eqref{eq:app-ground-quadratic-bound} gives
\begin{align}
R_u^{\mathrm{Ground}}(t)
&\leq
\frac{
\rho_{u,\mathrm{max}}\pi
\left(
2r_{\min}v_{\max}
+
v_{\max}^{2}\tau_{u,s}^{\max}
\right)
}{
\bar R^{\mathrm{Ground}}
}
h_{u,s}(t)
\notag\\
&\quad+
\frac{
\rho_{u,\mathrm{max}}\pi r_{\min}^{2}
}{
\bar R^{\mathrm{Ground}}
}.
\label{eq:app-ground-linear-bound}
\end{align}
After applying the ground-risk weight \(w_G\), we obtain
\begin{align}
w_G R_u^{\mathrm{Ground}}(t)
&\leq
\frac{
w_G\rho_{u,\mathrm{max}}\pi
\left(
2r_{\min}v_{\max}
+
v_{\max}^{2}\tau_{u,s}^{\max}
\right)
}{
\bar R^{\mathrm{Ground}}
}
h_{u,s}(t)
\notag\\
&\quad+
\frac{
w_G\rho_{u,\mathrm{max}}\pi r_{\min}^{2}
}{
\bar R^{\mathrm{Ground}}
}.
\label{eq:app-weighted-ground-bound}
\end{align}

\noindent\textbf{Video-Coherence Risk Upper Bound.}
From \eqref{eq:video-risk}, since \(z_u(t)\geq0\),
\(H_{\mathrm{nom}}\geq0\), and
\(\bar R^{\mathrm{Vid}}>0\), we have
\begin{align}
R_u^{\mathrm{Vid}}(t)
&=
\left[
\frac{
z_u(t)-H_{\mathrm{nom}}
}{
\bar R^{\mathrm{Vid}}
}
\right]^+
\leq
\frac{
z_u(t)
}{
\bar R^{\mathrm{Vid}}
}.
\label{eq:app-video-bound}
\end{align}
Consequently,
\begin{equation}
w_V R_u^{\mathrm{Vid}}(t)
\leq
\frac{
w_V
}{
\bar R^{\mathrm{Vid}}
}
z_u(t).
\label{eq:app-weighted-video-bound}
\end{equation}
Since the completed-frame AoIs are ordered as \(h_{u,v}^{(1)}(t)
\leq
h_{u,v}^{(2)}(t)
\leq
\cdots
\leq
h_{u,v}^{(w)}(t),\)
and the coefficients \(\{\alpha_i\}_{i=1}^{w}\) are
nonnegative and satisfy
\(\sum_{i=1}^{w}\alpha_i=1\), we have
\begin{align}
z_u(t)
&=
\sum_{i=1}^{w}
\alpha_i h_{u,v}^{(i)}(t)
\leq
\sum_{i=1}^{w}
\alpha_i h_{u,v}^{(w)}(t)
=
h_{u,v}^{(w)}(t).
\label{eq:app-coherence-window-bound}
\end{align}
Combining
\eqref{eq:app-weighted-navigation-expanded},
\eqref{eq:app-weighted-ground-bound}, and
\eqref{eq:app-weighted-video-bound}
with the composite operational risk in
\eqref{eq:composite-risk}, and using the coefficient definitions
in \eqref{eq:alpha-status-definition}--%
\eqref{eq:risk-constant-definition}, we obtain
\begin{equation}
R_u(t)
\leq
\alpha_{u,s}h_{u,s}(t)
+
\alpha_{u,v}z_u(t)
+
C_u.
\label{eq:app-risk-bound-with-window}
\end{equation}
Finally, applying
\eqref{eq:app-coherence-window-bound} yields
\eqref{eq:risk-upper-bound}. This completes the proof of
Proposition~\ref{prop:risk-upper-bound}.

\section{PROOF OF THEOREM~\ref{thm:lower_bound}}
\label{proof_lowerbound}
Consider any admissible policy \(\pi\in\Pi\) operating over a
finite horizon of \(T\) slots.
For the status update flow of UAV \(u\), let \(D_{u,s}(T)\) denote
the number of successfully delivered status updates by the end
of slot \(T\). Let \(I_{u,s}^{(1)}\) denote the time from the beginning of
the horizon to the first successful status delivery, and, for
\(m\geq2\), let \(I_{u,s}^{(m)}\) denote the inter-delivery
time between the \((m-1)\)-th and \(m\)-th successful status
deliveries. Let \(R_{u,s}(T)\) denote the residual time after
the last successful status delivery. Then
\begin{equation}
T
=
\sum_{m=1}^{D_{u,s}(T)}
I_{u,s}^{(m)}
+
R_{u,s}(T).
\label{eq:status_partition_new}
\end{equation}
After a successful status delivery, the status AoI is reset to
\(1\). Hence, over an inter-delivery interval of length
\(I_{u,s}^{(m)}\), the AoI values are \(1,2,\ldots,I_{u,s}^{(m)}.\)
Therefore, the cumulative status AoI over this interval is
\begin{equation}
\sum_{\tau=1}^{I_{u,s}^{(m)}} \tau
=
\frac{I_{u,s}^{(m)}(I_{u,s}^{(m)}+1)}{2}
=
\frac{(I_{u,s}^{(m)})^2}{2}
+
\frac{I_{u,s}^{(m)}}{2}.    
\end{equation}
Applying this to all complete inter-delivery intervals and to
the residual interval, we obtain
\begin{align}
\sum_{t=1}^{T} h_{u,s}(t)
&=
\frac{1}{2}
\left[
\sum_{m=1}^{D_{u,s}(T)}
(I_{u,s}^{(m)})^2
+
R_{u,s}^2(T)
\right]
\notag\\
&\quad+
\frac{1}{2}
\left[
\sum_{m=1}^{D_{u,s}(T)}
I_{u,s}^{(m)}
+
R_{u,s}(T)
\right]
\notag\\
&=
\frac{1}{2}
\left[
\sum_{m=1}^{D_{u,s}(T)}
(I_{u,s}^{(m)})^2
+
R_{u,s}^2(T)
\right]
+
\frac{T}{2},
\label{eq:status_area_new}
\end{align}
where the last equality follows from
\eqref{eq:status_partition_new}.
If \(D_{u,s}(T)=0\), then no status update is delivered over
the horizon. Since \(h_{u,s}(1)=1\) and the status AoI
increases by one in every slot without a successful delivery,
\begin{equation}
\frac1T\sum_{t=1}^{T}h_{u,s}(t)
=
\frac{T+1}{2}
=
\frac12\frac{T}{D_{u,s}(T)+1}
+
\frac12.
\end{equation}
Thus, \eqref{eq:status_sample_path_new} holds directly when
\(D_{u,s}(T)=0\). It therefore suffices to consider
\(D_{u,s}(T)\geq1\) in the following derivation.

By Jensen's inequality,
\begin{equation}
\sum_{m=1}^{D_{u,s}(T)}
(I_{u,s}^{(m)})^2
\ge
D_{u,s}(T)
\left(
\frac{
\sum_{m=1}^{D_{u,s}(T)}
I_{u,s}^{(m)}
}{
D_{u,s}(T)
}
\right)^2 .
\label{eq:status_jensen_new}
\end{equation}
Using~\eqref{eq:status_partition_new}
we get
\begin{equation}
\sum_{m=1}^{D_{u,s}(T)}
(I_{u,s}^{(m)})^2
\ge
\frac{(T-R_{u,s}(T))^2}{D_{u,s}(T)}.
\label{eq:status_square_bound_new}
\end{equation}
Substituting \eqref{eq:status_square_bound_new} into
\eqref{eq:status_area_new} gives
\begin{equation}
\frac1T
\sum_{t=1}^{T} h_{u,s}(t)
\ge
\frac12
\left[
\frac{(T-R_{u,s}(T))^2}{TD_{u,s}(T)}
+
\frac{R_{u,s}^2(T)}{T}
\right]
+
\frac12.
\label{eq:status_before_residual_new}
\end{equation}
Minimizing the RHS of~\eqref{eq:status_before_residual_new} with respect to \(R_{u,s}(T)\) yields
\begin{equation}
\frac1T
\sum_{t=1}^{T} h_{u,s}(t)
\ge
\frac12
\frac{T}{D_{u,s}(T)+1}
+
\frac12.
\label{eq:status_sample_path_new}
\end{equation}
Taking expectations and using Jensen's inequality, we get
\begin{align}
\frac1T
\sum_{t=1}^{T}
\mathbb E^\pi[h_{u,s}(t)]
&\ge
\frac12
\mathbb E^\pi
\left[
\frac{T}{D_{u,s}(T)+1}
\right]
+
\frac12
\notag\\
&=
\frac12
\mathbb E^\pi
\left[
\frac{1}{D_{u,s}(T)/T+1/T}
\right]
+
\frac12
\notag\\
&\ge
\frac12
\frac{1}{
\mathbb E^\pi[D_{u,s}(T)]/T+1/T
}
+
\frac12.
\label{eq:status_expectation_new}
\end{align}
Define the finite-horizon expected status packet delivery rate as
\begin{equation}
q_{u,s}^{\pi}(T)
=
\frac{\mathbb E^\pi[D_{u,s}(T)]}{T}.
\label{eq:finite_status_rate}
\end{equation}
Then \eqref{eq:status_expectation_new} can be written as
\begin{equation}
\frac1T
\sum_{t=1}^{T}
\mathbb E^\pi[h_{u,s}(t)]
\ge
\frac{1}{
2\left(q_{u,s}^{\pi}(T)+1/T\right)
}
+
\frac12.
\label{eq:finite_status_age_bound}
\end{equation}

We now consider the video frame flow. Let \(D_{u,v}(T)\) denote
the number of completed video frames by the end of slot \(T\).
If \(D_{u,v}(T)=0\), then no video frame is completed over the
horizon. Since \(h_{u,v}(1)=1\) and the video AoI increases by one in
every slot without a frame completion,
\((1/T)\sum_{t=1}^{T}h_{u,v}(t)=(T+1)/2\).
Moreover, the RHS of
\eqref{eq:video_sample_path_new} evaluated at
\(D_{u,v}(T)=0\) is
\begin{equation}
\frac{(T-L+1)^2}{2T}
-
\frac{(L-1)^2}{2T}
+
L-\frac12
=
\frac{T+1}{2}.
\end{equation}
Thus, \eqref{eq:video_sample_path_new} holds directly when
\(D_{u,v}(T)=0\). It therefore suffices to consider
\(D_{u,v}(T)\geq1\) in the following derivation.

Let \(I_{u,v}^{(1)}\) denote the time from the beginning of
the horizon to the first video-frame completion, and, for
\(m\geq2\), let \(I_{u,v}^{(m)}\) denote the inter-completion
time between the \((m-1)\)-th and \(m\)-th completed video
frames. Let \(R_{u,v}(T)\) denote the residual time after the
last completed video frame. Then
\begin{equation}
T
=
\sum_{m=1}^{D_{u,v}(T)}
I_{u,v}^{(m)}
+
R_{u,v}(T).
\label{eq:video_partition_new}
\end{equation}

Before the first video-frame completion, the video AoI starts
at \(1\) and increases by one in every slot. Hence, over the
first interval of length \(I_{u,v}^{(1)}\), the video AoI
values are \(1,2,\ldots,I_{u,v}^{(1)}\).
After a video frame is completed, the video AoI is reset to
at least \(L\), because the frame contains \(L\) packets and
at most one packet can be successfully delivered in each
slot. Hence, over every subsequent inter-completion interval
of length \(I_{u,v}^{(m)}\), \(m\geq2\), the video AoI values
are lower bounded by
\(L,L+1,\ldots,L+I_{u,v}^{(m)}-1\).
Therefore, the cumulative video AoI over this interval is
lower bounded by
\begin{align}
\sum_{\tau=0}^{I_{u,v}^{(m)}-1}(L+\tau)
&=
\frac{(I_{u,v}^{(m)})^2}{2}
+
\left(L-\frac12\right)I_{u,v}^{(m)}.
\label{eq:video_interval_area_new}
\end{align}
The same lower bound applies to the residual interval after
the last completion. Consequently,
\begin{align}
\sum_{t=1}^{T}h_{u,v}(t)
&\ge
\frac12
\left[
\sum_{m=1}^{D_{u,v}(T)}
\left(I_{u,v}^{(m)}\right)^2
+
R_{u,v}^2(T)
\right]
\notag\\
&\quad+
\left(L-\frac12\right)T
-
(L-1)I_{u,v}^{(1)}.
\label{eq:video_area_corrected}
\end{align}
Completing the square in
\eqref{eq:video_area_corrected}, we obtain
\begin{align}
\sum_{t=1}^{T}h_{u,v}(t)
&\ge
\frac12
\left[
\left(I_{u,v}^{(1)}-(L-1)\right)^2
+
\sum_{m=2}^{D_{u,v}(T)}
\left(I_{u,v}^{(m)}\right)^2
\right]
\notag\\
&\quad+
\frac12R_{u,v}^2(T)
-
\frac{(L-1)^2}{2}
+
\left(L-\frac12\right)T.
\label{eq:video_area_square}
\end{align}
By Jensen's inequality,
\begin{align}
&\left(I_{u,v}^{(1)}-(L-1)\right)^2
+
\sum_{m=2}^{D_{u,v}(T)}
\left(I_{u,v}^{(m)}\right)^2
+
R_{u,v}^2(T)
\notag\\
&\ge
\frac{1}{D_{u,v}(T)+1}
\left[
I_{u,v}^{(1)}-(L-1)
+
\sum_{m=2}^{D_{u,v}(T)}
I_{u,v}^{(m)}
\right.
\notag\\
&\hspace{42mm}\left.
+
R_{u,v}(T)
\right]^2.
\label{eq:video_jensen_new}
\end{align}
Using~\eqref{eq:video_partition_new}, we get
\begin{align}
&\left(I_{u,v}^{(1)}-(L-1)\right)^2
+
\sum_{m=2}^{D_{u,v}(T)}
\left(I_{u,v}^{(m)}\right)^2
+
R_{u,v}^2(T)
\notag\\
&\quad\ge
\frac{(T-L+1)^2}{D_{u,v}(T)+1}.
\label{eq:video_square_bound_new}
\end{align}
Substituting \eqref{eq:video_square_bound_new} into
\eqref{eq:video_area_square} gives
\begin{align}
\frac1T\sum_{t=1}^{T}h_{u,v}(t)
&\ge
\frac{(T-L+1)^2}
{2T\left(D_{u,v}(T)+1\right)}
-
\frac{(L-1)^2}{2T}
+
L-\frac12.
\label{eq:video_sample_path_new}
\end{align}
Taking expectations and applying Jensen's inequality gives
\begin{align}
\frac1T
\sum_{t=1}^{T}
\mathbb E^\pi[h_{u,v}(t)]
&\ge
\frac{
\left(1-(L-1)/T\right)^2
}{
2\left(
\mathbb E^\pi[D_{u,v}(T)]/T+1/T
\right)
}
\notag\\
&\quad-
\frac{(L-1)^2}{2T}
+
L-\frac12.
\label{eq:video_expectation_new}
\end{align}
Define the finite-horizon expected video-frame completion
rate as
\begin{equation}
q_{u,v}^{\pi}(T)
=
\frac{\mathbb E^\pi[D_{u,v}(T)]}{T}.
\label{eq:finite_video_rate}
\end{equation}
Then \eqref{eq:video_expectation_new} becomes
\begin{align}
\frac1T
\sum_{t=1}^{T}
\mathbb E^\pi[h_{u,v}(t)]
&\ge
\frac{
\left(1-(L-1)/T\right)^2
}{
2\left(q_{u,v}^{\pi}(T)+1/T\right)
}
\notag\\
&\quad-
\frac{(L-1)^2}{2T}
+
L-\frac12.
\label{eq:finite_video_age_bound}
\end{align}

It remains to characterize the finite-horizon delivery-rate
vectors. For the status update flow,
\begin{align}
\mathbb E^\pi[D_{u,s}(T)]
&=
\sum_{t=1}^{T}
\mathbb E^\pi[d_{u,s}(t)]
=
p_{u,s}
\sum_{t=1}^{T}
\mathbb E^\pi[a_{u,s}(t)].
\label{eq:finite_status_delivery}
\end{align}
Therefore,
\begin{equation}
\frac1T
\sum_{t=1}^{T}
\mathbb E^\pi[a_{u,s}(t)]= \frac{q_{u,s}^{\pi}(T)}{p_{u,s}}.
\label{eq:finite_status_scheduling}
\end{equation}
For the video frame flow,
\begin{equation}
p_{u,v}
\sum_{t=1}^{T}
\mathbb E^\pi[a_{u,v}(t)]
=
\sum_{t=1}^{T}
\mathbb E^\pi[\mu_{u,v}(t)].
\label{eq:finite_video_successes}
\end{equation}
Every completed frame contains exactly \(L\) successfully
delivered packets. Hence,
\begin{equation}
\sum_{t=1}^{T}\mu_{u,v}(t)
=
L D_{u,v}(T)+O_{u,v}(T),
\label{eq:video_packet_partition}
\end{equation}
where \(0\leq O_{u,v}(T)\leq L-1\) is the number of
successfully delivered packets belonging to the unfinished
frame at the end of slot \(T\). It follows that
\begin{equation}
\frac1T
\sum_{t=1}^{T}
\mathbb E^\pi[a_{u,v}(t)]
\ge
\frac{Lq_{u,v}^{\pi}(T)}{p_{u,v}}.
\label{eq:finite_video_scheduling}
\end{equation}

Taking expectations and time averages in
\eqref{eq:resource_constraint}, and using
\eqref{eq:finite_status_scheduling} and
\eqref{eq:finite_video_scheduling}, gives, for every \(T\),
\begin{equation}
\sum_{u=1}^{N}
\left(
\frac{q_{u,s}^{\pi}(T)}{p_{u,s}}
+
\frac{Lq_{u,v}^{\pi}(T)}{p_{u,v}}
\right)
\leq K.
\label{eq:finite_rate_constraint}
\end{equation}
Let \(\bm q^\pi(T)
=
\left(
\{q_{u,s}^{\pi}(T)\}_{u=1}^{N},
\{q_{u,v}^{\pi}(T)\}_{u=1}^{N}
\right)\).
Since \(0\leq q_{u,s}^{\pi}(T)\leq p_{u,s}\) and
\(0\leq q_{u,v}^{\pi}(T)\leq p_{u,v}/L\), the sequence
\(\{\bm q^\pi(T)\}_{T\geq1}\) is bounded. Therefore, there
exists a convergent subsequence indexed by a sequence of
horizon lengths \(\{T_n\}\), with \(T_n\to\infty\), and a
vector
\(\bm q^\pi
=
\left(
\{q_{u,s}^{\pi}\}_{u=1}^{N},
\{q_{u,v}^{\pi}\}_{u=1}^{N}
\right)\)
such that \(\bm q^\pi(T_n)\to\bm q^\pi\). Taking
\(n\to\infty\) in \eqref{eq:finite_rate_constraint} shows
that this limit point satisfies
\begin{equation}
\sum_{u=1}^{N}
\left(
\frac{q_{u,s}^{\pi}}{p_{u,s}}
+
\frac{Lq_{u,v}^{\pi}}{p_{u,v}}
\right)
\leq K.
\label{eq:rate_resource_constraint_new}
\end{equation}
Combining \eqref{eq:finite_status_age_bound} and
\eqref{eq:finite_video_age_bound}, and taking the limit along
the subsequence \(\{T_n\}\), gives
\begin{align}
\widehat J^\pi
&=
\limsup_{T\to\infty}
\frac{1}{T}
\sum_{t=1}^{T}
\sum_{u=1}^{N}
\mathbb E^\pi
\left[
\alpha_{u,s}h_{u,s}(t)
+
\alpha_{u,v}h_{u,v}(t)
\right]
\notag\\
&\ge
\liminf_{n\to\infty}
\frac{1}{T_n}
\sum_{t=1}^{T_n}
\sum_{u=1}^{N}
\mathbb E^\pi
\left[
\alpha_{u,s}h_{u,s}(t)
+
\alpha_{u,v}h_{u,v}(t)
\right]
\notag\\
&\ge
\sum_{u=1}^{N}
\alpha_{u,s}
\left(
\frac{1}{2q_{u,s}^{\pi}}
+
\frac{1}{2}
\right)
+
\sum_{u=1}^{N}
\alpha_{u,v}
\left(
\frac{1}{2q_{u,v}^{\pi}}
+
L-\frac{1}{2}
\right).
\label{eq:risk_rate_bound_new}
\end{align}
Interpreting \(1/0\) as \(+\infty\), if any component of
\(\bm q^\pi\) is zero, \eqref{eq:risk_rate_bound_new} implies
\(\widehat J^\pi=+\infty\), and hence
\(\widehat J^\pi\geq L_B^\star\) holds directly.
Otherwise, all components of \(\bm q^\pi\) are strictly
positive. Together with
\eqref{eq:rate_resource_constraint_new} and \(q_{u,s}^{\pi}\leq p_{u,s},
q_{u,v}^{\pi}\leq {p_{u,v}}/{L},\)
this shows that \(\bm q^\pi\) is feasible for the minimization
in \eqref{eq:LB}. Therefore, the RHS of
\eqref{eq:risk_rate_bound_new} is no smaller than
\(L_B^\star\), and hence
\(\widehat J^\pi\geq L_B^\star\).
Since this inequality holds for every admissible policy
\(\pi\in\Pi\), we conclude that
\(\widehat J^\star\geq L_B^\star\).

\section{Upper Bound on the One-Slot Lyapunov Drift}
\label{app:lyapunov_drift_linear}

In this appendix, we derive the one-slot Lyapunov drift upper
bound, the status scheduling weight, and the video scheduling
weight given in \eqref{eq:MW_drift_bound_linear},
\eqref{eq:MW_status_weight_linear}, and
\eqref{eq:MW_video_weight_linear}, respectively.
Recall the system state \(S(t)\), the linear video potential
\(\Phi_{u,v}(t)\) defined in
\eqref{eq:video_linear_potential}, the Lyapunov function
\eqref{eq:MW_lyapunov_linear}, and the one-slot conditional
Lyapunov drift \eqref{eq:MW_drift_linear}.
Substituting \eqref{eq:MW_lyapunov_linear} into
\eqref{eq:MW_drift_linear}, we obtain
\begin{align}
\Delta(S(t))
&=
\sum_{u=1}^{N}
V_{u,s}
\mathbb E\!\left[
h_{u,s}(t+1)-h_{u,s}(t)
\mid S(t)
\right]
\notag\\
&\quad+
\sum_{u=1}^{N}
V_{u,v}
\mathbb E\!\left[
\Phi_{u,v}(t+1)-\Phi_{u,v}(t)
\mid S(t)
\right].
\label{eq:app_linear_drift_expand}
\end{align}

\noindent\textbf{Status Evolution.}
The status AoI evolves as
\begin{equation}
h_{u,s}(t+1)
=
h_{u,s}(t)+1-d_{u,s}(t)h_{u,s}(t).
\label{eq:app_linear_status_evolution}
\end{equation}
Taking the conditional expectation and using \(\mathbb E[d_{u,s}(t)\mid S(t)]
=
p_{u,s}
\mathbb E[a_{u,s}(t)\mid S(t)],\)
we obtain
\begin{align}
&\mathbb E\!\left[
h_{u,s}(t+1)-h_{u,s}(t)
\mid S(t)
\right]
\notag\\
&\quad=
1
-
p_{u,s}h_{u,s}(t)
\mathbb E[a_{u,s}(t)\mid S(t)].
\label{eq:app_linear_status_drift}
\end{align}

\noindent\textbf{Video Evolution.}
Recall that \(\Phi_{u,v}(t)
=
({\ell_u(t)}/{L})(h_{u,v}(t)-1)
+
g_u(t)
+
1\).
We consider two cases.

\noindent
\emph{Case 1: \(\ell_u(t)=L\).}
Under the generate-at-will model, \(g_u(t)=0\) whenever
\(\ell_u(t)=L\). If the video packet is not successfully
delivered, i.e., \(\mu_{u,v}(t)=0\), then
\[
\left\{
\begin{aligned}
h_{u,v}(t+1) &= h_{u,v}(t)+1,\\
g_u(t+1) &= 0,\\
\ell_u(t+1) &= L.
\end{aligned}
\right.
\]
Hence,
\begin{equation}
\Phi_{u,v}(t+1)-\Phi_{u,v}(t)
=
1.
\label{eq:app_linear_video_fail_initial}
\end{equation}
If the video packet is successfully delivered, i.e.,
\(\mu_{u,v}(t)=1\), and \(L>1\), then
\[
\left\{
\begin{aligned}
h_{u,v}(t+1) &= h_{u,v}(t)+1,\\
g_u(t+1) &= g_u(t)+1=1,\\
\ell_u(t+1) &= L-1.
\end{aligned}
\right.
\]
Hence,
\begin{equation}
\Phi_{u,v}(t+1)-\Phi_{u,v}(t)
=
2-\frac{h_{u,v}(t)}{L}.
\label{eq:app_linear_video_success_initial}
\end{equation}
For \(L=1\), the successful transmission completes the frame, and
\[
\left\{
\begin{aligned}
h_{u,v}(t+1) &= g_u(t)+1=1,\\
g_u(t+1) &= 0,\\
\ell_u(t+1) &= L.
\end{aligned}
\right.
\]
Hence, \(\Phi_{u,v}(t+1)-\Phi_{u,v}(t)
=
1-h_{u,v}(t)
\leq
2-h_{u,v}(t).\)
Therefore, when \(\ell_u(t)=L\),
\begin{equation}
\Phi_{u,v}(t+1)-\Phi_{u,v}(t)
\leq
1+\frac{\ell_u(t)}{L}
-
\frac{\mu_{u,v}(t)}{L}h_{u,v}(t).
\label{eq:app_linear_video_initial_bound}
\end{equation}

\noindent
\emph{Case 2: \(\ell_u(t)<L\).}
If the video packet is not successfully delivered, i.e.,
\(\mu_{u,v}(t)=0\), then
\[
\left\{
\begin{aligned}
h_{u,v}(t+1) &= h_{u,v}(t)+1,\\
g_u(t+1) &= g_u(t)+1,\\
\ell_u(t+1) &= \ell_u(t).
\end{aligned}
\right.
\]
Hence,
\begin{equation}
\Phi_{u,v}(t+1)-\Phi_{u,v}(t)
=
1+\frac{\ell_u(t)}{L}.
\label{eq:app_linear_video_fail_started}
\end{equation}
If \(\mu_{u,v}(t)=1\) and \(\ell_u(t)>1\), then
\[
\left\{
\begin{aligned}
h_{u,v}(t+1) &= h_{u,v}(t)+1,\\
g_u(t+1) &= g_u(t)+1,\\
\ell_u(t+1) &= \ell_u(t)-1.
\end{aligned}
\right.
\]
Hence,
\begin{equation}
\Phi_{u,v}(t+1)-\Phi_{u,v}(t)
=
1+\frac{\ell_u(t)}{L}
-
\frac{h_{u,v}(t)}{L}.
\label{eq:app_linear_video_success_started}
\end{equation}
If \(\mu_{u,v}(t)=1\) and \(\ell_u(t)=1\), then the frame is
completed and
\[
\left\{
\begin{aligned}
h_{u,v}(t+1) &= g_u(t)+1,\\
g_u(t+1) &= 0,\\
\ell_u(t+1) &= L.
\end{aligned}
\right.
\]
Hence,
\begin{equation}
\Phi_{u,v}(t+1)-\Phi_{u,v}(t)
=
\frac{1}{L}
-
\frac{h_{u,v}(t)}{L}.
\label{eq:app_linear_video_success_completion}
\end{equation}
Therefore, when \(\ell_u(t)<L\),
\begin{equation}
\Phi_{u,v}(t+1)-\Phi_{u,v}(t)
\leq
1+\frac{\ell_u(t)}{L}
-
\frac{\mu_{u,v}(t)}{L}h_{u,v}(t).
\label{eq:app_linear_video_started_increment}
\end{equation}

Combining the two cases, for every
\(\ell_u(t)\in\{1,\ldots,L\}\),
\begin{equation}
\Phi_{u,v}(t+1)-\Phi_{u,v}(t)
\leq
1+\frac{\ell_u(t)}{L}
-
\frac{\mu_{u,v}(t)}{L}h_{u,v}(t).
\label{eq:app_linear_video_increment}
\end{equation}
Taking the conditional expectation and using \(\mathbb E[\mu_{u,v}(t)\mid S(t)]
=
p_{u,v}
\mathbb E[a_{u,v}(t)\mid S(t)],\)
we obtain
\begin{align}
&\mathbb E\!\left[
\Phi_{u,v}(t+1)-\Phi_{u,v}(t)
\mid S(t)
\right]
\notag\\
&\quad\leq
1+\frac{\ell_u(t)}{L}
-
\frac{p_{u,v}}{L}h_{u,v}(t)
\mathbb E[a_{u,v}(t)\mid S(t)].
\label{eq:app_linear_video_drift}
\end{align}
Substituting
\eqref{eq:app_linear_status_drift} and
\eqref{eq:app_linear_video_drift} into
\eqref{eq:app_linear_drift_expand}, we obtain
\begin{align}
\Delta(S(t))
&\leq
\sum_{u=1}^{N}V_{u,s}
+
\sum_{u=1}^{N}
V_{u,v}
\left(
1+\frac{\ell_u(t)}{L}
\right)
\notag\\
&\quad-
\sum_{u=1}^{N}
V_{u,s}p_{u,s}h_{u,s}(t)
\mathbb E\!\left[
a_{u,s}(t)\mid S(t)
\right]
\notag\\
&\quad-
\sum_{u=1}^{N}
\frac{V_{u,v}p_{u,v}}{L}
h_{u,v}(t)
\mathbb E\!\left[
a_{u,v}(t)\mid S(t)
\right].
\label{eq:app_linear_drift_expanded}
\end{align}
Rearranging the terms in
\eqref{eq:app_linear_drift_expanded} yields
\eqref{eq:MW_drift_bound_linear}--%
\eqref{eq:MW_video_weight_linear}.

\section{Stationary Randomized Policy}
\label{app:randomized-comparison-policy}

This appendix introduces the class of stationary randomized
policies used as comparison policies in the proof of
Theorem~\ref{thm:MW_linear_ratio}.

Define the set of feasible scheduling actions as
\begin{equation}
\mathcal A
\triangleq
\left\{
\bm a\in\{0,1\}^{2N}:
\sum_{u=1}^{N}
\left(
a_{u,s}+a_{u,v}
\right)
\leq K
\right\}.
\label{eq:randomized-action-set}
\end{equation}
A stationary randomized policy independently selects an action
\(\bm a(t)\in\mathcal A\) in every slot according to a fixed
probability distribution over \(\mathcal A\), independently of
the current system state and previous scheduling decisions.

Let \(\theta_{u,s}\) and \(\theta_{u,v}\) denote the marginal
probabilities of scheduling the status update and video frame flows of UAV
\(u\), respectively:
\begin{equation}
\theta_{u,s}
=
\mathbb E[a_{u,s}(t)],
\qquad
\theta_{u,v}
=
\mathbb E[a_{u,v}(t)].
\label{eq:randomized-marginals}
\end{equation}
These marginal probabilities satisfy
\begin{equation}
\sum_{u=1}^{N}
\left(
\theta_{u,s}
+
\theta_{u,v}
\right)
\leq K,
\qquad
0\leq\theta_{u,s},\theta_{u,v}\leq1,
\quad \forall u.
\label{eq:randomized-marginal-constraint}
\end{equation}

Conversely, every vector satisfying
\eqref{eq:randomized-marginal-constraint} belongs to the convex
hull of \(\mathcal A\). Therefore, it can be realized by a
probability distribution over feasible actions in \(\mathcal A\).
For the resulting stationary randomized policy,
\begin{equation}
\mathbb E[a_{u,s}(t)\mid S(t)]
=
\theta_{u,s},
\qquad
\mathbb E[a_{u,v}(t)\mid S(t)]
=
\theta_{u,v},
\label{eq:randomized-conditional-actions}
\end{equation}
for every state \(S(t)\).

Let
\(\{q_{u,s}^{\mathrm{LB}},q_{u,v}^{\mathrm{LB}}\}_{u=1}^{N}\)
be an optimal solution of the lower-bound problem
\eqref{eq:LB}. Define
\begin{equation}
\theta_{u,s}^{\mathrm{LB}}
=
\frac{q_{u,s}^{\mathrm{LB}}}{p_{u,s}},
\qquad
\theta_{u,v}^{\mathrm{LB}}
=
\frac{Lq_{u,v}^{\mathrm{LB}}}{p_{u,v}}.
\label{eq:comparison-policy-rates}
\end{equation}
The rate constraints in the lower-bound problem imply
\begin{equation}
0
<
\theta_{u,s}^{\mathrm{LB}}
\leq1,
\qquad
0
<
\theta_{u,v}^{\mathrm{LB}}
\leq1,
\quad \forall u.
\label{eq:comparison-policy-individual-feasibility}
\end{equation}
and
\begin{align}
\sum_{u=1}^{N}
\left(
\theta_{u,s}^{\mathrm{LB}}
+
\theta_{u,v}^{\mathrm{LB}}
\right)
&=
\sum_{u=1}^{N}
\left(
\frac{q_{u,s}^{\mathrm{LB}}}{p_{u,s}}
+
\frac{Lq_{u,v}^{\mathrm{LB}}}{p_{u,v}}
\right)
\leq K.
\label{eq:comparison-policy-feasibility}
\end{align}
Hence, the lower-bound solution induces a feasible stationary
randomized comparison policy with marginal scheduling
probabilities
\(\{\theta_{u,s}^{\mathrm{LB}},
\theta_{u,v}^{\mathrm{LB}}\}_{u=1}^{N}\).

\section{Proof of Theorem~\ref{thm:MW_linear_ratio}}
\label{app:MW_linear_ratio_proof}

Let
\(\{q_{u,s}^{\mathrm{LB}},q_{u,v}^{\mathrm{LB}}\}_{u=1}^{N}\)
be an optimal solution of the lower-bound problem
\eqref{eq:LB}. As shown in
Appendix~\ref{app:randomized-comparison-policy}, this solution
induces a feasible stationary randomized comparison policy with
marginal scheduling probabilities
\begin{equation}
\theta_{u,s}^{\mathrm{LB}}
=
\frac{q_{u,s}^{\mathrm{LB}}}{p_{u,s}},
\qquad
\theta_{u,v}^{\mathrm{LB}}
=
\frac{Lq_{u,v}^{\mathrm{LB}}}{p_{u,v}}.
\label{eq:comparison_policy_rates_ratio}
\end{equation}

From the one-slot drift upper bound
\eqref{eq:MW_drift_bound_linear}, under the Max-Weight policy,
\begin{align}
\Delta^{\mathrm{MW}}(S(t))
&\leq
B(t)
-
\sum_{u=1}^{N}
W_{u,s}(t)
\mathbb E^{\mathrm{MW}}\!\left[
a_{u,s}(t)\mid S(t)
\right]
\notag\\
&\quad-
\sum_{u=1}^{N}
W_{u,v}(t)
\mathbb E^{\mathrm{MW}}\!\left[
a_{u,v}(t)\mid S(t)
\right].
\label{eq:MW_ratio_drift_start}
\end{align}
Since the Max-Weight policy maximizes the total scheduling
weight in every slot, for every state \(S(t)\),
\begin{align}
&
\sum_{u=1}^{N}
W_{u,s}(t)
\mathbb E^{\mathrm{MW}}\!\left[
a_{u,s}(t)\mid S(t)
\right]
\notag\\
&\quad+
\sum_{u=1}^{N}
W_{u,v}(t)
\mathbb E^{\mathrm{MW}}\!\left[
a_{u,v}(t)\mid S(t)
\right]
\notag\\
&\geq
\sum_{u=1}^{N}
W_{u,s}(t)\theta_{u,s}^{\mathrm{LB}}
+
\sum_{u=1}^{N}
W_{u,v}(t)\theta_{u,v}^{\mathrm{LB}}.
\label{eq:MW_dominates_randomized}
\end{align}
Substituting \eqref{eq:MW_dominates_randomized} into
\eqref{eq:MW_ratio_drift_start} gives
\begin{align}
\Delta^{\mathrm{MW}}(S(t))
&\leq
B(t)
-
\sum_{u=1}^{N}
W_{u,s}(t)\theta_{u,s}^{\mathrm{LB}}
-
\sum_{u=1}^{N}
W_{u,v}(t)\theta_{u,v}^{\mathrm{LB}}.
\label{eq:MW_ratio_compare_drift}
\end{align}

Using the Max-Weight weights in
\eqref{eq:MW_status_weight_linear} and
\eqref{eq:MW_video_weight_linear}, together with
\eqref{eq:comparison_policy_rates_ratio}, we obtain
\begin{align}
W_{u,s}(t)\theta_{u,s}^{\mathrm{LB}}
&=
V_{u,s}p_{u,s}h_{u,s}(t)
\frac{q_{u,s}^{\mathrm{LB}}}{p_{u,s}}
=
V_{u,s}q_{u,s}^{\mathrm{LB}}h_{u,s}(t),
\label{eq:status_weight_compare}
\\
W_{u,v}(t)\theta_{u,v}^{\mathrm{LB}}
&=
\frac{V_{u,v}p_{u,v}}{L}h_{u,v}(t)
\frac{Lq_{u,v}^{\mathrm{LB}}}{p_{u,v}}
=
V_{u,v}q_{u,v}^{\mathrm{LB}}h_{u,v}(t).
\label{eq:video_weight_compare}
\end{align}
Therefore,
\begin{align}
\Delta^{\mathrm{MW}}(S(t))
&\leq
B(t)
-
\sum_{u=1}^{N}
V_{u,s}q_{u,s}^{\mathrm{LB}}h_{u,s}(t)
\notag\\
&\quad-
\sum_{u=1}^{N}
V_{u,v}q_{u,v}^{\mathrm{LB}}h_{u,v}(t).
\label{eq:MW_ratio_drift_rate}
\end{align}
Choose
\(V_{u,s}
=
\alpha_{u,s}/q_{u,s}^{\mathrm{LB}}\)
and
\(V_{u,v}
=
\alpha_{u,v}/q_{u,v}^{\mathrm{LB}}\).
Then
\(V_{u,s}q_{u,s}^{\mathrm{LB}}=\alpha_{u,s}\)
and
\(V_{u,v}q_{u,v}^{\mathrm{LB}}=\alpha_{u,v}\).
Hence,
\begin{equation}
\Delta^{\mathrm{MW}}(S(t))
\leq
B(t)
-
\sum_{u=1}^{N}
\left(
\alpha_{u,s}h_{u,s}(t)
+
\alpha_{u,v}h_{u,v}(t)
\right).
\label{eq:MW_ratio_drift_risk}
\end{equation}
From \eqref{eq:MW_B_term_linear} and
\(1\leq\ell_u(t)\leq L\),
\begin{align}
B(t)
&=
\sum_{u=1}^{N}V_{u,s}
+
\sum_{u=1}^{N}
V_{u,v}
\left(
1+\frac{\ell_u(t)}{L}
\right)
\notag\\
&\leq
\sum_{u=1}^{N}V_{u,s}
+
2\sum_{u=1}^{N}V_{u,v}
=
\sum_{u=1}^{N}
\frac{\alpha_{u,s}}{q_{u,s}^{\mathrm{LB}}}
+
2\sum_{u=1}^{N}
\frac{\alpha_{u,v}}{q_{u,v}^{\mathrm{LB}}}.
\label{eq:MW_B_constant}
\end{align}
Substituting \eqref{eq:MW_B_constant} into
\eqref{eq:MW_ratio_drift_risk} gives
\begin{align}
&\Delta^{\mathrm{MW}}(S(t))
+
\sum_{u=1}^{N}
\left(
\alpha_{u,s}h_{u,s}(t)
+
\alpha_{u,v}h_{u,v}(t)
\right)
\notag\\
&\leq
\sum_{u=1}^{N}
\frac{\alpha_{u,s}}{q_{u,s}^{\mathrm{LB}}}
+
2\sum_{u=1}^{N}
\frac{\alpha_{u,v}}{q_{u,v}^{\mathrm{LB}}}.
\label{eq:MW_drift_plus_risk}
\end{align}
Following \eqref{eq:surrogate_system_risk}, let
\(\widehat J^{\mathrm{MW}}\) denote the long-term expected
surrogate system risk achieved by the Max-Weight policy.
Summing \eqref{eq:MW_drift_plus_risk} over
\(t=1,\ldots,T\), taking expectations, and telescoping the
Lyapunov terms yields
\begin{align}
&
\sum_{t=1}^{T}
\sum_{u=1}^{N}
\mathbb E^{\mathrm{MW}}\!\left[
\alpha_{u,s}h_{u,s}(t)
+
\alpha_{u,v}h_{u,v}(t)
\right]
\notag\\
&\leq
T
\left(
\sum_{u=1}^{N}
\frac{\alpha_{u,s}}{q_{u,s}^{\mathrm{LB}}}
+
2\sum_{u=1}^{N}
\frac{\alpha_{u,v}}{q_{u,v}^{\mathrm{LB}}}
\right)
\notag\\
&\quad+
\mathbb E^{\mathrm{MW}}[L(S(1))]
-
\mathbb E^{\mathrm{MW}}[L(S(T+1))].
\label{eq:MW_telescoping}
\end{align}
Since \(L(S(T+1))\geq0\) and
\(\mathbb E^{\mathrm{MW}}[L(S(1))]<\infty\), dividing by
\(T\) and taking the limit superior as \(T\to\infty\) gives
\begin{equation}
\widehat J^{\mathrm{MW}}
\leq
\sum_{u=1}^{N}
\frac{\alpha_{u,s}}{q_{u,s}^{\mathrm{LB}}}
+
2\sum_{u=1}^{N}
\frac{\alpha_{u,v}}{q_{u,v}^{\mathrm{LB}}}.
\label{eq:MW_risk_upper_final}
\end{equation}
It remains to compare this upper bound with \(L_B^\star\).
From the lower-bound problem,
\begin{align}
L_B^\star
&=
\frac{1}{2}
\sum_{u=1}^{N}
\left(
\frac{\alpha_{u,s}}{q_{u,s}^{\mathrm{LB}}}
+
\frac{\alpha_{u,v}}{q_{u,v}^{\mathrm{LB}}}
\right)
\notag\\
&\quad+
\frac{1}{2}
\sum_{u=1}^{N}
\left(
\alpha_{u,s}
+
(2L-1)\alpha_{u,v}
\right).
\label{eq:LB_exact_in_MW_proof}
\end{align}
Therefore,
\begin{align}
&
\sum_{u=1}^{N}
\frac{\alpha_{u,s}}{q_{u,s}^{\mathrm{LB}}}
+
2\sum_{u=1}^{N}
\frac{\alpha_{u,v}}{q_{u,v}^{\mathrm{LB}}}
\notag\\
&=
L_B^\star
+
\frac{1}{2}
\sum_{u=1}^{N}
\frac{\alpha_{u,s}}{q_{u,s}^{\mathrm{LB}}}
+
\frac{3}{2}
\sum_{u=1}^{N}
\frac{\alpha_{u,v}}{q_{u,v}^{\mathrm{LB}}}
\notag\\
&\quad-
\frac{1}{2}
\sum_{u=1}^{N}
\left(
\alpha_{u,s}
+
(2L-1)\alpha_{u,v}
\right)
\notag\\
&\leq
L_B^\star
+
\frac{1}{2}
\sum_{u=1}^{N}
\frac{\alpha_{u,s}}{q_{u,s}^{\mathrm{LB}}}
+
\frac{3}{2}
\sum_{u=1}^{N}
\frac{\alpha_{u,v}}{q_{u,v}^{\mathrm{LB}}},
\label{eq:MW_upper_rewrite}
\end{align}
where the inequality follows from
\(\alpha_{u,s}\geq0\), \(\alpha_{u,v}\geq0\), and \(L\geq1\).
Combining \eqref{eq:MW_risk_upper_final} and
\eqref{eq:MW_upper_rewrite}, we obtain
\begin{equation}
\widehat J^{\mathrm{MW}}
\leq
L_B^\star
+
\frac{1}{2}
\sum_{u=1}^{N}
\frac{\alpha_{u,s}}{q_{u,s}^{\mathrm{LB}}}
+
\frac{3}{2}
\sum_{u=1}^{N}
\frac{\alpha_{u,v}}{q_{u,v}^{\mathrm{LB}}},
\label{eq:MW_linear_upper_bound_proof}
\end{equation}
which establishes \eqref{eq:MW_linear_upper_bound}.

Finally, by Theorem~\ref{thm:lower_bound},
\(L_B^\star\leq\widehat J^\star\). Therefore,
\begin{align}
\psi^{\mathrm{MW}}
=
\frac{\widehat J^{\mathrm{MW}}}{\widehat J^\star}
&\leq
\frac{
\displaystyle
L_B^\star
+
\frac{1}{2}
\sum_{u=1}^{N}
\frac{\alpha_{u,s}}{q_{u,s}^{\mathrm{LB}}}
+
\frac{3}{2}
\sum_{u=1}^{N}
\frac{\alpha_{u,v}}{q_{u,v}^{\mathrm{LB}}}
}{
\widehat J^\star
}
\notag\\
&\leq
\frac{
\displaystyle
L_B^\star
+
\frac{1}{2}
\sum_{u=1}^{N}
\frac{\alpha_{u,s}}{q_{u,s}^{\mathrm{LB}}}
+
\frac{3}{2}
\sum_{u=1}^{N}
\frac{\alpha_{u,v}}{q_{u,v}^{\mathrm{LB}}}
}{
L_B^\star
}
\notag\\
&=
1+
\frac{
\displaystyle
\frac{1}{2}
\sum_{u=1}^{N}
\frac{\alpha_{u,s}}{q_{u,s}^{\mathrm{LB}}}
+
\frac{3}{2}
\sum_{u=1}^{N}
\frac{\alpha_{u,v}}{q_{u,v}^{\mathrm{LB}}}
}{
L_B^\star
}.
\label{eq:MW_ratio_proof}
\end{align}
This establishes \eqref{eq:MW_linear_ratio_bound} and completes
the proof.

\section{Real-World Data for the Evaluation Scenarios}
\label{app:scenarios-data}

\begin{figure}[b]
  \centering
  \includegraphics[width=\columnwidth]
  {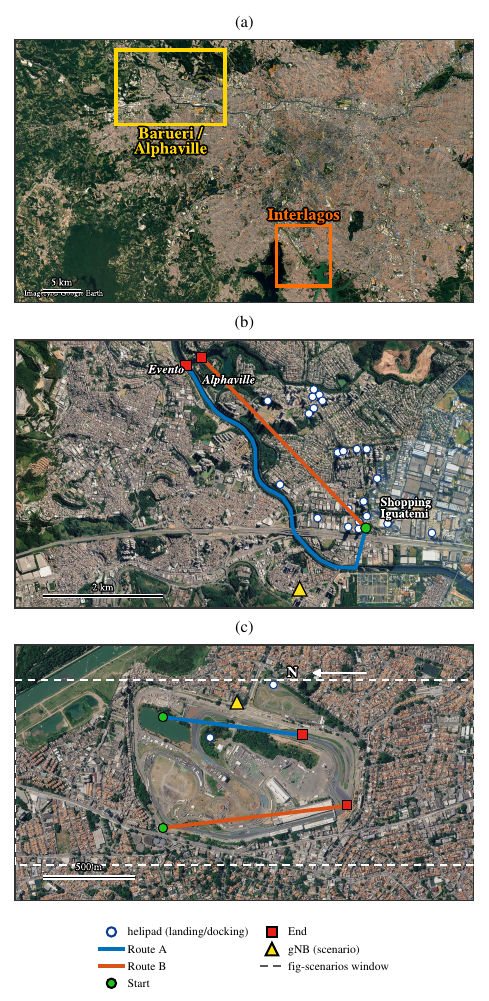}
    \caption{Public data used to ground the evaluation scenarios.
(a) S\~ao Paulo metropolitan region and the two study areas.
(b) Barueri/Alphaville delivery corridors, with licensed
5G NR stations from one operator and registered helipads.
(c) Interlagos circuit and the map region used in
Fig.~\ref{fig:scenarios}.}
  \label{fig:real_scenarios}
\end{figure}

The scenarios of Fig.~\ref{fig:scenarios} are controlled models
constructed over real geographies, rather than replays of measured
flights or cellular traces. Figure~\ref{fig:real_scenarios} shows
the public infrastructure and operational data used to select the
study areas and constrain their geometry. 

We obtain the coordinates and technology of licensed base
stations from Anatel's public Mosaico/Spectrum-E database
\cite{anatel_mosaico_stations}. Because the database contains
separate records for the emissions and sectors associated with
the same station, we deduplicate entries by station number.
Figure~\ref{fig:real_scenarios} shows only licensed 5G NR
stations from a single operator, TIM. We also obtain registered
helipad locations from ANAC. These data establish the presence
of cellular and aviation infrastructure around both study
areas; they are not used as measurements of aerial coverage.
Radio conditions are generated by the channel model described
in \S~\ref{sec:eval-methodology}. To identify a suitable experimental gNB for the single-cell scheduling study, we evaluated all licensed 5G NR stations in the region and selected the tower at the highest elevation.

\noindent\textbf{Delivery corridors.}
Since June 2026, Speedbird Aero and iFood have operated an
ANAC/DECEA-authorized BVLOS delivery service from Iguatemi
Alphaville to nearby residential complexes. The reported operation
uses a 3.6-km aerial leg, flown in approximately five minutes at up
to 60\,m AGL and 50\,km/h, carrying payloads of up to 5\,kg
\cite{mundogeo26speedbird,ifood26drone}. Our simulated corridors
use 4.0--4.9-km trajectories at 60\,m and 15\,m/s, preserving the
distance, altitude, and speed range of the reported operation.

The commercial mall-to-droneport operation motivates the endpoint
structure of the delivery scenario. We then construct two routes
over the same region with different combinations of radio quality
and population exposure. This controlled contrast lets us isolate
whether a scheduler follows the channel conditions or the
operational consequence of delaying each UAV.

\noindent\textbf{Crowded-event scenario.}
During the Formula~1 S\~ao Paulo Grand Prix, DECEA activates
\emph{Opera\c{c}\~ao Lagos} to coordinate aerial activity around
Interlagos. In 2025, the operation handled 545 movements over three
days, including 427 helicopter movements and 118 authorized drone
flights \cite{mundogeo25lagos,decea24lagos}. This observation
motivates an evaluation scenario in which UAV traffic and ground
exposure are concentrated within a short operating window.

Our simulated trajectories follow the 4.3-km Interlagos circuit,
while spatially nonuniform spectator regions create different
consequences for stale information along the route. The simulation
does not replay the reported flights or claim to reproduce the
event's traffic management procedures. Instead, it uses the
documented coexistence of dense crowds and aerial activity to
construct a controlled scheduling workload.

Together, these data ground the scenarios in real locations and
documented operating ranges, while the controlled channel,
population, and traffic models isolate the scheduling behavior
studied in this paper.

\section{Evaluation Across the \(K\times N\) Grid}
\label{app:deep-ra}

\begin{figure*}[t]
    \centering
    \includegraphics[width=\textwidth]
    {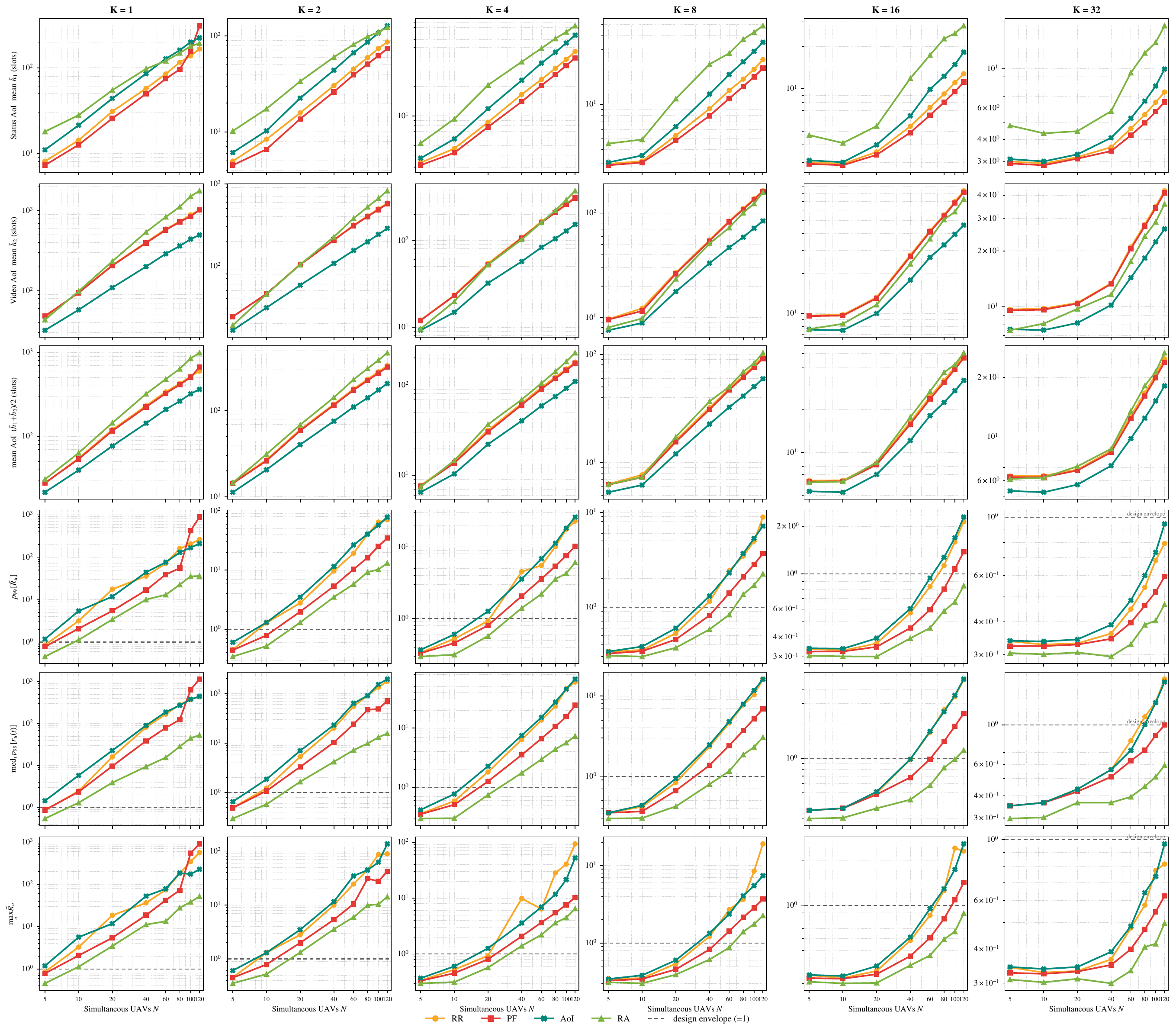}
    \caption{Complete evaluation across all 48 operating points.
    Columns represent the available uplink channels
    \(K\in\{1,2,4,8,16,32\}\), and each panel varies the number
    of concurrent UAVs \(N\in\{5,\ldots,120\}\). The first two
    rows report fleet-mean status and video AoI, and the third
    row their combined mean. The remaining rows report the 99th
    percentile of the per-UAV risk, the median instantaneous
    tail, and the worst-UAV risk. The dashed line marks the
    operational envelope \(R=1\).}
    \label{fig:deep-ra}
  \end{figure*}

  Figure~\ref{fig:deep-ra} reports the individual operating points
  underlying the load-regime aggregates of
  Fig.~\ref{fig:risk-cost}. For each policy, the three risk rows show
  \[
  p_{99,u}[R_u],\quad
  \operatorname{median}_{t}\!\left[p_{99,u}[R_u(t)]\right],
  \quad\text{and}\quad
  \max_u R_u .
  \]
  These views distinguish persistent per-UAV violations of the
  operational envelope from sustained instantaneous tails.

  \noindent\textbf{Only RA remains near the envelope under contention.}
  At \(K=8\) and \(N=60\), every baseline breaches all three risk
  views: the worst UAV reaches \(1.4\times\) the boundary for PF,
  \(2.4\times\) for AoI, and \(2.7\times\) for RR, with per-UAV
  99th percentiles of \(1.40\), \(2.30\), and \(2.42\). RA produces risk values of \(0.83\), \(1.16\), and \(0.87\) across the three rows: it stays inside the envelope on both per-UAV views and within \(1.2\times\) on the sustained instantaneous tail, at an operating point where no baseline satisfies any of the three.

  \noindent\textbf{The advantage persists under a stricter envelope.}
  Using \(\max_uR_u\leq1\) as the admission criterion, RA
  supports \(N=60\) at \(K=8\) and the entire evaluated range up to
  \(N=120\) at \(K=16\). PF supports \(N=40\) and \(N=80\) at the
  same channel budgets, while RR and AoI support \(N=20\) and
  \(N=60\). This worst-UAV frontier coincides with the
  \(p_{99}\)-based frontier at every \(K\): for RA, the fleet-level
  guarantee comes at no additional admission cost. At \(K=32\),
  all policies remain within the envelope throughout the range.

  \noindent\textbf{Risk reduction does not require uniformly lower AoI.}
    At \(N=120\), RA's status AoI ranges from \(0.9\times\) that of the
        AoI-optimal policy at \(K\leq2\) to \(1.8\times\) at \(K=32\), while
        its video AoI ranges from \(1.4\times\) at \(K=32\) to \(3.6\times\)
        at \(K=1\): under heavy contention RA spends its transmission
        opportunities on status updates, whose staleness consumes the risk
        margin fastest. RA still improves video freshness over the
        network-centric baselines under moderate contention: at \(K=8\) and
        \(N=60\), its mean video AoI is \(72\) slots, compared with \(83\)
        for PF and \(84\) for RR. RA therefore changes where the
  available freshness is placed, rather than attempting to
  minimize every age equally.

\noindent\textbf{Scheduling remains bounded by capacity.}
  As the channel budget decreases, every policy eventually exceeds
  the operational envelope: for \(K\leq4\), none keeps the worst
  UAV inside it beyond \(N=20\). At \(K=1\) and \(N=120\), even RA
  cannot compensate for the service deficit, and the policies exceed the worst-UAV boundary by \(52\times\) (RA) to
        over \(900\times\) (PF). Conversely, at \(N=5\), the policies converge because
  contention is limited; only the AoI baseline still breaches the
  envelope there (\(1.18\) at \(K=1\)). Scheduling can reshape the
  risk tail only within the feasibility region established by the
  available radio capacity.

\section{SAIL Assignment and Allocation Audit}
\label{app:sail-allocation}

This appendix supports Section~\ref{sec:eval-sora} by documenting
the SAIL assignment, audit metrics, and controlled experiment.
SAIL is assigned offline and is never exposed to any scheduler.

\noindent\textbf{SAIL assignment.}
SORA combines the final Ground Risk Class (GRC) and residual Air
Risk Class (ARC) to determine SAIL
\cite{sora2024,jarus_sora25_annexf}.
Ground-risk classification uses the maximum population density
within the intrinsic GRC footprint, comprising the operational
volume and ground-risk buffer. Thus, crossing a crowd affects
the classification of the operation, including its low-exposure
segments.

For the controlled audit, we use the aircraft column corresponding
to a maximum characteristic dimension of \(3\)~m and maximum speed
of \(35\)~m/s, assume residual ARC-b, and claim no ground-risk
mitigation credit. The assumed GRCs are \(3\) for the low-exposure
operation and \(6\) for the crowd-exposed operation, yielding
SAIL-II and SAIL-V, respectively. These labels remain fixed during
each flight. The per-slot partition in Section~\ref{sec:eval-sora}
therefore groups the active UAVs by their operation labels.

RA uses the population exposure within its staleness-dependent
uncertainty region, as defined in~\eqref{eq:ground-risk}.
The audit tests whether the resulting allocation aligns with
SORA's operation-level classification. This is a consistency check
between two uses of operational exposure; it does not establish
regulatory compliance.

\noindent\textbf{Audit metrics.}
We retain the full-occupancy window, excluding fleet fill and drain.
The allocation metric is \(\Gamma_{\mathrm{sched}}\), defined in
Section~\ref{sec:eval-sora}: each UAV's scheduling probability is
computed over its observed slots in the corresponding class, and
the ratio compares the class means. A value of one denotes equal
per-UAV scheduling probability.

At the decision level, \(P_{\mathrm{align}}\) is the fraction of
matched comparisons in which the SAIL-V UAV is served. Candidates
must belong to different classes, both be in coverage, differ by
at most one slot in status AoI and \(3\)~dB in SNR, and have exactly
one UAV scheduled. The reference \(P_{\mathrm{align}}=0.5\)
denotes no preference within these comparisons. We exclude
operating points with fewer than \(50\) matched comparisons.
Matching controls status freshness and SNR within the stated
tolerances; it does not equate every scheduler state. We interpret
this metric as evidence of allocation alignment under these
controls. Each retained operating point contributes one sample
to the corresponding panel; medians summarize operating points.

\noindent\textbf{Controlled scenario.}
\label{app:sail-controlled}
A single gNB serves an low-exposure corridor and a
crowd-exposed corridor. Each is flown in both directions,
with departures equally divided among the four routes. Links
remain in coverage, with SNRs between \(12\) and \(27\)~dB.
The crowd-exposed corridor contains \(24\) hotspots at eight
longitudinal positions, each with a central cluster and two
lateral clusters offset by \(150\)~m. The hotspots occupy the
central \(3.1\)~km, leaving \(600\)~m and \(300\)~m clear at the
two endpoints. Their peak density is \(\rho_{\max}\).

The experiment uses the main evaluation's traffic, delivery, and
policy parameters. Corridor lengths and channel histories differ;
the decision audit therefore compares candidates satisfying the
matching conditions above. We exclude \(N<20\), where too few
contended decisions are available.

\begin{figure}[t]
  \centering
  \includegraphics[width=\columnwidth]
    {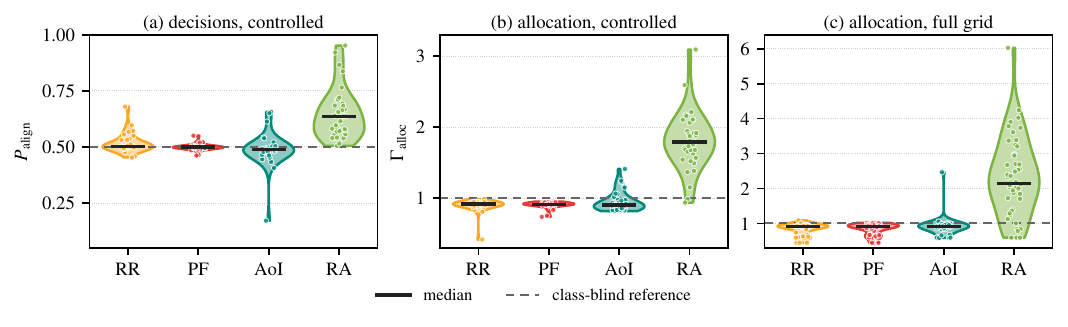}
  \caption{Allocation audit. Controlled scenario:
  (a)~decision alignment \(P_{\mathrm{align}}\) and
  (b)~allocation ratio \(\Gamma_{\mathrm{sched}}\).
  (c)~Allocation across the \(48\) evaluation-grid operating points,
  including the six with \(K\geq N\). Points represent operating
  points; black bars indicate medians.}
  \label{fig:sail-composite}
\end{figure}

\noindent\textbf{Results.}
\label{app:sail-results}
In Fig.~\ref{fig:sail-composite}(a), RA serves the SAIL-V candidate
in \(64\%\) of matched comparisons on median, reaching \(95\%\)
at the most contended points. RR, PF, and AoI yield medians of
\(50\%\), \(50\%\), and \(49\%\), respectively. Thus, RA's
preference persists among candidates with comparable status AoI
and SNR.

The allocation metric shows the same direction (panel~b).
RA provides a median \(1.8\times\), and up to \(3.1\times\),
the per-UAV scheduling probability to SAIL-V operations.
Baseline ratios lie between \(0.90\) and \(0.91\); we treat this
departure from one as scenario-induced variation.

Panel~c extends the allocation audit to the evaluation grid
reported in Section~\ref{sec:eval-sora}. Across the \(42\) points with \(N>K\), aggregated across both
        evaluation scenarios, RA's median ratio is \(2.37\), compared with
        \(0.92\) for each baseline.
The six points with \(K\geq N\) are included here but excluded
from Fig.~\ref{fig:sail-gamma-violin}, since allocation differences
primarily reflect traffic availability in that regime.
Together, these results show that RA directs more service toward
higher-SAIL operations without receiving their labels, consistent
with the allocation behavior discussed in the main text.

\section{Sensitivity to Video-Frame Size \(L\)}
\label{app:l-sweep}

The analysis in Sections~\ref{sec:stale-to-risk}
and~\ref{sec:policy} models video frames with a fixed service
requirement \(L\). Our main evaluation instead uses a realistic 
workload \cite{aviator2021}: one 30-packet I-frame followed by fourteen
3-packet P-frames, yielding a mean frame size of \(4.8\) packets .
To isolate the effect of frame size, we repeat the complete
\(K\times N\) evaluation with constant
\(L\in\{5,10,20,30\}\), keeping all other parameters unchanged.

\noindent\textbf{Group of Pictures (GOP) versus constant-size frames.}
Constant \(L=5\) closely matches the GOP mean. Relative to the
GOP workload, only 19 of the 192 policy--operating-point pairs
change sides of the risk envelope, and none changes the policy
ordering. Removing the I-frame bursts lowers median
\(p_{99,u}[R_u]\) by \(29\%\) for RR and AoI, \(22\%\)
for PF, and \(12\%\) for RA. Thus, GOP burstiness makes the
workload more demanding, but does not determine RA's advantage.

\begin{figure}[t]
  \centering
  \includegraphics[width=\columnwidth]
  {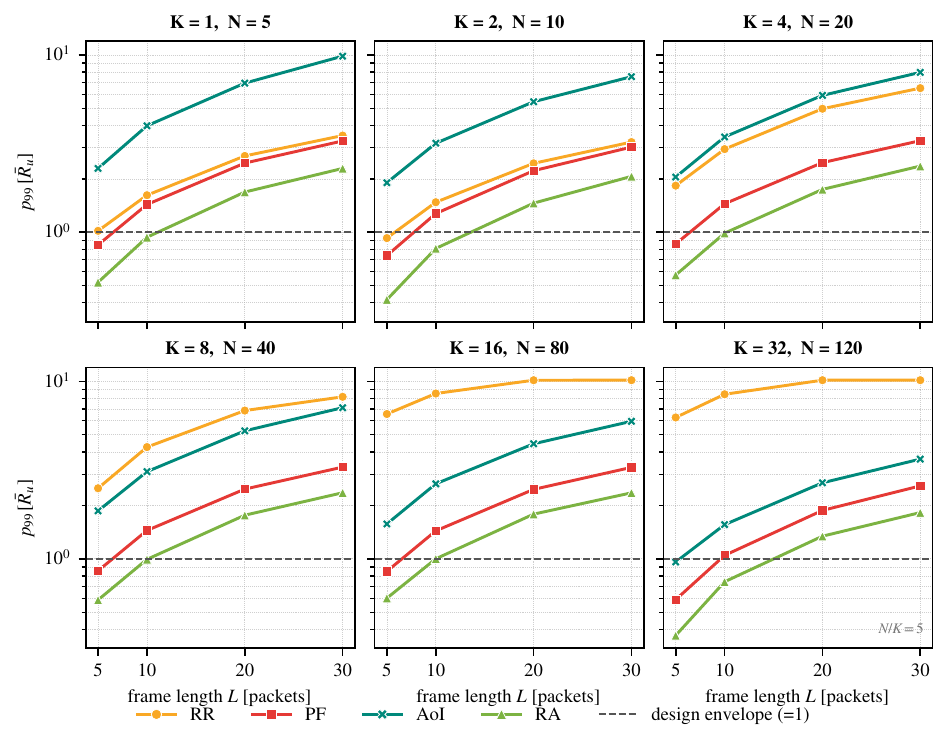}
  \caption{Tail risk as the constant video-frame size \(L\)
  increases at the design-load operating points \(N/K=5\).
  Larger frames increase risk for every policy, but the ranking
  remains unchanged. RA is the last policy to leave the
  operational envelope.}
  \label{fig:lseries-dose}
\end{figure}

\noindent\textbf{Sensitivity to \(L\).}
Figure~\ref{fig:lseries-dose} shows that increasing \(L\) from
5 to 30 packets raises median tail risk by \(2.6\times\) for RA,
\(2.9\times\) for PF, \(3.6\times\) for AoI, and \(1.9\times\)
for RR. RR's smaller relative increase reflects that it is
already highly exposed at small \(L\), rather than greater
resilience. RA remains inside the envelope at \(L=5\), with
\(p_{99,u}[R_u]\) between \(0.37\) and \(0.60\), and
reaches the boundary at \(L=10\), with values between \(0.74\)
and \(1.00\). PF crosses the boundary between \(L=5\) and
\(L=10\), while RA crosses only for \(L>10\).

Larger updates increase risk for every scheduler because they
occupy more transmission opportunities. However, the policy
ordering remains unchanged over a \(6\times\) increase in \(L\):
RA consistently retains more operational margin and is the last
policy to leave the envelope.

\section{From Radio to Air Capacity}
\label{app:operational-interface}

Communication-constrained operational capacity provides a natural interface between the cellular network and airspace traffic management. When projected simultaneous demand approaches the capacity supported by a region, traffic management mechanisms can reshape that demand before communication becomes the binding constraint. Possible actions include delaying departures, adjusting speeds, rerouting selected operations, or deferring additional aircraft until sufficient communication margin becomes available. The network does not authorize an operation; rather, it exposes when expected traffic exceeds the capacity that available communication resources can support within the declared envelope.

The same interface can also inform infrastructure planning. Expansion, improved aerial coverage, additional sectors, or new base stations can be interpreted through both conventional throughput gains and the additional low-altitude operations they enable within the same communication-risk constraint. Regions that repeatedly approach this boundary therefore identify where network upgrades or operational traffic shaping may provide the greatest benefit.

This interpretation is intentionally limited to communication-induced exposure. Within this scope, RA complements the broader aviation safety case, which also encompasses SORA, airspace constraints, weather assessment, vehicle reliability, and other operational factors. RA expresses communication capacity in a form that traffic-management and safety-assessment processes can consume. Extending this interface to multi-cell systems, handover regions, and city-scale traffic management remains future work.

\section{Simulation Toolbox Parameters}
\label{app:toolbox}

We implement the slot-level simulator in MATLAB using UAV Toolbox~\cite{toolboxuav} for mobility and 5G Toolbox~\cite{toolbox5g} for radio propagation.

\noindent\textbf{Mobility.}
Each UAV follows a \texttt{waypointTrajectory} along a
straight corridor, climbing from a $15$\,m vertiport to $60$\,m over
$500$\,m and descending symmetrically. UAVs travel at $15$\,m/s in
the evaluation and $15$\,m/s in Fig.~\ref{fig:scenarios}. Departures are
uniformly spaced, while traffic-generation phases are randomized.
For scalability, the evaluation uses an equivalent closed-form
trajectory engine, cross-checked against UAV Toolbox slot by slot.

\noindent\textbf{Channel.}
We use the 3GPP TR~38.901 UMa line-of-sight model
(\texttt{nrPathLoss}), with a $3.5$\,GHz carrier,
$20$\,MHz bandwidth, $30$\,m gNB, $23$\,dBm UAV transmit power,
$15$\,dBi receiver gain, and $7$\,dB noise figure. The evaluation
uses deterministic path loss and one serving gNB. The motivating
scenario additionally applies $4$\,dB log-normal fading in blocks of
$50$ slots, using the same realization across policies.

\noindent\textbf{Delivery.}
Each of the $K$ channels carries one packet per slot. A scheduled
packet succeeds independently with probability $0.8$ when
$\mathrm{SNR}\geq5$\,dB and is lost otherwise. The Shannon rate is used
only by PF. We add a fixed $10$\,ms one-way radio, core, and
processing delay.

\end{document}